%
%
%
%
%
\RequirePackage{fix-cm}
\documentclass[smallextended]{svjour3}       
\smartqed  
\usepackage{graphicx}
\usepackage{natbib}

\usepackage{fancyhdr}
\usepackage{amssymb}
\usepackage{amsmath}
\usepackage[font=scriptsize,labelfont=bf]{caption}
\usepackage{sidecap}
\usepackage{booktabs}
\usepackage[latin1]{inputenc}
\usepackage{float}
\usepackage{subfig}
\usepackage{url}
\usepackage{float}
\usepackage[figuresright]{rotating}
\usepackage{enumerate}

%
\newtheorem{thm}{Theorem}[section]
\newtheorem{prop}[thm]{Proposition}
\newtheorem{lem}[thm]{Lemma}

\numberwithin{equation}{section}
%
\begin{document}

\title{Modeling of Contact Tracing in Epidemic Populations Structured by Disease Age
}


\author{Xi Huo}


\institute{Xi Huo \at
              Department of Mathematics, Vanderbilt University, Nashville, TN 37240, USA \\
              \email{xi.huo@vanderbilt.edu} }

\date{Received: date / Accepted: date}

\maketitle

\begin{abstract}
We consider an age-structured epidemic model with two basic public health interventions: (i) identifying and isolating symptomatic cases, and (ii) tracing and quarantine of the contacts of identified infectives. The dynamics of the infected population are modeled by a nonlinear infection-age-dependent partial differential equation, which is coupled with an ordinary differential equation that describes the dynamics of the susceptible population. Theoretical results about global existence and uniqueness of positive solutions are proved. We also present two practical applications of our model: (1) we assess public health guidelines about emergency preparedness and response in the event of a smallpox bioterrorist attack; (2) we simulate the 2003 SARS outbreak in Taiwan and estimate the number of cases avoided by contact tracing. Our model can be applied as a rational basis for decision makers to guide interventions and deploy public health resources in future epidemics. 
\keywords{age since infection \and epidemic disease \and quarantine \and SARS \and smallpox}
\subclass{$\mathrm{92B05}$ \and $\mathrm{35Q92}$ \and $\mathrm{37N25}$}
\end{abstract}

\section{Introduction}
\label{intro}
Our aim is to develop a model to assess the effectiveness of two public health interventions in controlling epidemic outbreaks: (i) identifying and isolating symptomatic cases, and (ii) tracing of their contacts, followed by isolation, quarantine, or vaccination. Our model is applicable to general epidemics for which quarantine or vaccination are available as control measures. In many cases, there are two alternatives for such controls, namely targeted control or mass control. Isolation of symptomatic cases is important in controlling infectious diseases, but also important may be the vaccination and quarantine of traced contacts of known infectives. Contact tracing is especially important when there is a lack of rapid diagnostic methods, as was in the case of SARS (\citealt{Glasser}). \\

Ordinary differential equation models were used in modeling SARS, and in particular to investigate the impact of quarantining asymptomatic infectives (\citealt{Hethcote, WangRuan, Hsieh3, Brauer, Nishiura, Fraser, Day, Hsieh, Pauline, Feng3, Feng, Feng2}). A thorough review of many of these works has been provided in (\citealt{Galvani}). The article points out that the nonlinearity of the rate of quarantining undiagnosed cases is required to be taken into account. In this paper, we use a partial differential equation model with a variable of disease age (or age since infection), with nonlinear rates of contact tracing infectives and quarantining susceptibles dependent on the rate of identifying symptomatic cases.\\

In the smallpox application, our model applies to the ring vaccination strategy which is believed to be the reason for smallpox eradication. Various methods have been developed to evaluate public health control strategies for smallpox, such as stochastic models (\citealt{Muller, Meltzer, Halloran, Eichner, Kretzschmar, Vidondo}), and ordinary differential equation models (\citealt{Kaplan, KaplanPNAS, CCC}). \\

Although some of the previous work include the varying levels of transmission ability and symptom scores in different disease stages, there is less work about smallpox control that takes continuous disease age into consideration. Webb \emph{et~al.}\ apply age structured epidemic models to investigate isolation strategy and school closings in the spread of H1N1 (\citealt{webb2010}). Inaba \emph{et~al.}\ develop a series of multistate class age structured epidemic systems with isolation rate as the only intervention (\citealt{Inaba}). Fraser \emph{et~al.}\ establish an infection age-structured model that estimates the effectiveness of isolation and contact tracing in the control of epidemic diseases with a formulation different from ours (\citealt{Fraser}). Our model is aimed to take into consideration several key features about disease transmission and public health interventions at the same time: (i) continuous infection age; (ii) infection-age-dependent case isolation rate; (iii) contact tracing/quarantine/vaccination rates that depend on diagnosis rate of symptomatic cases; (iv) variation of susceptible population due to infection and contact tracing/quarantine/vaccination.\\

We present the general model in Section \ref{sec:2}. In Section \ref{sec:3}, the preliminary results about the solutions to the general problem are stated. In Section \ref{sec:4}, we illustrate more analysis results for the specific problem, and we provide all of the proofs in Appendix. Our model application to smallpox, as well as practical interpretations of the model parameters, are demonstrated in Section \ref{sec:5}. Then we investigate the effectiveness of contact tracing strategies which were implemented in control of 2003 SARS in Taiwan as the second application in Section \ref{sec:6}.

\section{A Logistic Age-Dependent Epidemic Model}
\label{sec:2}
First of all, we clarify the definitions of the three intervention strategies considered in the model: (1) isolation is the process by which infected people (all of whom are symptomatic) are prevented from infecting susceptible ones; (2) contact tracing is the process of identifying people who may have been infected by exposure to (or contact with) an infectious person; (3) quarantine is the process of isolating these people, called contacts (none of whom is yet symptomatic, and many are not even infected)\footnote{Thanks to J. Glasser at Centers for Disease Control and Prevention for kindly providing the definitions.}. The conduct of quarantine can be divided into two types: (i) quarantine close contacts of identified cases, and (ii) quarantine large groups of people (such as residents in residential complexes, workers in a workplace, students in schools, \emph{etc}). Our model focuses on the (i) type quarantine, which is conducted as a consequence of tracing close contacts of infected individuals.\\
Before introducing the main model, we introduce the notations as follows:
\begin{enumerate}[({N.}1)]
\item
For $0 < \emph{M} \leqslant \infty $, let ${L^1}: = {L^1}\left( {\left[ {0,\emph{M}} \right];{\mathbb{R}}} \right)$, $L_ + ^1: = {L^1}\left( {\left[ {0,\emph{M}} \right];{{\mathbb {R}}_ + }} \right)$ which is the positive cone in ${L^1}$. $\emph{M}$ denotes the maximum disease age in the model.
\item
For $0 < T \leqslant \infty $, denote ${C_T}: = C\left( {\left[ {0,T} \right];{L^1}} \right)$ with the supremum norm: ${\left\| l \right\|_{{C_T}}}: = \mathop {\sup }\limits_{0 \leqslant t \leqslant T} {\left\| {l\left( t \right)} \right\|_{{L^1}}}$, for $l \in {C_T}$. Let ${C_{T, + }}: = C\left( {\left[ {0,T} \right];L_ + ^1} \right)$ which is the positive cone in ${C_T}$.
\end{enumerate}
The basic assumptions are as follows:
\begin{enumerate}[({A.}1)]
\item
Let $\mathcal{T} \in {\left( {{L^1}} \right)^*}$ with norm ${\left\| \mathcal{T} \right\|_\infty }$, and we assume $\mathcal{T}\left( {L_ + ^1} \right) \subseteq {\mathbb{R}_ + }$.
\item
$\mathcal{B} ,\mathcal{Q} :{L^1} \to \mathbb{R}$ be globally Lipschitz continuous functions with Lipschitz constants $\left| \mathcal{B} \right|$ and $\left| \mathcal{Q} \right|$. Moreover, we assume $\mathcal{B}\left( {L_ + ^1} \right) \subseteq {\mathbb{R}_ + }$ and $\mathcal{Q}\left( {L_ + ^1} \right) \subseteq {\mathbb{R}_ + }$.
\item
$\mathcal{B}\left( 0 \right) = 0$, $\mathcal{Q}\left( 0 \right) = 0$.
\end{enumerate}

For $t \geqslant 0$, $a \in \left[ {0,\emph{M}} \right]$, the formal model is:

\begin{flalign}\label{whole model}
\begin{split}
&\frac{\partial }{{\partial t}}i\left( {a,t} \right){\text{  +  }}\frac{\partial }{{\partial a}}i\left( {a,t} \right) = \underbrace { - \mu \left( a \right)i\left( {a,t} \right)}_{\substack{\text{isolation of symptomatic}\\\text{individuals with infection}\\\text{age $a$}}} \underbrace {- {\mathcal{T}}\left( {i\left( {\cdot,t} \right)} \right)i\left( {a,t} \right)}_{\substack{\text{tracing of contacts}\\\text{with infection age $a$}}}\\
&\frac{d}{{dt}}S\left( t \right) = \underbrace {- \mathcal{B} \left( {i\left( { \cdot ,t} \right)} \right)S\left( t \right)}_{\substack {\text{infection of susceptibles}}} \quad \underbrace{- \mathcal{Q}\left( {i\left( { \cdot ,t} \right)} \right)S\left( t \right)}_{\substack{\text{quarantine of contacts}\\\text{that are susceptible}}}\\
&\underbrace{i\left( {0,t} \right)}_{\substack{\text{infectives with}\\\text {infection age $0$}}} = \underbrace{\mathcal{B} \left( {i\left( { \cdot ,t} \right)} \right)S\left( t \right)}_{\substack{\text{rate of new infections}}}\\
&i(a,0) = {i_0}(a) \in {L^1}\left[ {0,\emph{M}} \right]\\
&S(0) = {S_0} \in {\mathbb{R}_+}
\end{split}
\end{flalign}
where $i(a,t)$ is the infected population density at infection age $a$ at time $t$, and $S(t)$ is the susceptible population at time $t$, $\mu (a)$ is the rate of isolating symptomatic cases those are at disease age $a$. If we denote $i\left( { \cdot ,t}\right)$ as the infected population density function at time $t$, then $\mathcal{B}\left( {i\left( { \cdot ,t} \right)} \right)$ represents the infection transmission rate, $\mathcal{T}\left( {i\left( { \cdot ,t} \right)} \right)$ represents the isolation rate of infected individuals due to contact tracing at time $t$,  and $\mathcal{Q}\left( {i\left( { \cdot ,t} \right)} \right)$ represents the quarantine rate of susceptible contacts as the consequence of contact tracing.\\

Solving for $S\left( t \right)$ from the second equation in \eqref{whole model}, we can simplify the problem into an age-dependent population dynamics model for $i\left( {a,t} \right)$:
\begin{flalign}\label{reduced model}
\begin{split}
&\frac{\partial }{{\partial t}}i\left( {a,t} \right){\text{ + }}\frac{\partial }{{\partial a}}i\left( {a,t} \right) =  - \mu \left( a \right)i\left( {a,t} \right) - \mathcal{T}\left( {i\left( { \cdot ,t} \right)} \right)i\left( {a,t} \right)\\
&i\left( {0,t} \right) = {S_0} \mathcal{B} \left( {i\left( { \cdot ,t} \right)} \right){e^{ - \int_0^t {\mathcal{B} \left( {i\left( { \cdot ,s} \right)} \right) + \mathcal{Q}\left( {i\left( { \cdot ,s} \right)} \right)ds} }}\\
&i(a,0) = {i_0}(a)
\end{split}
\end{flalign}

In the following context, we denote $i\left( t \right)\!\left( a \right): = i\left( {a,t} \right)$; then $i \in {C_T}$ means $i\!\left( t \right) \in {L^1}$, for $t \in \left[ {0,T} \right]$. We also refer the solutions of the age-dependent problem as $l\left( t \right)\!\left( a \right): = i\left( {a,t} \right)$, where $l \in {C_T}$. Next we will generalize the problem to a formulation of age-dependent population dynamics. We introduce the aging and birth functions: 
\begin{enumerate}[({}1)]
\item
Let $G:{L^1} \to {L^1}$ be the aging function.
\item
For $0 < T \leqslant \infty $, let $F: {C_T} = C\left( {\left[ {0,T } \right];{L^1}} \right) \to C\left( {\left[ {0,T } \right];\mathbb{R}} \right)$ be the birth function. 
\end{enumerate}

Let $0 < T \leqslant \infty $, let $t \in \left[ {0,T} \right]$ and $l \in {C_T}$, the general age-dependent problem is as follows:
\begin{flalign}\label{pre ADP}
\begin{split}
&\frac{\partial }{{\partial t}}l\left( t \right)\!\left( a \right) + \frac{\partial }{{\partial a}}l\left( t \right)\!\left( a \right) = G\!\left( {l\left( t \right)} \right)\!\left( a \right),\,a.e.\ a \in \left[ {0,\emph{M}} \right]\\
&l\left( t \right)\!\left( 0 \right) = \left( {F\left( l \right)} \right)\!\left( t \right)\\
&l\left( 0 \right)\!\left( a \right) = \phi \left( a \right),\,a.e.\ a \in \left[ {0,\emph{M}} \right] 
\end{split}
\end{flalign}

However, the equation system in \eqref{pre ADP} is not well defined for solutions that are not continuously differentiable with respect to both variables. We are thus led to the following formulation of age-dependent population dynamics as in (\citealt{webbbook}): let $0 < T \leqslant \infty $, and let $l \in {C_T}$ satisfy:
\begin{equation}\label{ADP}
\begin{split}
&\mathop {\lim }\limits_{h \to {0^ + }} \int_0^\emph{M} {\left| {{h^{ - 1}}\left[ {l\left( {t + h} \right)\!\left( {a + h} \right) - l\left( t \right)\!\left( a \right)} \right] - G\!\left( {l\left( t \right)} \right)\!\left( a \right)} \right|da}  = 0\\
&\mathop {\lim }\limits_{h \to {0^ + }} \int_0^h {\left| {l\left( {t + h} \right)\!\left( a \right) - \left( {F\left( l \right)} \right)\!\left( t \right)} \right|da}  = 0\\
&l\left( 0 \right)\!\left( a \right) = \phi \left( a \right),\; a.e.\ a \in \left[ {0,\emph{M}} \right]
\end{split}
\end{equation}
where we let $l\left( {t + h} \right)\!\left( {a + h} \right) = 0$ if $a + h > \emph{M}$.
\section{Preliminary Results}
\label{sec:3}
In this section, we will present results about local existence and uniqueness of the solutions to the age-dependent problem \eqref{ADP} with the following assumptions on the aging and birth functions:
\begin{enumerate}[({H.}1)]
\item
$G:{L^1} \to {L^1}$, there is an increasing function ${c_1}:\left[ {0,\infty } \right) \to \left[ {0,\infty } \right)$ such that ${\left\| {G\left( {{\phi _1}} \right) - G\left( {{\phi _2}} \right)} \right\|_{{L^1}}} \le {c_1}\left( r \right){\left\| {{\phi _1} - {\phi _2}} \right\|_{{L^1}}}$
for all ${\phi _1},{\phi _2} \in {L^1}$ such that ${\left\| {{\phi _1}} \right\|_{{L^1}}},{\left\| {{\phi _2}} \right\|_{{L^1}}} \le r$.
\item
There is a function ${c_2}:\left[ {0,\infty } \right) \times \left[ {0,\infty } \right) \to \left[ {0,\infty } \right)$, which is increasing and continuous w.r.t.\ both variables. Then for all $T>0$, $F: {C_T} \to C\left( {\left[ {0,T} \right];\mathbb{R}} \right)$, for any $0 \le t \le T$ and $r>0$, we have
$$\left| {\left( {F\left( {{\phi _1}} \right)} \right)\!\left( t \right) - \left( {F\left( {{\phi _2}} \right)} \right)\!\left( t \right)} \right| \leqslant {c_2}\left( {r,t} \right)\mathop {\sup }\limits_{0 \leqslant s \leqslant t} {\left\| {{\phi _1}\left( s \right) - {\phi _2}\left( s \right)} \right\|_{{L^1}}}$$
for all ${\phi _1}$, ${\phi _2} \in {C_T}$ such that ${\left\| {{\phi _1}} \right\|_{{C_T}}},{\left\| {{\phi _2}} \right\|_{{C_T}}} \le r$.
\end{enumerate}
We state theorems about local existence and uniqueness of the solutions below. The proofs (they can be found in the Appendix) are different from those in (\citealt{webbbook}), since our assumption of the birth function $F$ is different.
\begin{thm}\label{local existence and uniqueness}
Let (H.1) and (H.2) hold and let $\phi  \in {L^1}$. There exists $T>0$ and $l \in {C_T}$ such that $l$ is a solution of \eqref{ADP} on $[0,T]$. Furthermore, there is a unique solution of \eqref{ADP} on $[0,T]$.
\end{thm}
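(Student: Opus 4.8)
The plan is to reduce the weak formulation \eqref{ADP} to an equivalent integral equation obtained by integrating along the characteristics $a-t=\mathrm{const}$, and then to solve that equation by the contraction mapping principle on $C_T$ for $T$ small. Integrating \eqref{pre ADP} formally along a characteristic through $(a,t)$ shows it originates at $(a-t,0)$ on the initial line when $t<a\le M$, and at $(0,t-a)$ on the boundary line when $0\le a\le t$. This motivates defining, for $l\in C_T$, the operator
\begin{equation*}
(\mathcal{K}l)(t)(a)=
\begin{cases}
\phi(a-t)+\displaystyle\int_0^t G\!\left(l(s)\right)\!(a-t+s)\,ds, & t<a\le M,\\[6pt]
\left(F(l)\right)\!(t-a)+\displaystyle\int_0^a G\!\left(l(t-a+\sigma)\right)\!(\sigma)\,d\sigma, & 0\le a\le t,
\end{cases}
\end{equation*}
with the convention $\phi(a-t)=0$ for $a-t\notin[0,M]$, consistent with \eqref{ADP}. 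The key claim is that $l\in C_T$ solves \eqref{ADP} on $[0,T]$ if and only if $l=\mathcal{K}l$.

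First I would verify that $\mathcal{K}$ maps $C_T$ into itself: continuity of $t\mapsto(\mathcal{K}l)(t)$ in $L^1$ follows from $L^1$-continuity of translations (applied to $\phi(a-t)$ and to $G(l(s))(a-t+s)$), from continuity of $s\mapsto G(l(s))$ (a consequence of (H.1) and $l\in C_T$), and from continuity of $\tau\mapsto(F(l))(\tau)$ in (H.2). Fixing $r>\|\phi\|_{L^1}$ and $B_{r,T}:=\{l\in C_T:\|l\|_{C_T}\le r\}$, the local Lipschitz bounds give $\|G(l(s))\|_{L^1}\le\|G(0)\|_{L^1}+c_1(r)r$ and an analogous bound for $F$, so $\|(\mathcal{K}l)(t)\|_{L^1}\le\|\phi\|_{L^1}+T\,\Lambda(r)$ for a constant $\Lambda(r)$; hence $\mathcal{K}(B_{r,T})\subseteq B_{r,T}$ once $T$ is small. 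For $l_1,l_2\in B_{r,T}$ I would split $\|(\mathcal{K}l_1)(t)-(\mathcal{K}l_2)(t)\|_{L^1}$ over $\{a>t\}$ and $\{a\le t\}$ and apply Fubini together with the translation-invariance of Lebesgue measure: the $G$-terms are bounded by $2c_1(r)T\|l_1-l_2\|_{C_T}$ via (H.1), and the $F$-term by $c_2(r,T)T\|l_1-l_2\|_{C_T}$ via (H.2), where the time-dependence of $c_2$ is harmless since $c_2(r,\cdot)$ is increasing, so $c_2(r,\tau)\le c_2(r,T)$ on $[0,T]$. This yields $\|\mathcal{K}l_1-\mathcal{K}l_2\|_{C_T}\le\big(2c_1(r)+c_2(r,T)\big)T\,\|l_1-l_2\|_{C_T}$, a strict contraction for $T$ sufficiently small, and Banach's fixed point theorem produces a unique fixed point $l\in B_{r,T}$.

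It remains to establish the equivalence with \eqref{ADP}. For the constructed fixed point, along each characteristic the difference quotient $h^{-1}[l(t+h)(a+h)-l(t)(a)]$ equals $h^{-1}\int_t^{t+h}G(l(s))(\cdot)\,ds$, whose $L^1$-limit as $h\to0^+$ is $G(l(t))(\cdot)$ by continuity of $s\mapsto G(l(s))$ (an $L^1$-averaging/Lebesgue-point argument), giving the first condition in \eqref{ADP}; the boundary condition $\lim_{h\to0^+}\int_0^h|l(t+h)(a)-(F(l))(t)|\,da=0$ follows from continuity of $\tau\mapsto(F(l))(\tau)$ and the explicit form of $\mathcal{K}l$ near $a=0$. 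Conversely, if $l\in C_T$ solves \eqref{ADP}, the first limit condition forces $l$ to be absolutely continuous along a.e. characteristic with derivative $G(l(\cdot))(\cdot)$, while the boundary condition supplies $l(t-a)(0)=(F(l))(t-a)$; integrating recovers $l=\mathcal{K}l$, so $l$ coincides with the unique fixed point, which gives uniqueness for \eqref{ADP}.

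The main obstacle I anticipate is precisely this equivalence: the rigorous passage between the weak formulation \eqref{ADP} and the characteristic integral equation $l=\mathcal{K}l$, in particular verifying the two $\lim_{h\to0^+}$ conditions for the fixed point and, in the converse direction, extracting the integral representation from an arbitrary solution of \eqref{ADP}. The contraction estimate itself is routine once (H.1)--(H.2) are available; the only real departure from \citealt{webbbook} is that $F$ acts on all of $C_T$ with the time-dependent modulus $c_2(r,t)$, which is absorbed by monotonicity as above, and a secondary technical point is the continuity-in-time of $\mathcal{K}l$, which rests on $L^1$-continuity of translations.
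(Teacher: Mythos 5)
Your existence argument coincides with the paper's: the same characteristic integral equation (the paper's \eqref{equivalent integral ADP}, proved equivalent in one direction in Proposition \ref{equivalence}) is solved by a contraction on a ball of $C_T$ for small $T$, with the time-dependence of $c_2(r,\cdot)$ absorbed by monotonicity exactly as you describe (Proposition \ref{existence and uniqueness of integral ADP solution}); your constant $2c_1(r)+c_2(r,T)$ versus the paper's $c_1(2r)+c_2(2r,T)+\dots$ is immaterial.

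Where you genuinely diverge, and where the gap sits, is uniqueness. You route it through the converse implication "every solution of \eqref{ADP} satisfies $l=\mathcal{K}l$", which you correctly flag as the main obstacle but then only sketch. Two points in that sketch do not hold up as written. First, the step "the boundary condition supplies $l(t-a)(0)=(F(l))(t-a)$" is not meaningful: $l(t-a)$ is an $L^1$ function and has no pointwise trace at $a=0$; the boundary condition in \eqref{ADP} is an averaged limit, and turning it plus the characteristic differentiability into the integral representation for $a<t$ requires the full Bochner-integration argument along characteristics (essentially Webb's equivalence proof, redone for a birth functional $F$ acting on histories in $C_T$), which is precisely the work the claim hides. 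Second, even granting the converse, Banach's theorem gives uniqueness of the fixed point only inside your ball $B_{r,T}$; a second solution of \eqref{ADP} on $[0,T]$ need not satisfy $\|\hat l\|_{C_T}\le r$ for the $T$ you fixed, so you still need a Gronwall estimate on the integral equation at the level $r'=\max(\|l\|_{C_T},\|\hat l\|_{C_T})$ (or a restriction-and-continuation argument) to conclude. The paper avoids both issues by never proving the converse: uniqueness is obtained directly from the definition \eqref{ADP} in Proposition \ref{uniqueness of ADP solution}, by estimating the upper right Dini derivative of $W(t)=\sup_{0\le s\le t}\|l(s)-\hat l(s)\|_{L^1}$ with $r$ chosen to dominate both solutions, yielding $\|l(t)-\hat l(t)\|_{L^1}\le e^{[c_1(r)+c_2(r,T)]t}\|\phi-\hat\phi\|_{L^1}$, hence uniqueness together with continuous dependence on the initial datum. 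Your plan can be completed, but only after supplying the converse equivalence in rigorous form and closing the ball-restriction loophole; as it stands, the uniqueness half is incomplete, and the paper's direct estimate is the cleaner (and stronger) route.
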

We introduce the definition of maximal interval of existence as in (\citealt{webbbook}):
\begin{definition}
Let $\phi  \in {L^1}$. Denote $\left[ {0,{T_\phi }} \right)$ as the maximal interval of existence of the solution of \eqref{ADP}, is the maximal interval with the property that if $0 < T < {T_\phi }$, there exist $l \in {C_T}$ such that $l$ is a solution of \eqref{ADP} on $\left[ {0,T} \right]$.
\end{definition}
With additional assumptions as stated below, we will prove the positivity of the solutions.
\begin{description}
\item[$(H.3)$]
$F\left( {C_{T, + }} \right) \subseteq C\left( {\left[ {0,T} \right];{\mathbb{R}_ + }} \right) $
\item[$(H.4)$]
There is an increasing function ${c_3}:\left[ {0,\infty } \right) \to \left[ {0,\infty } \right)$ such that if $r>0$ and $\phi  \in L_ + ^1$ with ${\left\| \phi  \right\|_{{L^1}}} \le r$, then $G\left( \phi  \right) + {c_3}\left( r \right)\phi  \in L_ + ^1$.
\end{description}

\begin{thm}\label{existence and uniqueness of positive solution}
Let (H.1)-(H.4) hold and let $\phi  \in {L_+^1}$. The solution $l$ of \eqref{ADP} on $\left[ {0,{T_\phi }} \right)$, has the property that $l\left( t \right) \in L_ + ^1$ for $0 \leqslant t < {T_\phi }$.
\end{thm}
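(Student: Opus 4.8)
The plan is to reduce positivity to a statement about the integral reformulation of \eqref{ADP} that underlies the proof of Theorem \ref{local existence and uniqueness}, after absorbing the possibly sign-indefinite part of $G$ into an exponential integrating factor supplied by (H.4). Throughout, $l$ denotes the unique solution on $[0,T_\phi)$ furnished by Theorem \ref{local existence and uniqueness}; only its nonnegativity is in question.

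Fix $0<T<T_\phi$ and set $r:=\|l\|_{C_T}$, which is finite since $l$ is continuous on the compact interval $[0,T]$, and let $c_3:=c_3(r)$ be the constant from (H.4), so that $G(\psi)+c_3\psi\in L_+^1$ whenever $\psi\in L_+^1$ with $\|\psi\|_{L^1}\le r$. Integrating the transport equation in \eqref{ADP} along characteristics with the integrating factor $e^{c_3 s}$, the solution $l$ satisfies, for $t\in[0,T]$ and a.e.\ $a\in[0,\emph{M}]$,
\begin{equation*}
l(t)(a)=
\begin{cases}
e^{-c_3 t}\,\phi(a-t)+\displaystyle\int_0^t e^{-c_3(t-s)}\bigl[G(l(s))+c_3\,l(s)\bigr](a-t+s)\,ds, & a\ge t,\\[2.2ex]
e^{-c_3 a}\,\bigl(F(l)\bigr)(t-a)+\displaystyle\int_0^a e^{-c_3(a-\sigma)}\bigl[G(l(t-a+\sigma))+c_3\,l(t-a+\sigma)\bigr](\sigma)\,d\sigma, & a<t.
\end{cases}
\end{equation*}
This is the reformulation already used, up to the harmless rescaling $m:=e^{c_3\,\cdot}\,l$ (which solves the same type of problem with aging term $\psi\mapsto G(\psi)+c_3\psi$), in the proof of Theorem \ref{local existence and uniqueness}. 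The point of writing it this way is that, by (H.3) and (H.4), \emph{every} term on the right-hand side is nonnegative as soon as $l(s)\in L_+^1$ for all $s$ in the relevant range: $\phi(a-t)\ge0$ a.e., $(F(l))(t-a)\ge0$ by (H.3) since $l\in C_{T,+}$, and the integrands are nonnegative a.e.\ by (H.4).

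To break the apparent circularity I would run the standard successive-approximation scheme on short time intervals. Choose $\delta>0$, depending only on $r$, $c_3$ and the Lipschitz data of $G$ and $F$ on the ball of radius $r$, so small that the right-hand side above defines a strict contraction on a closed ball of radius $r':=\|l\|_{C_T}+1$ in $C_\delta$ (shrinking $\delta$ further if necessary so that its iterates do not leave this ball); by Theorem \ref{local existence and uniqueness} and uniqueness, its fixed point is exactly $l|_{[0,\delta]}$. Starting the iteration from $l^{(0)}(t)(a):=\phi((a-t)^+)\in L_+^1$ and iterating the right-hand side, an induction on $n$ using (H.3), (H.4) and the nonnegativity of $\phi$ gives $l^{(n)}(t)\in L_+^1$ for every $n$ and every $t\in[0,\delta]$; passing to an a.e.-convergent subsequence (equivalently, using that $L_+^1$ is closed in $L^1$) yields $l(t)\in L_+^1$ for $t\in[0,\delta]$. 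Since $\|l(t)\|_{L^1}\le r$ on all of $[0,T]$, the \emph{same} $c_3$ and the \emph{same} $\delta$ work when the scheme is restarted from $l(\delta),\,l(2\delta),\dots$, so finitely many such steps cover $[0,T]$ and give $l(t)\in L_+^1$ for all $t\in[0,T]$. As $T<T_\phi$ was arbitrary, this proves $l(t)\in L_+^1$ for $0\le t<T_\phi$.

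The routine verifications — that $l$ indeed satisfies the integrating-factor integral equation, that the integral operator is a contraction with iterates confined to a fixed ball, and the measure-zero bookkeeping in passing to the limit — are exactly as in the proof of Theorem \ref{local existence and uniqueness}. The only genuinely new ingredient is the observation, enabled by (H.3)–(H.4), that this reformulation is positivity preserving; there is no deep obstacle, and the one point requiring care is making the integrating-factor constant uniform over $[0,T]$, which is why I cover $[0,T]$ by short intervals on which a single $c_3(r)$ is valid rather than attempting a single global argument.
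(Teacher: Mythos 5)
Your proof is correct and follows essentially the same route as the paper, whose own proof simply invokes the positivity argument of (\citealt{webbbook}) -- the integrating-factor reformulation with $G(\phi)+c_3(r)\phi$, a contraction iteration whose iterates remain in $L_+^1$ by (H.3)--(H.4), closedness of the positive cone, and stepwise continuation -- with the continuation step supplied by Proposition \ref{semigroup property} in place of Webb's Proposition 2.4, which is exactly what your restarts at $l(\delta),l(2\delta),\dots$ amount to. Two cosmetic points: the integrating-factor constant should be $c_3(r')$ for the radius $r'$ of the ball in which the iterates live (not $c_3(\|l\|_{C_T})$, though monotonicity of $c_3$ makes this harmless), and on each restarted interval the iteration operator must carry the already-established nonnegative history of $l$ when evaluating the history-dependent $F$, which is precisely the point handled by Proposition \ref{semigroup property}.
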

Furthermore, with one more restriction on the aging and birth functions, the positive solution exists globally.
\begin{thm}\label{global solution}
Let (H.1)-(H.4) hold and let $\phi  \in L_ + ^1$, let $l$ be the solution of \eqref{ADP} on $\left[ {0,{T_\phi }} \right)$, and let there exist $\omega  \in \mathbb{R}$ such that for $0 \le t < {T_\phi }$, $F$ and $G$ satisfy the following inequality:
\begin{equation}\label{global condition}
\left( {F\left( l \right)} \right)\!\left( t \right) + \int_0^\emph{M} {G\left( {l\left( t \right)} \right)\!\left( a \right)da}  \leqslant \omega \int_0^\emph{M} {l\left( t \right)\!\left( a \right)da} \tag{H.5}
\end{equation}
Then ${T_\phi } = \infty $ and ${\left\| {l\left( {t} \right)} \right\|_{{L^1}}} \le {e^{\omega t}}{\left\| \phi  \right\|_{{L^1}}}$.
\end{thm}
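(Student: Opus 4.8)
The plan is to reduce the hypothesis \eqref{global condition} to a scalar differential inequality for the total population $p(t):=\left\|l(t)\right\|_{L^1}$, apply a Gronwall-type comparison to obtain the stated bound, and then deduce $T_\phi=\infty$ from a blow-up alternative. The first observation is that by Theorem~\ref{existence and uniqueness of positive solution} we have $l(t)\in L_+^1$ for every $0\le t<T_\phi$, so that $p(t)=\int_0^{\emph{M}}l(t)(a)\,da$ (no absolute value is needed) and the ``outflow at maximal age $\emph{M}$'' implicit in the convention $l(t+h)(a+h)=0$ for $a+h>\emph{M}$ is automatically nonnegative; this positivity is what drives the argument.

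Next I would derive a mass-balance estimate for $p$. The heuristic, obtained by integrating the transport equation in \eqref{pre ADP} over $a\in[0,\emph{M}]$ and using $l(t)(0)=(F(l))(t)$, is
\[
\frac{d}{dt}p(t)=(F(l))(t)+\int_0^{\emph{M}}G(l(t))(a)\,da-l(t)(\emph{M})\le(F(l))(t)+\int_0^{\emph{M}}G(l(t))(a)\,da .
\]
To make this rigorous for solutions of the weak problem \eqref{ADP}, I would use the representation of $l$ along characteristics — the datum $\phi$ transported on $\{a\ge t\}$ and the boundary value $F(l)$ transported on $\{a<t\}$, each carrying its accumulated $G$-contribution, cf.\ \citealt{webbbook} — and integrate it in $a$ with Fubini's theorem to get, for $0\le t<t+h<T_\phi$,
\[
p(t+h)-p(t)\le\int_t^{t+h}\Bigl[(F(l))(s)+\int_0^{\emph{M}}G(l(s))(a)\,da\Bigr]\,ds ,
\]
the inequality coming from discarding the nonnegative age-$\emph{M}$ outflow; by continuity of the integrand this yields $D_+p(t)\le(F(l))(t)+\int_0^{\emph{M}}G(l(t))(a)\,da$ on $[0,T_\phi)$, where $D_+$ is the upper-right derivative. (One can instead try to work directly from the difference-quotient relations of \eqref{ADP}, splitting $p(t+h)-p(t)=\int_0^h l(t+h)(b)\,db+\int_0^{\emph{M}}[l(t+h)(a+h)-l(t)(a)]\,da$ and passing to the limit, but the boundary layer near $a=0$ makes that route delicate.)

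Inserting \eqref{global condition} turns this into $D_+p(t)\le\omega p(t)$; since $p$ is continuous, the function $t\mapsto e^{-\omega t}p(t)$ then has nonpositive upper-right derivative and is therefore nonincreasing, so $p(t)\le e^{\omega t}p(0)=e^{\omega t}\left\|\phi\right\|_{L^1}$ for all $t\in[0,T_\phi)$ — and this holds for every sign of $\omega$, which a naive application of Gronwall's inequality would not give. For globality, suppose $T_\phi<\infty$: the bound just obtained gives $\sup_{0\le t<T_\phi}\left\|l(t)\right\|_{L^1}\le e^{\omega T_\phi}\left\|\phi\right\|_{L^1}=:R<\infty$, and since the existence time and the estimates in Theorem~\ref{local existence and uniqueness} depend on the initial data only through such an a priori bound (via $c_1(R)$ and $c_2(R,\cdot)$), the characteristic representation shows $\{l(t)\}_{t<T_\phi}$ is Cauchy in $L^1$ as $t\to T_\phi^-$; solving \eqref{ADP} from its limit $\widetilde\phi\in L_+^1$ via Theorem~\ref{local existence and uniqueness} and splicing the new solution onto $l$ extends it past $T_\phi$, contradicting maximality. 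Hence $T_\phi=\infty$ and $\left\|l(t)\right\|_{L^1}\le e^{\omega t}\left\|\phi\right\|_{L^1}$ for all $t\ge0$.

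The step I expect to be the main obstacle is the rigorous passage from the weak formulation \eqref{ADP} to the mass-balance relation: the boundary layer at $a=0$ (where \eqref{ADP} controls $\int_0^h|l(t+h)(a)-(F(l))(t)|\,da$ but not its quotient by $h$) and the silently discarded outflow at $a=\emph{M}$ both need care, and getting the sharp constant $e^{\omega t}$ for negative $\omega$ requires the comparison to be carried out at the level of $D_+(e^{-\omega t}p(t))$ rather than through an integral inequality. Routing everything through the characteristic representation handles these points cleanly, at the cost of first recording that representation — standard but somewhat lengthy; the remaining ingredients (Fubini, the Dini-derivative comparison, and the continuation argument) are routine.
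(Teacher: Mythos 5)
Your proposal is correct and is essentially the proof the paper intends: the paper itself only defers to Sections 2.3--2.4 of (\citealt{webbbook}) with Proposition \ref{semigroup property} replacing Webb's continuation lemma, and that argument is precisely your combination of the mass-balance estimate $D_+\left\| l(t)\right\|_{L^1}\leqslant (F(l))(t)+\int_0^{M}G(l(t))(a)\,da\leqslant\omega\left\| l(t)\right\|_{L^1}$ (using positivity from Theorem \ref{existence and uniqueness of positive solution}), the Dini-derivative comparison already used in Proposition \ref{uniqueness of ADP solution}, and extension past a finite $T_\phi$ via the $r$-uniform local existence time of Proposition \ref{existence and uniqueness of integral ADP solution} together with the splicing Proposition \ref{semigroup property}. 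Your only deviations are cosmetic: the differential inequality can be read off directly from the two limits in \eqref{ADP} rather than through the characteristic representation (the missing $h^{-1}$ in the boundary condition of \eqref{ADP} is a typo, cf.\ \eqref{ADP of u}, so the $a=0$ layer is controlled), and in the continuation step you need not establish that $l(t)$ is Cauchy as $t\to T_\phi^-$ (a claim you do not fully justify) -- starting a local solution at some $t_0$ with $T_\phi-t_0$ smaller than the uniform existence time and splicing already gives the contradiction.
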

\section{Basic Theory of the Logistic Age-Dependent Epidemic Model}
\label{sec:4}
We will continue investigating the solutions of the specific age-dependent problem \eqref{reduced model} in the sense of \eqref{ADP}. First, specify the birth and aging functions:
\begin{enumerate}[({P.}1)]
\item
The aging function $G:{L^1} \to {L^1}$ is, for $\phi  \in {L^1}$,
\\$G\left( \phi  \right)\!\left( a \right) =  - \mu \left( a \right)\phi \left( a \right) - \mathcal{T}\left( \phi  \right)\phi \left( a \right)$.
\item
The birth function $F:{C_T} \to C\left( {\left[ {0,\infty} \right];\mathbb{R}} \right)$ is, for $l \in {C_T}$,
\\$\left( {F\left( l  \right)} \right)\!\left( t \right): = {S_0}\mathcal{B} \left( {l \left( t \right)} \right){e^{ - \int_0^t {\mathcal{B} \left( {l \left( s \right)} \right) + \mathcal{Q}\left( {l \left( s \right)} \right)ds} }}$.
\end{enumerate}
where $\mu$ and ${S_0}$ are as in \eqref{whole model}, $\mathcal{T}$, $\mathcal{B}$, and $\mathcal{Q}$ are as in (A.1)-(A.3).
\begin{thm}\label{existence and uniqueness of main problem}
Let (A.1), (A.2) and (A.3) hold, let $\mu  \in L_ + ^\infty \left[ {0,\emph{M}} \right]$, ${S_0} > 0$, and $\phi \in L_ + ^1\left[ {0,\emph{M}} \right]$. There is a function $l \in C\left( {\left[ {0,\infty } \right);L_ + ^1} \right)$ such that $l$ is the unique global solution of \eqref{ADP} with the aging function $G$ and birth function $F$ in (P.1) and (P.2).
\end{thm}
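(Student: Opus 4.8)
The plan is to deduce Theorem \ref{existence and uniqueness of main problem} directly from the abstract results of Section \ref{sec:3}: I will check that the concrete aging function $G$ of (P.1) and birth function $F$ of (P.2) satisfy (H.1)--(H.5), and then quote Theorem \ref{local existence and uniqueness} (local existence/uniqueness), Theorem \ref{existence and uniqueness of positive solution} (positivity), and Theorem \ref{global solution} (global existence plus the exponential bound). Thus the entire argument reduces to five hypothesis verifications, each of which is a short direct estimate using (A.1)--(A.3) and $\mu\in L_+^\infty[0,\emph{M}]$, $S_0>0$.

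For (H.1), write $G(\phi_1)-G(\phi_2)=-\mu(\phi_1-\phi_2)-\big[\mathcal{T}(\phi_1)\phi_1-\mathcal{T}(\phi_2)\phi_2\big]$ and split the bracket as $\mathcal{T}(\phi_1)(\phi_1-\phi_2)+(\mathcal{T}(\phi_1)-\mathcal{T}(\phi_2))\phi_2$; using $|\mathcal{T}(\phi)|\le\|\mathcal{T}\|_\infty\|\phi\|_{L^1}$ and $\|\mu\|_\infty<\infty$ one obtains (H.1) with the increasing function $c_1(r)=\|\mu\|_\infty+2\|\mathcal{T}\|_\infty r$. For (H.2), set $E_i(t):=\exp\!\big(-\int_0^t \mathcal{B}(\phi_i(s))+\mathcal{Q}(\phi_i(s))\,ds\big)$ and write
$$F(\phi_1)(t)-F(\phi_2)(t)=S_0\mathcal{B}(\phi_1(t))\big(E_1(t)-E_2(t)\big)+S_0\big(\mathcal{B}(\phi_1(t))-\mathcal{B}(\phi_2(t))\big)E_2(t).$$
Since $\mathcal{B},\mathcal{Q}$ are globally Lipschitz with $\mathcal{B}(0)=\mathcal{Q}(0)=0$, we have $|\mathcal{B}(\phi_i(t))|\le|\mathcal{B}|r$ and $\big|\int_0^t(\mathcal{B}(\phi_i(s))+\mathcal{Q}(\phi_i(s)))\,ds\big|\le t(|\mathcal{B}|+|\mathcal{Q}|)r$ when $\|\phi_i\|_{C_T}\le r$; combining this with the elementary inequality $|e^{-x}-e^{-y}|\le e^{\max(|x|,|y|)}|x-y|$ (note we may not assume positivity of $\phi_i$ here, hence the crude exponential bound) yields (H.2) with, e.g., $c_2(r,t)=S_0 e^{t(|\mathcal{B}|+|\mathcal{Q}|)r}\big(|\mathcal{B}|+t|\mathcal{B}|(|\mathcal{B}|+|\mathcal{Q}|)r\big)$, which is increasing and jointly continuous.

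For (H.3): if $l\in C_{T,+}$ then $\mathcal{B}(l(t))\ge0$, the exponential factor is positive, and $S_0>0$, so $F(l)(t)\ge0$; continuity of $t\mapsto F(l)(t)$ follows from continuity of $l$, Lipschitz continuity of $\mathcal{B},\mathcal{Q}$, and continuity of the integral and of $\exp$. For (H.4): for $\phi\in L^1_+$ with $\|\phi\|_{L^1}\le r$, $G(\phi)(a)+c(r)\phi(a)=\big(c(r)-\mu(a)-\mathcal{T}(\phi)\big)\phi(a)$, which is in $L^1_+$ once $c(r)\ge\|\mu\|_\infty+\|\mathcal{T}\|_\infty r$ (using $0\le\mathcal{T}(\phi)\le\|\mathcal{T}\|_\infty r$), so $c_3(r):=\|\mu\|_\infty+\|\mathcal{T}\|_\infty r$ works. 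At this point Theorems \ref{local existence and uniqueness} and \ref{existence and uniqueness of positive solution} provide a unique solution $l$ on the maximal interval $[0,T_\phi)$ with $l(t)\in L_+^1$ for $0\le t<T_\phi$.

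Finally, for (H.5): since $l(t)\in L^1_+$, we get $\int_0^\emph{M} G(l(t))(a)\,da=-\int_0^\emph{M}\big(\mu(a)+\mathcal{T}(l(t))\big)l(t)(a)\,da\le0$, while $0\le e^{-\int_0^t(\mathcal{B}+\mathcal{Q})\,ds}\le1$ gives $F(l)(t)\le S_0\mathcal{B}(l(t))\le S_0|\mathcal{B}|\,\|l(t)\|_{L^1}$; hence (H.5) holds with $\omega=S_0|\mathcal{B}|$. Theorem \ref{global solution} then gives $T_\phi=\infty$ together with $\|l(t)\|_{L^1}\le e^{S_0|\mathcal{B}|t}\|\phi\|_{L^1}$, and we conclude that $l\in C([0,\infty);L_+^1)$ is the unique global solution. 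I expect the only genuinely fiddly step to be (H.2): keeping track of the product of the nonlinear new-infection term with the exponential survival-type factor, and arranging the constants so that the resulting bound is authentically increasing and jointly continuous in $(r,t)$; all the remaining verifications are one-line computations.
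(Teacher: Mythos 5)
Your proposal is correct and follows essentially the same route as the paper: verify (H.1)--(H.5) for the $G$, $F$ of (P.1)--(P.2) (with the same decompositions, the same crude exponential bound since positivity of $\phi_i$ cannot be assumed in (H.2), and the same constants $c_1(r)=\bar\mu+2r\|\mathcal{T}\|_\infty$, $\omega=S_0|\mathcal{B}|$), then invoke Theorems \ref{local existence and uniqueness}, \ref{existence and uniqueness of positive solution}, and \ref{global solution}. The only differences are cosmetic: you spell out (H.3), (H.4) (which the paper calls obvious) and attach the exponential-difference term to $\mathcal{B}(\phi_1)$ rather than $\mathcal{B}(\phi_2)$.
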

For computational convenience and the proof of the asymptotic behavior, we introduce a solution formula in the following two theorems.
\begin{prop}\label{existence of related problem}
Let (A.1), (A.2) and (A.3) hold, let $\mu  \in L_ + ^\infty \left[ {0,\emph{M}} \right]$, ${S_0} > 0$, and $\phi \in L_ + ^1$. There exists $u \in C\left( {\left[ {0,\infty} \right];L_ + ^1} \right)$ such that $u$ satisfies:
\begin{equation}\label{u formula}
  u\!\left( t \right)\!\left( a \right) = \left\{ \begin{matrix} 
  \left( {\mathcal{H}\left( u \right)} \right)\!\left( {t - a} \right){e^{ - \int_0^a {\mu \left( b \right)db} }},\;a < t \hfill \cr 
  \phi\left( {a - t} \right){e^{ - \int_{a - t}^a {\mu \left( b \right)db} }},\;a \ge t \hfill \cr 
 \end{matrix}  \right.
\end{equation}
where 
\begin{flalign*}
\left( {\mathcal{H}\left( u \right)} \right)\!\left( t \right)= & {S_0}\mathcal{B} \left( {\frac{{u\left( t \right)}}{{1 + \int_0^t {\mathcal{T}\left( {u\left( s \right)} \right)ds} }}} \right) \times \left( {1 + \int_0^t {\mathcal{T}\left( {u\left( s \right)} \right)ds} } \right) \times \\
& {e^{ - \int_0^t {\mathcal{B} \left( {\frac{{u\left( s \right)}}{{1 + \int_0^s {\mathcal{T}\left( {u\left( \tau  \right)} \right)d\tau } }}} \right) + \mathcal{Q}\left( {\frac{{u\left( s \right)}}{{1 + \int_0^s {\mathcal{T}\left( {u\left( \tau  \right)} \right)d\tau } }}} \right)ds} }}
\end{flalign*}
Moreover, $u$ is a solution of \eqref{ADP} with birth function $\mathcal{H}$ and aging function $\mathcal{P}:{L^1} \to {L^1}$, where $\mathcal{P}\left( l \right)\!\left( a \right) =  - \mu \left( a \right)l\left( a \right)$ for $l \in {L^1}$.
\end{prop}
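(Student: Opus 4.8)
The plan is \emph{not} to build $u$ from scratch, but to obtain it by an explicit, purely time-dependent rescaling of the solution $l$ of the reduced problem, whose existence, uniqueness, positivity and global character over $[0,\infty)$ are already furnished by Theorem~\ref{existence and uniqueness of main problem} under exactly the hypotheses assumed here. The observation driving this is that the contact-tracing term $-\mathcal{T}(l(t))l(t)(a)$ in the aging function $G$ of (P.1) acts uniformly in the age variable $a$, so it can be stripped off by multiplying $l(t)$ by a scalar factor depending on $t$ alone; what is left is the pure isolation aging $\mathcal{P}(l)(a)=-\mu(a)l(a)$, at the cost of modifying the boundary term into $\mathcal{H}$.

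Concretely, first I would set $\Lambda(t):=\int_0^t\mathcal{T}(l(s))\,ds$. Since $\mathcal{T}\in(L^1)^*$ and $l\in C([0,\infty);L^1)$, the map $s\mapsto\mathcal{T}(l(s))$ is continuous, so $\Lambda\in C^1([0,\infty))$ with $\Lambda(0)=0$, $\Lambda'(t)=\mathcal{T}(l(t))$, and $\Lambda(t)\ge 0$ because $l(t)\in L_+^1$ and $\mathcal{T}(L_+^1)\subseteq\mathbb{R}_+$. Define $u(t):=e^{\Lambda(t)}l(t)$, so that $u\in C([0,\infty);L_+^1)$. The key identity is
\[
1+\int_0^t\mathcal{T}(u(s))\,ds=e^{\Lambda(t)},
\]
which holds because both sides equal $1$ at $t=0$ and, by linearity of $\mathcal{T}$, the left-hand side has derivative $\mathcal{T}(u(t))=e^{\Lambda(t)}\mathcal{T}(l(t))=e^{\Lambda(t)}\Lambda'(t)=(e^{\Lambda})'(t)$. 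Consequently $l(t)=u(t)\bigl/\bigl(1+\int_0^t\mathcal{T}(u(s))\,ds\bigr)$, and substituting this together with the identity into $(F(l))(t)$ from (P.2) yields $e^{\Lambda(t)}(F(l))(t)=(\mathcal{H}(u))(t)$, i.e.\ $\mathcal{H}(u)$ is exactly the rescaled boundary value of $u$.

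Next I would check that $u$ solves \eqref{ADP} with aging function $\mathcal{P}$ and birth function $\mathcal{H}$. The initial condition is immediate since $\Lambda(0)=0$. For the interior condition, expand
\[
h^{-1}\bigl[u(t+h)(a+h)-u(t)(a)\bigr]=e^{\Lambda(t+h)}\,h^{-1}\bigl[l(t+h)(a+h)-l(t)(a)\bigr]+h^{-1}\bigl[e^{\Lambda(t+h)}-e^{\Lambda(t)}\bigr]l(t)(a).
\]
Writing $-\mu(a)l(t)(a)=G(l(t))(a)+\mathcal{T}(l(t))l(t)(a)$ gives $\mathcal{P}(u(t))(a)=e^{\Lambda(t)}G(l(t))(a)+e^{\Lambda(t)}\mathcal{T}(l(t))l(t)(a)$, so the $L^1$-norm of the discrepancy is bounded by $e^{\Lambda(t+h)}\int_0^\emph{M}\!\bigl|h^{-1}[l(t+h)(a+h)-l(t)(a)]-G(l(t))(a)\bigr|da$, plus $|e^{\Lambda(t+h)}-e^{\Lambda(t)}|\,\|G(l(t))\|_{L^1}$, plus $\bigl|h^{-1}[e^{\Lambda(t+h)}-e^{\Lambda(t)}]-e^{\Lambda(t)}\mathcal{T}(l(t))\bigr|\,\|l(t)\|_{L^1}$; the first term tends to $0$ by the interior condition of \eqref{ADP} for $l$ (with $e^{\Lambda(t+h)}$ bounded), while the other two tend to $0$ by continuity and differentiability of $e^{\Lambda(\cdot)}$. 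The boundary condition is handled identically: $\int_0^h|u(t+h)(a)-(\mathcal{H}(u))(t)|\,da\le e^{\Lambda(t+h)}\int_0^h|l(t+h)(a)-(F(l))(t)|\,da+|e^{\Lambda(t+h)}-e^{\Lambda(t)}|\,h\,|(F(l))(t)|\to 0$. Finally, since $u$ solves \eqref{ADP} with the linear aging function $\mathcal{P}$ and $\mu\in L_+^\infty[0,\emph{M}]$, the standard characteristic (Volterra) representation for age-structured transport with integrable mortality, as in (\citealt{webbbook}), applies and gives precisely formula \eqref{u formula} with boundary data $(\mathcal{H}(u))(t)=u(t)(0)$. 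I expect the main obstacle to be the bookkeeping in the interior estimate — keeping the nonlocal time factor $e^{\Lambda(t)}$ and the age density $l(t)$ separated so that the cross terms manifestly vanish — together with confirming that $\mathcal{H}$ and $\mathcal{P}$ lie within the scope of that representation, which is where one uses that the denominator $1+\int_0^t\mathcal{T}(u(s))\,ds\ge 1$ keeps every composed map Lipschitz on bounded sets.
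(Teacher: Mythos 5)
Your construction is correct, but it runs in the opposite direction from the paper's proof, and the two routes buy different things. The paper proves Proposition~\ref{existence of related problem} from scratch: it sets up the integral equation \eqref{u formula} as a fixed-point problem in $C_{T,+}$ (Lemma~\ref{u lemma 1}, a contraction argument parallel to Proposition~\ref{existence and uniqueness of integral ADP solution}), extends by the continuation Lemma~\ref{u lemma 2}, and gets globality from the blow-up alternative (Lemma~\ref{u lemma 3}) plus the Gronwall bound $\|u(t)\|_{L^1}\le\|\phi\|_{L^1}e^{S_0|\mathcal{B}|t}$ (Lemma~\ref{u lemma 4}); this makes $u$ available independently of Theorem~\ref{existence and uniqueness of main problem}, so that Theorem~\ref{solution in form of u} can then \emph{produce} $l$ via \eqref{solution formula}. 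You instead presuppose the global solution $l$ of Theorem~\ref{existence and uniqueness of main problem} and define $u=e^{\Lambda}l$ with $\Lambda(t)=\int_0^t\mathcal{T}(l(s))ds$; your key identity $1+\int_0^t\mathcal{T}(u(s))ds=e^{\Lambda(t)}$ is valid precisely because (A.1) makes $\mathcal{T}$ linear and bounded, and your bookkeeping for the two limits in \eqref{ADP} (splitting off $G(l(t))+\mathcal{T}(l(t))l(t)=-\mu l(t)$ and using $\Lambda\in C^1$) is sound, as is the positivity and continuity of $u$. This is shorter and avoids repeating the contraction/continuation/Gronwall machinery, at the price of making Proposition~\ref{existence of related problem} logically dependent on Theorem~\ref{existence and uniqueness of main problem} (so the $u$-route no longer gives an independent constructive derivation of $l$, only a representation of it, which still suffices for the asymptotics and simulations).

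The only step you leave as a citation is the passage from ``$u$ solves \eqref{ADP} with aging $\mathcal{P}$ and birth $\mathcal{H}$'' to the explicit formula \eqref{u formula}; as stated this is the converse direction to Proposition~\ref{equivalence} and should be closed explicitly. It can be done with the paper's own tools: freeze the boundary data $b(t):=(\mathcal{H}(u))(t)=e^{\Lambda(t)}(F(l))(t)$, which is a fixed continuous function once $u$ is in hand; the right-hand side of \eqref{u formula} with this $b$ satisfies the additive integral equation \eqref{equivalent integral ADP} with $G=\mathcal{P}$ and $F\equiv b$ (a direct computation using $\frac{d}{ds}e^{-\int_{\cdot}^{s}\mu}$), hence is an \eqref{ADP} solution by Proposition~\ref{equivalence}; since the constant-in-$l$ birth function $b$ satisfies (H.2) with $c_2=0$ and $\mathcal{P}$ satisfies (H.1) with $c_1=\bar\mu$, Proposition~\ref{uniqueness of ADP solution} gives uniqueness of \eqref{ADP} solutions for this data, so $u$ coincides with that explicit formula, which is exactly \eqref{u formula}. (Alternatively, one can argue along characteristics from the integral equation for $l$, solving the scalar linear Volterra equation on each characteristic.) Note also that your concluding remark about the denominator being $\ge 1$ is only guaranteed on the positive cone, which is fine here because $l$, hence $u$, is positive, but it is worth saying so, since $\mathcal{H}$ need not be well behaved on all of $C_T$.
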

\begin{thm}\label{solution in form of u}
Let (A.1), (A.2), and (A.3) hold, let $\mu  \in L_ + ^\infty \left[ {0,\emph{M}} \right]$, ${S_0} > 0$, and $\phi \in L_ + ^1$. Let $u \in C\left( {\left[ {0,\infty } \right);L_ + ^1} \right)$ be the solution to the integral equation \eqref{u formula}. Then
\begin{equation}\label{solution formula}
l\left( t \right)\!\left( a \right) = \frac{u\left( t \right)\!\left( a \right)}{{1 + \int\limits_0^t {\mathcal{T}(u(\tau ))d\tau } }}
\end{equation}
gives the unique global solution $l \in C\left( {\left[ {0,\infty } \right);L_ + ^1} \right)$to problem \eqref{reduced model}. 
\end{thm}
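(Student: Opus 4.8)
The plan is to verify directly that the function $l$ defined by \eqref{solution formula} satisfies the three conditions of \eqref{ADP} with the aging function $G$ of (P.1) and the birth function $F$ of (P.2), and then to read off existence and uniqueness from Theorem~\ref{existence and uniqueness of main problem}. Throughout, write $D(t) := 1 + \int_0^t \mathcal{T}(u(\tau))\,d\tau$, so that $l(t)(a) = u(t)(a)/D(t)$; the whole argument rests on the observation that, since $\mathcal{T}$ is linear, $\mathcal{T}(u(t)) = D(t)\,\mathcal{T}(l(t))$. First I would record the basic properties of $D$: since $u(\tau)\in L^1_+$ and $\mathcal{T}(L^1_+)\subseteq\mathbb{R}_+$ by (A.1), the integrand $\tau\mapsto\mathcal{T}(u(\tau))$ is nonnegative, and it is continuous because $u\in C([0,\infty);L^1)$ and $\mathcal{T}\in(L^1)^*$; hence $D\in C^1([0,\infty))$, $D\geq 1$, and $D'(t)=\mathcal{T}(u(t))$. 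Since $u\in C([0,\infty);L^1_+)$ and $D$ is continuous and bounded below by $1$, the quotient $l$ lies in $C([0,\infty);L^1_+)$, and the initial condition follows from $D(0)=1$, which gives $l(0)=u(0)=\phi$.

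For the boundary condition I would substitute $u(s)/D(s)=l(s)$ into the formula for $\mathcal{H}(u)$ of Proposition~\ref{existence of related problem}: the explicit factor $1+\int_0^t\mathcal{T}(u(s))\,ds$ equals $D(t)$, the argument of $\mathcal{B}$ and $\mathcal{Q}$ collapses to $l(s)$, and one obtains $(\mathcal{H}(u))(t)=D(t)\,S_0\,\mathcal{B}(l(t))\,e^{-\int_0^t\mathcal{B}(l(s))+\mathcal{Q}(l(s))\,ds}=D(t)\,(F(l))(t)$. Since $u$ solves \eqref{ADP} with birth function $\mathcal{H}$, $\int_0^h|u(t+h)(a)-(\mathcal{H}(u))(t)|\,da\to 0$ as $h\to0^+$; writing $l(t+h)(a)-(F(l))(t)=D(t+h)^{-1}[u(t+h)(a)-(\mathcal{H}(u))(t)]+(\mathcal{H}(u))(t)[D(t+h)^{-1}-D(t)^{-1}]$, integrating over $[0,h]$, and using $D\geq1$ together with the continuity of $D$, the boundary condition for $l$ and $F(l)$ follows.

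The differential condition is the heart of the proof and the step I expect to cause the most trouble. Using $l=u/D$ one expands
\begin{multline*}
h^{-1}\big[l(t+h)(a+h)-l(t)(a)\big] \\ = \frac{1}{D(t+h)D(t)}\Big(D(t)\,h^{-1}\big[u(t+h)(a+h)-u(t)(a)\big]-u(t)(a)\,h^{-1}\big[D(t+h)-D(t)\big]\Big),
\end{multline*}
which is valid also where $a+h>M$, both sides vanishing there. As $h\to0^+$, the bracket $h^{-1}[u(t+h)(a+h)-u(t)(a)]$ converges in $L^1$ to $a\mapsto-\mu(a)u(t)(a)=\mathcal{P}(u(t))(a)$ because $u$ solves \eqref{ADP} with aging function $\mathcal{P}$ (Proposition~\ref{existence of related problem}); the scalar $h^{-1}[D(t+h)-D(t)]$ converges to $D'(t)=\mathcal{T}(u(t))$; and $D(t+h)D(t)\to D(t)^2$. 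The one subtlety is that the prefactor $1/(D(t+h)D(t))$ depends on $h$: I would handle this by an add-and-subtract argument, replacing it by $1/D(t)^2$ at the cost of a term equal to a scalar tending to $0$ times a bracket that is bounded in $L^1$-norm for small $h$ (the bound being a consequence of the two convergences just stated).

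Passing to the limit in $L^1$ then gives
\[
D(t)^{-2}\big(D(t)(-\mu(a)u(t)(a))-u(t)(a)\mathcal{T}(u(t))\big)=-\mu(a)\,l(t)(a)-l(t)(a)\,\mathcal{T}(l(t))=G(l(t))(a),
\]
where I used once more $\mathcal{T}(u(t))/D(t)=\mathcal{T}(l(t))$; this is precisely the first line of \eqref{ADP}. Hence $l$ is a global positive solution of \eqref{ADP} with $G$ and $F$ as in (P.1)--(P.2), i.e.\ of \eqref{reduced model} in the sense of \eqref{ADP}. Uniqueness of this solution is exactly the assertion of Theorem~\ref{existence and uniqueness of main problem}, so $l$ is \emph{the} unique global solution, completing the proof.
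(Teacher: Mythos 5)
Your argument is correct and follows essentially the same route as the paper's proof: verify the three conditions of \eqref{ADP} for $l=u/D$ by an algebraic decomposition of the difference quotient (the paper splits it into three terms $I_1+I_2+I_3$, which is just a reorganization of your single identity), use $(\mathcal{H}(u))(t)=D(t)(F(l))(t)$ and the linearity of $\mathcal{T}$ exactly as you do, and quote Theorem \ref{existence and uniqueness of main problem} for uniqueness. If anything, your treatment of the boundary condition (separating the $D(t+h)^{-1}$ and $D(t)^{-1}$ factors by an add-and-subtract) is slightly more careful than the paper's one-line computation.
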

By formula \eqref{solution formula}, we will be able to investigate the asymptotic behaviour of the solution to \eqref{reduced model} as in the next theorem. Furthermore, \eqref{solution formula} also provides a starting point for our simulations.
\begin{thm}\label{asymptotic}
Let (A.1), (A.2) and (A.3) hold, let $\mu  \in L_ + ^\infty \left[ {0,\emph{M}} \right]$, ${S_0} > 0$, and $\phi \in L_ + ^1$. Assume that there is an ${a_0} \in \left[ {0,\emph{M}} \right)$ such that $\mu \left( a \right) > {\mu_0 }  > 0$ for all $a \in \left[ {{a_0},\emph{M}} \right]$. Then, for the unique solution of \eqref{whole model} in the sense of \eqref{ADP}, $\mathop {\lim }\limits_{t \to \infty } S\left( t \right) = {S_\infty } > 0$, $\mathop {\lim }\limits_{t \to \infty } I\left( t \right)= \mathop {\lim }\limits_{t \to \infty } \int_0^\emph{M} {i\left( {a,t} \right)da}  = 0$.
\end{thm}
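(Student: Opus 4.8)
\emph{Proof proposal.}
The plan is to exploit the explicit characteristic representation of the solution furnished by Proposition~\ref{existence of related problem} and Theorem~\ref{solution in form of u} together with the elementary ODE for $S$, turning the hypothesis that $\mu>\mu_0>0$ on $[a_0,\emph{M}]$ into time-integrability of the total infected population. First I would dispose of the easy half and set up a reduction: since $l(t)\in L_+^1$ for every $t$ by Theorem~\ref{existence and uniqueness of main problem}, the quantities $\mathcal B(i(\cdot,s)),\mathcal Q(i(\cdot,s)),\mathcal T(i(\cdot,s))$ are all nonnegative by (A.1)--(A.2); hence $S(t)=S_0\exp\!\big(-\!\int_0^t[\mathcal B(i(\cdot,s))+\mathcal Q(i(\cdot,s))]\,ds\big)$ is non-increasing and bounded below by $0$, so $S_\infty:=\lim_{t\to\infty}S(t)$ exists in $[0,S_0]$. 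Moreover $-S'(t)=[\mathcal B(i(\cdot,t))+\mathcal Q(i(\cdot,t))]S(t)\ge\mathcal B(i(\cdot,t))S(t)=i(0,t)\ge0$, so integration gives the basic estimate $\int_0^\infty i(0,t)\,dt\le S_0-S_\infty<\infty$ (cumulative incidence cannot exceed the susceptible pool it consumes). Since $\mathcal B,\mathcal Q$ are globally Lipschitz with $\mathcal B(0)=\mathcal Q(0)=0$, one has $\mathcal B(i(\cdot,t))+\mathcal Q(i(\cdot,t))\le(|\mathcal B|+|\mathcal Q|)\,I(t)$, so it suffices to prove $\int_0^\infty I(t)\,dt<\infty$: this yields $S_\infty=S_0\exp(-\!\int_0^\infty[\mathcal B+\mathcal Q]\,ds)>0$, and, combined with the pointwise bounds below, also $I(t)\to0$.

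Next I would extract pointwise bounds from \eqref{u formula}--\eqref{solution formula}. Writing $i(0,s)=(F(l))(s)=\mathcal B(i(\cdot,s))S(s)$ and discarding the favourable tracing factor $\big(1+\int_0^t\mathcal T(u)\big)^{-1}$, one gets $i(a,t)\le i(0,t-a)\,e^{-\int_0^a\mu(b)\,db}$ for $a<t$ and $i(a,t)\le\phi(a-t)\,e^{-\int_{a-t}^a\mu(b)\,db}$ for $a\ge t$. Integrating in $a$ already gives $I(t)\le\int_0^\infty i(0,s)\,ds+\|\phi\|_{L^1}\le S_0+\|\phi\|_{L^1}$, so $I$ is bounded and consequently $i(0,t)=\mathcal B(i(\cdot,t))S(t)\le|\mathcal B|S_0(S_0+\|\phi\|_{L^1})=:C_0$ is bounded too. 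Because $\mu\ge0$ everywhere and $\mu>\mu_0$ on $[a_0,\emph{M}]$, we also have $e^{-\int_0^a\mu}\le e^{-\mu_0(a-a_0)}$ for $a\ge a_0$ and $e^{-\int_c^{c+t}\mu}\le e^{-\mu_0(t-a_0)}$ for all $c\ge0$ and $t\ge a_0$.

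With these in hand I would split $[0,\emph{M}]=[0,a_0)\cup[a_0,\emph{M}]$. On the high-isolation band, $\int_0^\infty\!\!\int_{a_0}^{\emph{M}}i(a,t)\,da\,dt\le\mu_0^{-1}\int_0^\infty\!\!\int_0^{\emph{M}}\mu(a)i(a,t)\,da\,dt$, and inserting the characteristic bounds and using Fubini shows the last double integral is finite (its $a<t$ part is $\le\big(\int_0^{\emph{M}}\mu(a)e^{-\int_0^a\mu}da\big)\int_0^\infty i(0,s)\,ds\le S_0$ and its $a\ge t$ part is $\le\|\phi\|_{L^1}$). On the young-age band, $i(a,t)\le i(0,t-a)$ for $t>a_0$ gives, by Fubini, $\int_{a_0}^\infty\!\!\int_0^{a_0}i(a,t)\,da\,dt\le\int_0^{a_0}\!\!\int_{a_0-a}^\infty i(0,s)\,ds\,da\le a_0S_0$, while $\int_0^{a_0}\!\!\int_0^{a_0}i(a,t)\,da\,dt\le a_0\sup_{[0,a_0]}I<\infty$ by continuity of $I$; hence $\int_0^\infty I(t)\,dt<\infty$ and $S_\infty>0$. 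For $I(t)\to0$ the same decomposition gives, for $t>a_0$, $I(t)\le\int_{t-a_0}^t i(0,s)\,ds+\int_{a_0}^{\min(t,\emph{M})}i(0,t-a)e^{-\mu_0(a-a_0)}\,da+\mathbf{1}_{\{t<\emph{M}\}}e^{-\mu_0(t-a_0)}\|\phi\|_{L^1}$: the first term is a shrinking tail of $\int_0^\infty i(0,s)\,ds$, the last is exponentially small, and the middle term $\to0$ by a routine truncation (for fixed large $R$ the part $a>R$ is $\le C_0\mu_0^{-1}e^{-\mu_0(R-a_0)}$ and the part $a_0\le a\le R$ is $\le\int_{t-R}^{t-a_0}i(0,s)\,ds\to0$). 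Thus $\lim_{t\to\infty}\int_0^{\emph{M}}i(a,t)\,da=0$.

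The step I expect to be the main obstacle is proving $\int_0^\infty I(t)\,dt<\infty$. The underlying mechanism — that cumulative isolations are bounded by cumulative incidence, which is bounded by the initial susceptible population — is intuitively clear, but making it rigorous requires pushing the characteristic representation through Fubini's theorem, treating the low-age interval $[0,a_0)$ (where $\mu$ need not be bounded below) separately via the age-zero boundary flux $i(0,\cdot)$, and handling $\emph{M}<\infty$, $\emph{M}=\infty$, and the transported initial datum $\phi$ in a uniform way. Everything else — the convergence of $S$ and the decay of $I$ once integrability is known — is routine.
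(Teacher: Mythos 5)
Your proposal is correct, and its first half is essentially the paper's argument: you obtain $S_\infty=\lim_{t\to\infty}S(t)$ from monotonicity, get the key bound $\int_0^\infty i(0,t)\,dt\le S_0-S_\infty<\infty$ from $-S'(t)\ge \mathcal B(i(\cdot,t))S(t)=i(0,t)$, and control $\int_0^\infty I(t)\,dt$ through the characteristic representation \eqref{u formula}--\eqref{solution formula} with the tracing factor discarded and the age axis split at $a_0$ (the paper integrates the exponential kernels $e^{-\mu_0(\cdot)}$ directly in its $I_1,I_2$ estimates, whereas you route the band $[a_0,M]$ through $\mu_0^{-1}\int\!\!\int\mu\, i\le S_0+\|\phi\|_{L^1}$ -- a minor bookkeeping difference; note also that your pointwise bound $i(a,t)\le i(0,t-a)e^{-\int_0^a\mu}$ rests on the identity $(\mathcal H(u))(t)\big/\big(1+\int_0^t\mathcal T(u(s))ds\big)=(F(l))(t)=i(0,t)$, which indeed holds without linearity of $\mathcal B$ and is the same identification the paper uses). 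The genuine divergence is the proof that $I(t)\to0$: the paper derives the balance identity \eqref{consistence}, uses monotonicity and boundedness of its cumulative terms together with the convergence of $S$ to conclude that $\lim_{t\to\infty}I(t)$ exists, and only then invokes $\int_0^\infty I(t)\,dt<\infty$ to identify the limit as zero; you instead prove decay directly from the characteristic bounds -- a shrinking tail of the convergent incidence integral on $[0,a_0)$, an exponentially damped initial-datum term, and a truncation at age $R$ using the uniform bound on $i(0,\cdot)$. Your route is more self-contained (it avoids justifying the age-and-time integration of the mild solution behind \eqref{consistence} and the intermediate "limit exists" step) and is effectively quantitative, giving an explicit rate in terms of the incidence tail and $e^{-\mu_0 R}$; the paper's route yields the bookkeeping identity \eqref{consistence}, which has independent epidemiological interest and also delivers finiteness of the cumulative isolation and tracing terms as a by-product.
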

\section{Application I: Smallpox}
\label{sec:5}
Smallpox was eradicated in 1979, but fears of bioterrorist attacks by deliberately releasing the variola virus have been taken into consideration according to federal and academic observations ever since the terrorist attacks of September 11, 2001. Although the two government laboratories in the United States and Russia are the only known places that keep the viral samples, the possibility of other sources cannot be ruled out (\citealt{CIA}). Public health authorities have detailed plans for emergency preparedness and response to a smallpox outbreak (\citealt{CDCresponse}); on the other hand, the proper amount of vaccine and treatment medicine that should be stockpiled is still controversial (\citealt{NYT}).\\


Due to the undesirable side-effects of the vaccine, the routine vaccination for the variola virus has been discontinued ever since 1972 and currently the vaccination is only given to selected military personnel and laboratory workers who handle the virus. Moreover, because of the waning immunity of the vaccine, the proportion of Americans who are both over age 40 and still immune to smallpox might be too small to achieve herd immunity. As a result based on the stated concerns, public health authorities, such as Centers for Disease Control and Prevention (CDC), suggest intensive surveillance and identification of infected cases, isolation of smallpox patients, and vaccination of close contacts of infected individuals. In subsections $5.1$-$5.4$, we compare the effectiveness of two major post-event vaccination strategies: ring vaccination and mass vaccination, in response of a hypothetical bioterrorist smallpox attack in a big city. In subsection $5.5$, we assess how efficacies of the control strategies affect the result of ring vaccination.

\subsection{Ring Vaccination}
Also known as surveillance and containment, ring vaccination consists of rapid identification, isolation, vaccination of close contacts of infected persons (primary contacts), and vaccination of contacts of the primary contact (secondary contacts). We assume that each identified individual will be asked to provide a list of contacts of an average number (denoted by $CT$ as in the following text). Contacts that are successfully traced will be vaccinated and put under surveillance for a certain quarantine period. That is, vaccination and surveillance are follow-up procedures of tracing, and are applied to both susceptible contacts (who are in the quarantine class) and infected contacts (who are in the contact tracing class). We divide the population at time $t$ into seven classes as shown in a flow diagram in \emph{Fig.} \ref{SEIR}.
\begin{figure}[h]
\begin{center}
\includegraphics[scale=0.9, trim=4.0cm 19cm 0cm 4.5cm]{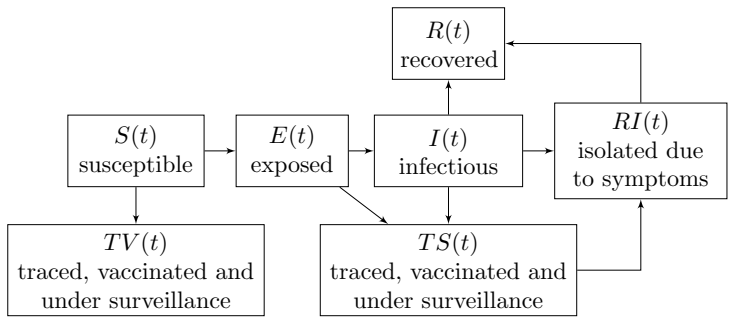}
\end{center}
\caption{\label{SEIR}Susceptible individuals may become infected, immediately enter the exposed class, $E(t)$, in which they are not yet infectious, and then enter the infectious class, $I(t)$. Symptomatic infectives in $I(t)$ may exit to the isolated class, $RI(t)$, and they will recover, enter $R(t)$, and do not return to $S(t)$ because of immunity. Susceptible people may be traced, vaccinated, put under surveillance, and they do not return to $S(t)$ because of vaccination. Exposed and infected individuals may be traced, vaccinated, put under surveillance, and then isolated when they become symptomatic. There is a possibility for infectious individuals who are neither identified due to symptoms or isolated by contact tracing, to enter class $R(t)$ when they reach the maximum disease age.}
\end{figure}

In the following context, we denote ${T_i}$ as the length of the pre infectious period (infectiousness threshold), ${T_s}$ as the length of the pre symptomatic period (symptoms threshold), and ${F_i}$ as the length of the infectious period. Hence ${T_i} + {F_i}$ represents the maximum disease age. We consider problem \eqref{whole model} with the same notations. Notice that the dynamics of $4$ compartments illustrated in \emph{Fig.} \ref{SEIR} depend on $S(t)$ and ${i\left( {a,t} \right)}$, since
$E\left( t \right) = \int_0^{{T_i}} {i\left( {a,t} \right)da} $, $I\left( t \right) = \int_{{T_i}}^{{T_i} + {F_i}} {i\left( {a,t} \right)da} $. Moreover, we have
\begin{flalign*}
\frac{{dR\!\left( t \right)}}{{dt}} & + \frac{{dRI\!\left( t \right)}}{{dt}} + \frac{{dTS\!\left( t \right)}}{{dt}} + \frac{{dTV\!\left( t \right)}}{{dt}}=  \underbrace{i\left( {{T_i} + {F_i},t} \right)}_{\substack{\text{recovery rate of}\\\text {unidentified infectives}}}\\
& + \underbrace{\int_{T_s}^{{T_i} + {F_i}}\!\!\!\!\!\!\! {\mu \left( a \right)\!i\left( {a,t} \right)\!da}}_{\substack{\text{isolation rate of}\\\text{symptomatic cases}}} + \underbrace{\mathcal{T}\!\left( {i\left( { \cdot ,t} \right)} \right)\!\left( {E\!\left( t \right)\! + \!I\!\left( t \right)} \right)}_{\substack{\text{tracing and surveillance}\\\text {rate of infectives}}} + \underbrace{\mathcal{Q}\!\left( {i\left( { \cdot ,t} \right)} \right)\!S\!\left( t \right)}_{\substack{\text{tracing and vaccination}\\\text {rate of susceptibles}}}
\end{flalign*}

Since tracing is a consequence of identifying symptomatic cases, then the number of contacts traced should be related to the number of infectious cases identified. Then we assume, for simplicity, that the tracing (hence vaccinating and surveilling) rate is proportional to the isolation (identifying new cases) rate. That is, in model \eqref{whole model}, we set
\begin{flalign}\label{TQB}
\begin{split}
&\mathcal{T}\!\left( {i\left( { \cdot ,t} \right)} \right) = {\eta _I}\cdot \frac{{CT}}{{{S_0}}}\int_{{T_s}}^{{T_i} + {F_i}} \!{\mu \left( a \right)i\left( {a,t} \right)da}\\
&\mathcal{Q}\!\left( {i\left( { \cdot ,t} \right)} \right) = {\eta _S}\cdot \frac{{CT}}{{{S_0}}} \int_{{T_s}}^{{T_i} + {F_i}} \!{\mu \left( a \right)i\left( {a,t} \right)da}\\
&\mathcal{B}\!\left( {i\left( { \cdot ,t} \right)} \right) = \int_{{T_i}}^{{T_i} + {F_i}} \!{\beta \left( a \right)i\left( {a,t} \right)da}
\end{split}
\end{flalign}
where ${\mu \!\left( a \right)}$ is the isolation removal rate for symptomatic infectives at disease age $a$ as in \eqref{whole model}, ${\beta \!\left( a \right)}$ is the disease transmission rate of an infectious individual at disease age $a$, $CT$ is the average number of contacts provided by each identified infective, ${S_0}$ is the initial susceptible population as in \eqref{whole model}, and ${\eta _I}$ and ${\eta _S}$ are proportionality constants for tracing the infected class and the susceptible class, respectively. Discussions about meanings and estimations of the parameters ${\eta _I}$ and ${\eta _S}$ are in the following context.\\

Contacts provided by an identified infective may come from any of the seven classes in \emph{Fig.} \ref{SEIR}, but only those who are in the classes $S(t)$, $E(t)$, and ${I(t)}$ may be successfully traced, vaccinated, and put under surveillance. We assume that the probability for a contact being infected (or susceptible) at time $t$ is proportional to the density of the infected (susceptible) population at time $t$, and we take ${\eta _I}$(${\eta _S}$) to be the constants of proportionality, respectively. Then at time $t$, the rate of tracing infected individuals is:
\begin{equation}\label{etainfected}
\underbrace {\underbrace {{\eta _I}\frac{{E\left( t \right) + I\left( t \right)}}{{{S_0}}}}_{\substack{\text{probability of tracing an infected contact}} }\cdot \; CT}_{\substack{\text{average number of infected contacts}\\\text{traced per identified symptomatic case}}} \cdot \underbrace {\left( {\int_{{T_s}}^{{T_i} + {F_i}} {\mu \left( a \right)i\left( {a,t} \right)da} } \right)}_{\substack{\text{rate of identifying symptomatic cases}}}
\end{equation}
The rate of tracing susceptible individuals is:
\begin{equation}\label{etasusceptible}
\underbrace {\underbrace {{\eta _S}\frac{{S\left( t \right)}}{{{S_0}}}}_{\substack{\text{probability of tracing}\\\text{ a susceptible contact}}} \cdot \; CT}_{\substack{\text{average number of susceptible contacts}\\\text{traced per identified symptomatic case}}} \cdot \underbrace {\left( {\int_{{T_s}}^{{T_i} + {F_i}} {\mu \left( a \right)i\left( {a,t} \right)da} } \right)}_{\text{rate of identifying symptomatic cases}}
\end{equation}

which are exactly the corresponding terms in \eqref{whole model} with the setting \eqref{TQB}. Moreover, the probability interpretations in \eqref{etainfected} and \eqref{etasusceptible} imply that for any time $t$, ${\eta _I}$ and ${\eta _S}$ should satisfy:
\begin{equation}\label{eta}
{\eta _I}\frac{{E\left( t \right) + I\left( t \right)}}{{{S_0}}} + {\eta _S}\frac{{S\left( t \right)}}{{{S_0}}} \leqslant 1
\end{equation}

We denote the probability that a traced contact of an identified symptomatic individual is infected as $\eta$, a parameter that describes the tracing efficacy in finding infectives. Hence ${\eta_I}$ and ${\eta_S}$ can be obtained when the value of $\eta$ is given: ${\eta _I} = \eta \frac{{{S_0}}}{{E\left( t \right) + I\left( t \right)}}$ from \eqref{etainfected}, and ${\eta _S} \leqslant \left( {1 - \eta } \right)\frac{{{S_0}}}{{S\left( t \right)}}$ by \eqref{etasusceptible}. Since $S(t)$ is mostly unchanged for $t$ in the initial phase of the outbreak, for simplicity, we take $\frac{{{S_0}}}{{S\left( t \right)}} \approx 1$ and $\frac{{{S_0}}}{{E\left( t \right) + I\left( t \right)}} \approx \frac{{{S_0}}}{{E\left( 0 \right) + I\left( 0 \right)}}$ for any time $t$. So ${\eta_I}$ and ${\eta_S}$ are estimated by $\eta$ and the initial conditions, \emph{i.e.}, ${\eta _I} = \eta \frac{{{S_0}}}{{E\left( 0 \right) + I\left( 0 \right)}}$ and ${\eta _S} \leqslant 1 - \eta $. The value of $\eta$ can be easily determined from evolving data during the initial phase of the epidemic: it is simply the fraction of the traced contacts who turn out to be symptomatic over all traced contacts.

In particular, when ${\eta _I} = {\eta _S} = 1$, then the probability of tracing an infected contact at time $t$ is $\eta  = \frac{{E\left( t \right) + I\left( t \right)}}{{{S_0}}}$ and that of tracing a susceptible contact at time $t$ is $\frac{{S\left( t \right)}}{{{S_0}}}$. That means the probability of tracing an infected (susceptible) contact at time $t$ is exactly the fraction of infected (susceptible) population at time $t$, which indicates that the tracing is random and is not effective.

\subsection{Mass Vaccination}
Mass vaccination, usually conducted at a constant rate, is the strategy of vaccinating large numbers of people. We assume that there is no residual immunity in the population, and a post-event mass vaccination, together with a strategy of isolating symptomatic individuals, start as soon as the first case is identified. We consider the fact that infected people vaccinated in the first few days of exposure will not transmit smallpox to others (\citealt{CDCsymptom}). And we denote ${T_v}$ as the length of vaccine sensitive period for infectives, that is, infectives receive vaccination with disease age less than ${T_v}$ will not be infectious. This assumption is not relevant in ring vaccination: infected contacts at any disease age are removed due to vaccination and surveillance, hence ${T_v}$ is a parameter that only used in mass vaccination. \emph{Fig.} \ref{SEIR2} shows the dynamic of the disease transmission with mass vaccination.
\begin{figure}[here]
\begin{center}
\includegraphics[scale=0.9, trim=4.0cm 19cm 0cm 4cm]{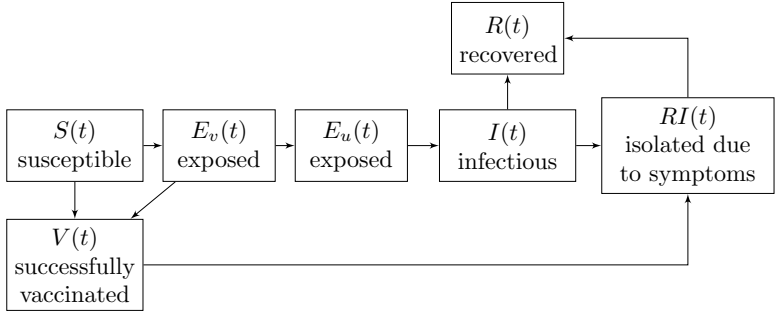}
\end{center}
\caption{\label{SEIR2}${E_v}(t)$ denotes the number of infectives in the vaccine sensitive stage, so they will not transmit the disease to others if vaccinated. ${E_u}(t)$ denotes the number of infectives in the vaccine insensitive stage, who will be able to transmit the disease even after vaccination.}
\end{figure}

The following mass vaccination model is of a simpler form than \eqref{whole model}, which can be analyzed by the method in (\citealt{webb2010}). 
\begin{flalign}\label{mass vaccination}
\begin{split}
&\frac{\partial }{{\partial t}}i\left( {a,t} \right) + \frac{\partial }{{\partial a}}i\left( {a,t} \right) =  - \mu \left( a \right)i\left( {a,t} \right) - \nu \left( a \right)i\left( {a,t} \right),\;0 \leqslant a \leqslant {T_i} + {F_i},\;t \geqslant 0\\
&\frac{d}{{dt}}S\left( t \right) =  - \left( {\int_{{T_i}}^{{T_i} + {F_i}} {\beta \left( a \right)i\left( {a,t} \right)da} } \right)S\left( t \right) - {\nu_0} S\left( t \right),\;t \geqslant 0\\
&i\left( {0,t} \right) = \left( {\int_{{T_i}}^{{T_i} + {F_i}} {\beta \left( a \right)i\left( {a,t} \right)da} } \right)S\left( t \right),\;t \geqslant 0\\
&i\left( {a,0} \right) = {i_0}\left( a \right),\;0 \leqslant a \leqslant {T_i} + {F_i},\;S\left( 0 \right) = {S_0}
\end{split}
\end{flalign}
where $\nu \left( a \right)$ is the mass vaccination removal rate of infected individuals at disease age $a$, ${\nu_0}$ is the mass vaccination rate, and the other notations have the same interpretations as in the ring vaccination model.

\subsection{Model Parameters}
\emph{Table 1} describes smallpox natural history and \emph{Table 2} shows parameter values/ranges we choose for simulations. We pick the threshold values of ${T_i}$, ${T_s}$ and ${F_i}$ as recommended in (\citealt{Eichner}) and (\citealt{CDCsymptom}). \emph{Fig.} \ref{beta} illustrates the transmission rate function of disease age, the shape of the function suggested in studies (\citealt{integralmodel, carrat, Eichner, CDCsymptom}), and (\citealt{CCC}), and we make a theoretical estimation about the value of the transmission rate. We vary ${R_{sym}}$, the percentage of symptomatic individuals removed per day, from $\mathrm{50\%} $ to $\mathrm{90\%}$, which is an estimation of an efficient removal process of smallpox due to its identifiable symptoms after the prodrome.\\

We model a deliberate release of smallpox pathogen in a big city as large as New York, which has a total population of $\mathrm{8 \times {10^6}}$. All of the simulations start with an age distribution of index cases as shown in \emph{Fig.} \ref{initial}, which corresponds to a scenario when one or several public places encounter a series of smallpox virus releases.

\begin{table}[!ht]
\begin{center}
\begin{tabular}{|c |c |c |c|}
\multicolumn{4}{l}{\emph{Table 1}. Smallpox durations of the progression stages.}\\
\hline
Stage & Duration &  Infectiousness & References\\
\hline
Incubation period & $7 \sim 17$ days &  Not infectious & (\citealt{CDCsymptom}) \\
\hline
Initial symptoms(prodrome) & $2 \sim 4$ days & Sometimes infectious & (\citealt{CDCsymptom}) \\
\hline
Early rash & 4 days & Most infectious & (\citealt{CDCsymptom}) \\
\hline
Pustular rash and scabs & 16 days & Infectious & (\citealt{CDCsymptom}) \\
\hline
Scabs resolved &       & Not infectious & (\citealt{CDCsymptom}) \\
\hline
\end{tabular}
\label{natural history}
\end{center}
\end{table}
\begin{table}
\begin{center}
\begin{tabular}{|p{2.6 cm} |p{5.2cm} |p{2.6cm}|}
\multicolumn{3}{l}{\emph{Table 2.} Baseline parameters and initial conditions.}\\
\hline
Parameter description & Parameter baseline value & References\\
\hline
infectiousness threshold & ${T_i} = 12$ days & (\citealt{Eichner})\\
\hline
symptoms threshold & ${T_s} = 14$ days & (\citealt{CDCsymptom, Eichner})\\
\hline
vaccine insensitiveness threshold$^\dag$ & ${T_v} = 3$ days & (\citealt{CDCsymptom})\\
\hline
length of infectious period & ${F_i} = 28$ days & (\citealt{CDCsymptom})\\
\hline
infection transmission rate function & $\beta \left( a \right) = \left\{ \begin{matrix}
  0\,,\quad \quad \quad \quad \quad \quad 0 \leqslant a < 12 \hfill \cr 
  2.5 \cdot {10^{ - 8}}{\left( {a - 12} \right)^2}{e^{ - 0.5\left( {a - 12} \right)}},\; \hfill \cr 
  \quad \quad \quad \quad \quad \quad \;12 \leqslant a \leqslant 40 \hfill \cr 
 \end{matrix}  \right.$ & \emph{Fig.} \ref{beta}\\
\hline
removal of symptomatic cases & ${R_{sym}} \geqslant 50\% $ per day & (\citealt{webb2010, Meltzer})\\
\hline
isolation rate of infectives & $\mu \left( a \right) = \left\{ \begin{matrix} 
  0.0,\quad \quad \quad \quad \quad a \leqslant {14} \hfill \cr 
   - \ln \left( {1.0 - {R_{sym}}} \right),a > {14} \hfill \cr 
 \end{matrix}  \right.$ & (\citealt{webb2010})\\
\hline
mass vaccination removal rate of infectives$^\dag$ & $\nu \left( a \right) = \left\{ \begin{matrix} 
  {\nu_0} ,\;0 \leqslant a \leqslant 3 \hfill \cr 
  0,\;a > 3 \hfill \cr 
 \end{matrix}  \right.$ & (\citealt{CDCsymptom})\\
\hline
mass vaccination rate$^\dag$ & $0 \leqslant {\nu_0} \leqslant 10000$ & (\citealt{Kaplan})\\
\hline
average number of contacts traced per identified case$^*$ & $0 < CT \leqslant 100$ & (\citealt{Kaplan})\\
\hline
probability for a traced contact being infected$^*$ & $0 < \eta  \leqslant 1$ & Text\\
\hline
initial susceptible population & ${S_0} = 8 \times {10^6}$ & Text\\
\hline
index cases distribution & ${i_0} \in L_ + ^1\left[ {0,{T_i} + {F_i}} \right]$ & \emph{Fig.} \ref{initial}\\
\hline
\end{tabular}
\caption*{$^*$parameters only used in ring vaccination.\\ 
$^\dag$parameters only used in mass vaccination.}
\end{center}

\end{table}
 \begin{figure}[!ht]
    \subfloat[Infection transmission rate function $\beta$ \label{beta}]{%
      \includegraphics[width=0.45\textwidth]{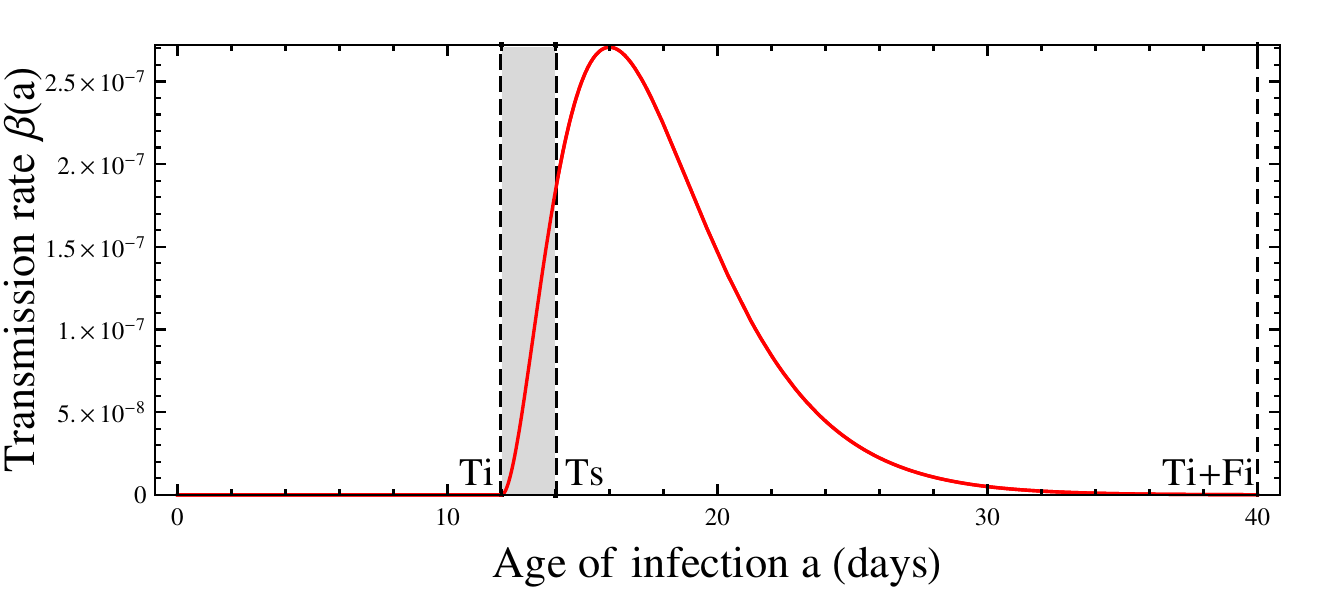}
    }
    \hfill
    \subfloat[Initial disease-age distribution function ${i_0}$ \label{initial}]{%
      \includegraphics[width=0.45\textwidth]{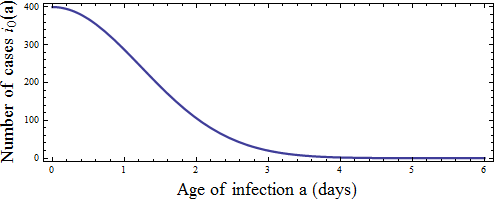}
    }
    \caption{
    \label{smallpoxinitialcondition} \emph{Fig.} \ref{beta} is the infection age-dependent transmission rate function with respect to age since infection. ${T_i}$, ${T_s}$, and ${F_i}$ are introduced and estimated as in the context and \emph{Table 2}. The grey area is the prodromal period with initial symptoms and early contagiousness. \emph{Fig.} \ref{initial} is the initial disease distribution function ${i_0}$.}
  \end{figure}



\subsection{Simulations of Different Vaccination Strategies: Mass Vaccination versus Ring Vaccination}
There are two ring vaccination scenarios in \emph{Fig.} \ref{fourcases}: the green curves represent an effective ring vaccination strategy, and the blue curves represent an ineffective ring vaccination strategy when ${\eta_I} = {\eta_S} = 1$. We observe pulses in the daily number of traced and vaccinated contacts in both of the two scenarios. These pulses are caused by the choice of the initial infection-age distribution function in \emph{Fig.} \ref{initial}. The majority of the index cases are in an early disease age, and thus they will become infectious and symptomatic in the same time period. As a consequence, symptomatic cases and generations of new cases will appear as pulses; hence daily traced contacts will appear as pulses, since the tracing rate depends on the isolation rate of symptomatic cases.\\

\begin{figure}
\begin{center}
\includegraphics[scale=0.36]{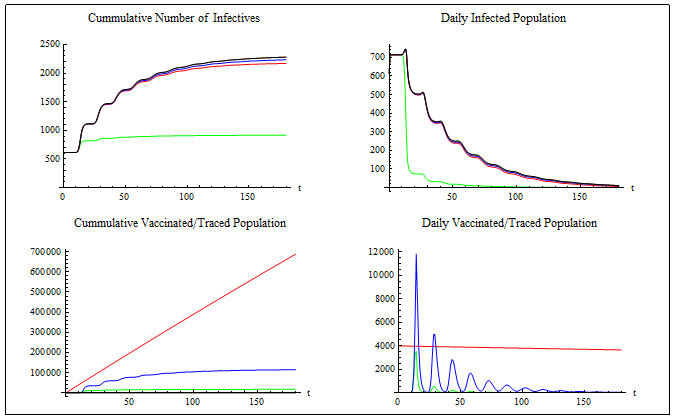}
\caption{\label{fourcases}There are four different cases included in this figure. The green curves stand for the case with effective ring vaccination when $CT = 50$, ${R_{sym}}=50\%$, and $\eta = 0.1$. The blue curves represent the case with ineffective ring vaccination when $CT =50$, ${R_{sym}}=50\%$, and ${\eta_I} = {\eta_S} = 1$, so $\eta  = \frac{{I\left( t \right) + E\left( t \right)}}{{{S_0}}}$. The red curves are for the case with mass vaccination when ${R_{sym}}=50\%$ and the mass vaccination rate is $4000$ individuals per day. The black curves correspond to the case with an isolation removal rate of ${R_{sym}}=50\%$ but no vaccination.}
\end{center}
\end{figure}

\emph{Fig.} \ref{fourcases} also gives comparison between ring and mass vaccination strategies from different aspects: (1) the effective ring vaccination strategy prevents the most cases from occurring and requires less personnel and less vaccine stockpiles; (2) the effective ring vaccination strategy does not require a large number of people to be traced everyday, and is more efficient in controlling the outbreak compared to the mass vaccination (red curves), which requires vaccination of a large number of people everyday; (3) the ineffective ring vaccination has similar results in controlling the outbreak as the mass vaccination (red curves), even though it consumes less vaccine stockpiles in total; it requires extremely heavy daily contact tracing load at times; (4) compared with the case of no vaccination (black curves), mass vaccination and ineffective ring vaccination prevent hundreds of cases from happening; (5) further simulations show that, for higher ${R_{sym}}$ values, non-vaccination could control the spread of smallpox as well as mass vaccination and ineffective ring vaccination, while in contrast effective ring vaccination attains significant improvement in reducing total number of cases.\\

We also take into consideration the fact that tracing, vaccinating, and surveilling a contact in ring vaccination demands a different level of personnel effort than in mass vaccination. So in \emph{Fig.} \ref{twocases}, we compare an effective ring vaccination of a highest daily contact tracing rate of $\mathrm{4,000}$ contacts per day, with a mass vaccination of a constant daily vaccination rate of $\mathrm{12,000}$ people per day. That is, we assume that tracing, vaccinating, and surveilling a contact requires three times more effort than the comparable mass vaccination effort. As can be seen from the simulation, the effective ring vaccination prevents more cases, and vaccinates less people than the mass vaccination, which would also help reduce serious vaccination side effects.

\begin{figure}
\begin{center}
\includegraphics[scale=0.36]{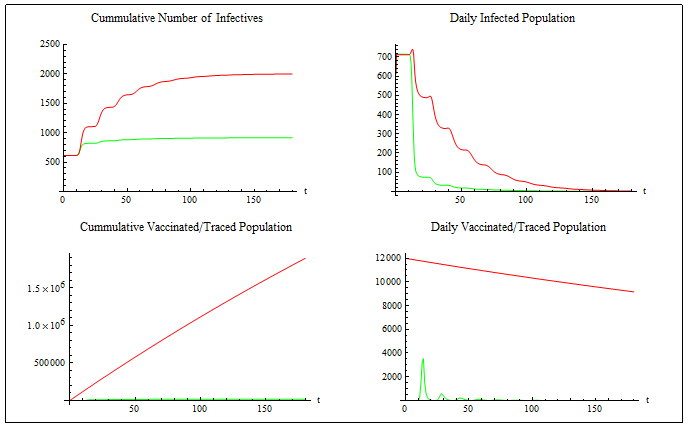}
\caption{\label{twocases} The green curves stand for the case of an effective ring vaccination when $CT = 50$, ${R_{sym}}=50\%$, and $\eta = 0.1$. The red curves are for the case of mass vaccination when ${R_{sym}}=50\%$ and the mass vaccination rate is $\mathrm{1,2000}$ individuals per day.}
\end{center}
\end{figure}


\subsection{Simulations of Ring Vaccination: Assessing Impacts of Parameters}
In order to provide guidance to public health authorities for containment and surveillance strategies, we vary the three variables, $CT$, ${\eta}$, and ${R_{sym}}$, to assess different levels of ring vaccination by evaluating: (1) total number of infected cases, and (2) the percentage of traced individuals.\\
\begin{figure}
\begin{center}
\includegraphics[scale=0.37, trim= 0mm 30mm 0mm 0mm]{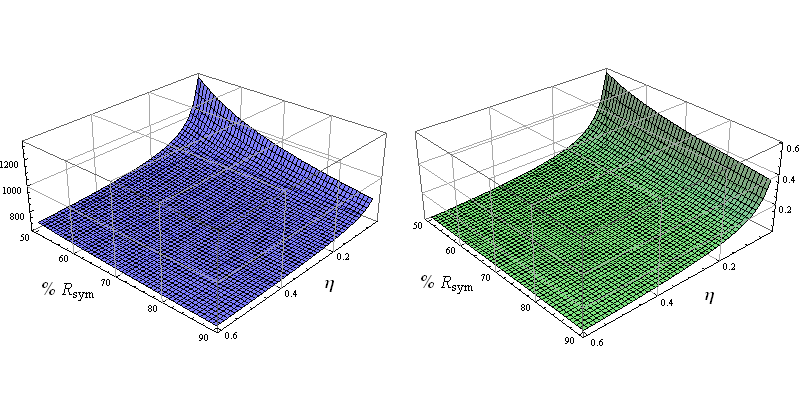}
\caption*{\footnotesize{total number of cases\quad \quad \quad\quad \quad \quad \quad \quad \quad \quad $\%$ of people traced}}
\caption{\label{Rsymeta} The blue surface is the total number of cases as a function of daily removal percentage of symptomatic cases $50\%  \leqslant {R_{sym}} \leqslant 90\% $, and the probability that a traced contact of an identified symptomatic individual is infected $0 < {\eta} \leqslant 0.6$. The number of contacts traced per identified case is $CT = 50$. The green surface is the percentage of contact traced individuals as a function of the same variables under the same settings.}
\end{center}
\end{figure}

The simulation results in \emph{Fig.} \ref{Rsymeta} are intuitively reasonable: for fixed $CT = 50$, high efficacies of both isolation and contact tracing will prevent more cases and save more personnel engaged in tracing. Increasing $\eta$ enables us to trace more infected contacts per identified case, and it in turn saves personnel efforts. When the values of ${\eta}$ and ${R_{sym}}$ are relatively small, increasing either one of them is efficient in both controlling the outbreak and relieving the burden of tracing. If we are already able to maintain the isolation and contact tracing at a relatively high level, increasing either of the two levels would require more personnel to be involved, but just improve the results slightly.

\begin{figure}
\begin{center}
\includegraphics[scale=0.36, trim = 0mm 30mm 0mm 0mm]{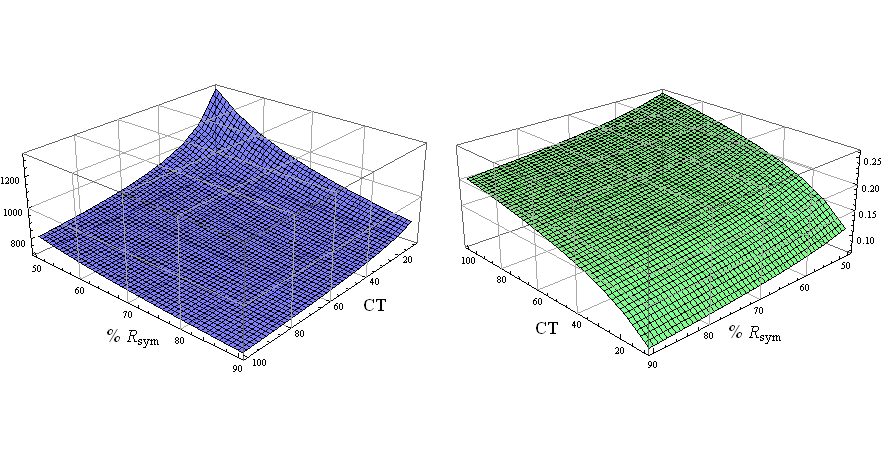}
\caption*{\footnotesize{total number of cases\quad \quad \quad\quad \quad \quad \quad \quad \quad \quad $\%$ of people traced}}
\caption{\label{RsymCT} The blue surface is total number of infected cases as a function of daily removal percentage of symptomatic cases $50\%  \leqslant {R_{sym}} \leqslant 90\% $, and the average number of contacts traced per identified case $10 \leqslant CT \leqslant 100$. The probability for tracing an infected contact is fixed as we take $\eta = 0.1$. The green surface is the percentage of contact traced population as a function of the same variables under the same settings.}
\end{center}
\end{figure}

We fix ${\eta} = 0.1$ in \emph{Fig.} \ref{RsymCT}, and notice that raising the value of $CT$ does help reduce the total number of cases, but it also boosts the demand for the number of health care workers engaged in tracing, vaccinating, and surveilling. For fixed value of $\eta$, increasing $CT$ helps reduce infections in two ways: increasing the number of infected contacts traced per identified case; and increasing the number of susceptible contacts quarantined which results in lower infection rates. Since large values of $CT$ and ${R_{sym}}$ require more public health resources, it is left to the public health officials to determine appropriate levels of contact tracing and isolation. In the case when an effective vaccination is absent, we do not expect to quarantine a great amount of susceptibles, so the decision of increasing $CT$ should be carefully made.\\

\begin{figure}
\begin{center}
\includegraphics[scale=0.35, trim = 0mm 25mm 0mm 0mm]{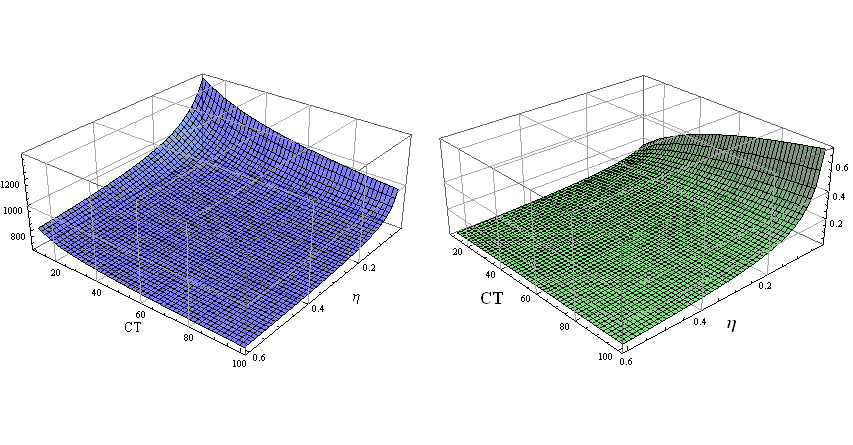}
\caption*{\footnotesize{total number of cases\quad \quad \quad\quad \quad \quad \quad \quad \quad \quad $\%$ of people traced}}
\caption{\label{CTeta} The blue surface is attack ratio as a function of the probability that a traced contact of an identified symptomatic individual is infected $0 < {\eta} \leqslant 0.6$, and the number of contacts traced per identified case $10 \leqslant CT \leqslant 100$. The daily removal percentage of symptomatic cases is fixed as ${R_{sym}} = 60\% $. The green surface is the percentage of contact traced population as a function of the same variables under the same settings.}
\end{center}
\end{figure}

In \emph{Fig.} \ref{CTeta}, we assume that the removal percentage of symptomatic cases is fixed as $\mathrm{60\%}$. $CT$ and $\eta$ represent different aspects of ring vaccination strategy, and this simulation suggests how to deploy resources assigned in tracing and control the outbreak in a more economical way. In contrast to \emph{Fig.} \ref{RsymCT}, when contact tracing is of higher efficacy in finding infected contacts, increasing the average number of contacts provided by each identified symptomatic case does not boost greatly the demand for personnel and vaccine stockpiles. So under the assumption that the tracing efficiency $\eta$ can be maintained while $CT$ is increased, tracing more contacts per case will help prevent cases and will not result in much more tracing work.
\section{Application II: Influenzas: SARS}
\label{sec:6}
In this section, we apply our model to investigate contact tracing effectiveness in control of modern influenzas and take the outbreak of SARS (severe acute respiratory syndrome) as an example. First, we use our model to simulate the SARS outbreak in Taiwan, 2003 by real data. Then we modify the length of presymptomatic period hypothetically in order to provide suggestions about the efficacy of contact tracing under different circumstances.\\

Distinct from smallpox, the conduct of surveillance and control strategies of modern influenzas (such as H1N1 and SARS) is less efficient due to the lack of timely vaccines, non-compliance of the public with quarantine, and period of asymptomatic infectiousness. It is widely believed that SARS was eradicated because of limited transmission occurring before symptom onset, but the effectiveness of contact tracing is still controversial even in the regions where high levels of contact tracing were conducted, such as Taiwan and mainland China. We apply our model to simulate the SARS outbreak in Taiwan, 2003, concentrated in the Taipei-Keelung metropolitan area (with a population of 6 million in 2003), because of the extensive available data and the high efficiency of contact tracing in Taiwan.\\

\subsection{Parameters}
In \emph{Table 3}, we list the parameters that are obtained from real data (\citealt{MMWR}) and clinical studies of SARS. In (\citealt{MMWR}), we count the total number of traced close contacts as $\mathrm{12,394}$. Of those there are only $\mathrm{33}$ confirmed to be infected. In this way we set $\eta  \approx {{\mathrm{33}} \mathord{\left/
 {\vphantom {{\mathrm{33}} {\mathrm{12394}}}} \right.
 \kern-\nulldelimiterspace} {\mathrm{12394}}}$ and $CT \approx {{\mathrm{12394}} \mathord{\left/
 {\vphantom {{\mathrm{12394}} {\mathrm{300}}}} \right.
 \kern-\nulldelimiterspace} {\mathrm{300}}}$, where $\mathrm{300}$ is approximately the total number of cases in Taipei-Keelung metropolitan area (which we refer to Taipei area for short in the following context). The setting of the baseline values has three uncertain aspects: (1) ${T_s}-{T_i}$ determines the length of the incubation period: we set that to be $\mathrm{1.0}$ day in the data fitting in subsection $6.2$. (2) We make an assumption of the shape of transmission rate function $\beta$ based on laboratory diagnosis of SARS (\citealt{carrat, Clinical, Laboratory}), and determine its value by estimating the basic reproduction number of SARS in Taiwan as about $\mathrm{4.8}$ (\citealt{Hsieh3, Galvani}). (3) We estimate the shape of the removal rate function of symptomatic individuals $\mu$ by comparing it to studies (\citealt{Brauer, Nishiura, carrat}). We estimate the maximal value of $\mu$ in \emph{Fig.} \ref{SARSinitialmu} by fitting data of SARS in the Taipei area.
\begin{table}[here]
\begin{center}
\begin{tabular}{|p{4.0cm} |p{2.2cm} |p{4.2cm}|}
\multicolumn{3}{l}{\emph{Table 3.} Baseline Parameters}\\
\hline
Parameter description & Baseline values & References\\
\hline
infectiousness threshold & ${T_i} = 5.0$ days & (\citealt{incubation1, incubation2})\\
\hline
symptoms threshold & ${T_s} = 6.0$ days & (\citealt{incubation1, incubation2})\\
\hline
length of infectious period & ${F_i} = 21.0$ days & (\citealt{incubation1, incubation2})\\
\hline
isolation rate of infectives & \emph{Fig.} \ref{SARSinitialmu} & (\citealt{Brauer, Nishiura})\\
\hline
infection transmission rate function & \emph{Fig.} \ref{SARSinitialbeta} & (\citealt{Hsieh3, Galvani, carrat, Clinical, Laboratory})\\
\hline
average number of contacts traced per identified case & $CT \approx {{\mathrm{12394}} \mathord{\left/
 {\vphantom {{\mathrm{12394}} {\mathrm{300}}}} \right.
 \kern-\nulldelimiterspace} {\mathrm{300}}}$ & (\citealt{MMWR})\\
\hline
probability for a traced contact being infected & $\eta  \approx {{\mathrm{33}} \mathord{\left/
 {\vphantom {{\mathrm{33}} {\mathrm{12394}}}} \right.
 \kern-\nulldelimiterspace} {\mathrm{12394}}}$ & (\citealt{MMWR})\\
\hline
initial susceptible population & ${S_0} = 6 \times {10^6}$ & Text\\
\hline
index cases distribution & \emph{Fig.} \ref{SARSinitiali0} & (\citealt{Hsieh3, MMWR}) \\
\hline
\end{tabular}
\end{center}

\end{table}
 \begin{figure}[!ht]
    \subfloat[Infection transmission rate function $\beta$ \label{SARSinitialbeta}]{%
      \includegraphics[width=0.315\textwidth]{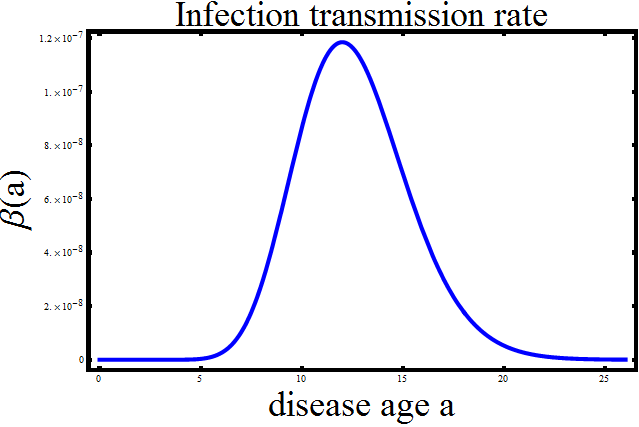}
    }
    \hfill
    \subfloat[Removal rate function $\mu$ \label{SARSinitialmu}]{%
      \includegraphics[width=0.3\textwidth]{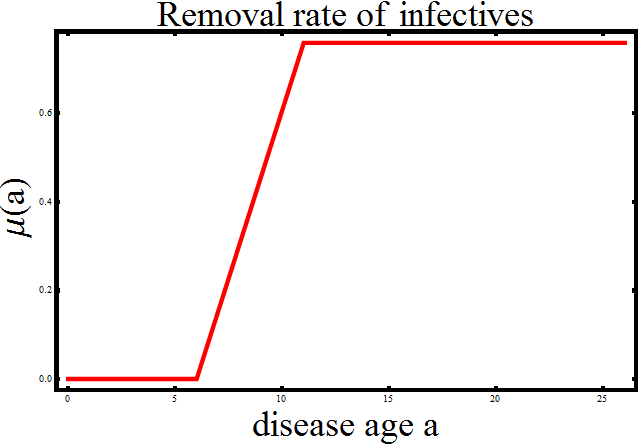}
    }
    \hfill
    \subfloat[Initial disease age distribution ${i_0}$ \label{SARSinitiali0}]{%
      \includegraphics[width=0.3\textwidth]{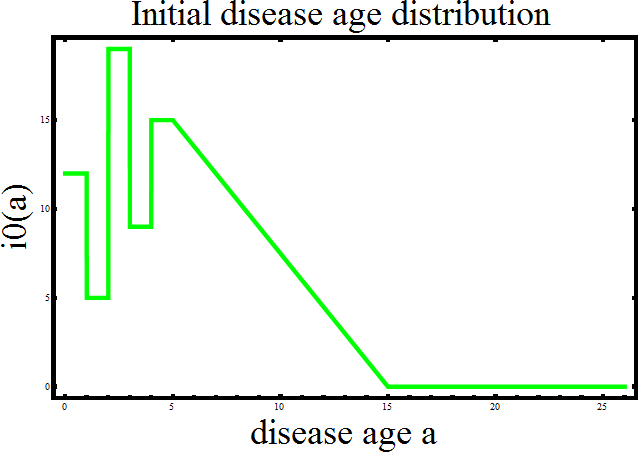}
    }
    \caption{
    \label{SARSinitialcondition} The $\beta$, $\mu$, and ${i_0}$ functions we use to fit the SARS data in Taiwan, 2003.}
  \end{figure}

\subsection{Data Fitting}
Our simulation results are shown in \emph{Fig.} \ref{datafit}. Compared to simply using isolation of symptomatic cases, enforced contact tracing can help prevent $\mathrm{40}$ individuals from being infected. That means contact tracing and quarantine of more than $\mathrm{12,000}$ people in the Taipei area enabled public health officials to discover $\mathrm{33}$ infected cases, and thus about $\mathrm{7}$ cases were avoided.\\
\begin{figure}[h]
\begin{center}
\includegraphics[scale=0.36]{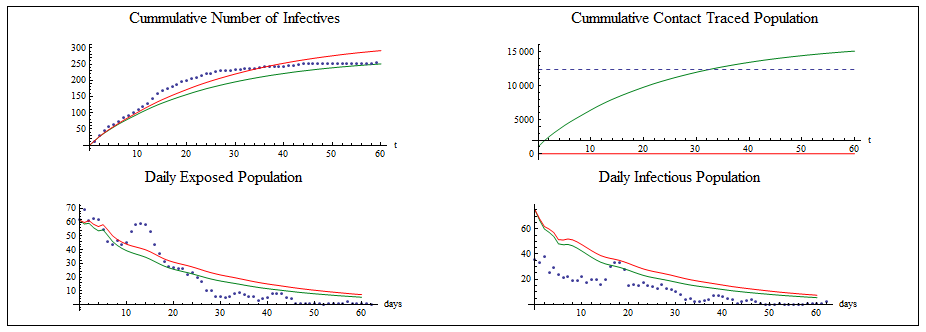}
\caption{\label{datafit} The blue dots are the real data from the SARS outbreak in Taipei-Keelung metropolitan area, 2003. The green curves represent our simulation to fit the real data with parameters in \emph{Table 3}. The red curves stand for the assumption that there is only isolation of symptomatic cases but no contact tracing implemented.}
\end{center}
\end{figure}

\subsection{Alternative Fitting Parameters}
In \emph{Fig.} \ref{presymptomatic}, we modify two of the uncertain parameters mentioned before, the length of presymptomatic period ${T_s}-{T_i}$ and the removal rate of symptomatic infectives (which is the highest value of the removal rate function $\mu$), and fit the data from Taiwan SARS. As can be inferred from \emph{Fig.} \ref{presymptomatic}, longer incubation period 
requires higher efficiency of symptomatic case isolation in order to maintain the total number of cases at the same level. As a consequent result, we observe that the number of cases that are avoided by contact tracing under all pairs of parameters in \emph{Fig.} \ref{presymptomatic} is as much as $\mathrm{40}$. Which means, we can assess the effectiveness of contact tracing in SARS, Taiwan without an accurate estimation of the incubation period since different assumptions lead to similar results.
\begin{SCfigure}
\begin{centering}
\includegraphics[scale=0.5]{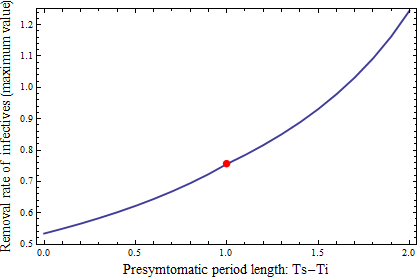}
\caption{\label{presymptomatic} The blue curve stands for pairs of parameters, incubation period length ${T_s}-{T_i}$ and maximum removal rate of symptomatic infectives, that can be applied in our model in order to fit the outbreak data of SARS, Taipei area. The red dot is the pair of parameter we used in the data fitting \emph{Fig.} \ref{datafit}. Further simulation indicates that the number of cases that are avoided by contact tracing remains the same no matter which pair of the parameters we take.}
\end{centering}
\end{SCfigure}
\section{Discussion}
\label{sec:7}
We introduce a general epidemic model that takes age since infection into consideration, to model interventions such as contact tracing, quarantine, and vaccination. Our model is applicable to different control strategies that can be formulated consistent with the hypotheses $(A.1)-(A.3)$. The global existence, uniqueness, and asymptotic behaviour of solutions are proved in the appendix. The theoretical results in Section 3 are true for non-linear age-dependent models with aging and birth functions satisfying (H.1) and (H.2), where (H.2) applies to different conditions than that in (\citealt{webbbook}). Compared to previous models with infection age as a continuous variable, we are able to incorporate some important aspects of the spread and control of an epidemic disease in the model, together with practical interpretations of the corresponding parameters. For example, (i) by considering the simplified fact that the tracing rate varies according to case identification rate, we will be able to understand one of the reasons for small fluctuations that usually appear in daily cases in many of the real data; (ii) decrease of susceptible population due to public health interventions is not negligible when the interventions successfully protect a considerable amount of people from infection.\\

In application I, we use our model to assess public health guidelines in the event of a smallpox bioterrorist attack in a large urban center. Our simulation falls into the scenario that releases of the virus take place in the community with people being unaware of them. But we can easily modify the initial conditions to simulate other initial scenarios, such as when index cases are introduced into the community by a smallpox release in another area, while the government and the public are getting prepared and in a watchful state. Our simulation results point out that with a limited amount of vaccine stockpiles and healthcare workers, ring vaccination is more efficient in preventing the disease from spreading than mass vaccination. With the initial condition in \emph{Fig.} \ref{initial}, there are not many people in the vaccination queue at the beginning of the outbreak. In this case, ring vaccination allows the early vaccine distribution for selected groups to enhance the response readiness (\citealt{CDCresponse}), and hence allows more efficient utilization of vaccination capacity.\\

We also investigate the ring vaccination effectiveness by varying the three key parameters: isolation rate, contacts traced per case, and contact tracing efficiency in finding infectives. \emph{Fig.} \ref{Rsymeta} and \emph{Fig.} \ref{RsymCT} also confirm the conclusion in (\citealt{Day}): tracing and quarantine help avert more cases when the isolation of symptomatic cases is ineffective. Additionally, we show that in the case of smallpox, the effectiveness of ring vaccination in reducing infections increases \emph{at an accelerating rate as the effectiveness of isolation diminishes\footnote{Quote from (\citealt{Day})}} when the ring vaccination efficacy is of a normal level, but it increases at an almost constant lower rate when the ring vaccination efficacy is of a higher level.\\

Our model is able to provide guidance to public health decisions to adjust current contact tracing strategies either before or during an outbreak with updated data. All the parameters in our model have good epidemiological interpretations and are easy to estimate with data from historical epidemic outbreaks. Unlike many other studies, we take into consideration susceptible population variation due to quarantine and vaccination, which usually leads to community herd immunity. So when vaccines are available, our model can be applied to provide guidelines for vaccination strategies to create herd immunity\footnote{A historical example is the "ring" vaccination strategy used to eliminate smallpox}.\\

In application II, we show that our simulation of SARS in Taiwan fits well with the observed data, and we are able to answer the question about how many cases are avoided by implementing contact tracing and quarantine in the control of the outbreak. With more precise data about each identified case, we would be able to estimate the case isolation rate accurately, and our model would enable us to determine the length of incubation period by data fitting. Therefore, our model would also be helpful in estimating important parameters and predicting transmission dynamics with evolving data during an outbreak. A great difference between the two applications we present in the paper is, contact tracing applied to control SARS in Taiwan, 2003 is not as effective as ring vaccination strategy in eradicating smallpox. The theoretical reason is that we have different settings of contact tracing parameters $CT$, ${\eta_I}$, and ${\eta_S}$ in the two applications. In reality, our settings are quite reasonable due to facts such as the severity of symptoms, availability of vaccines (since vaccination is an important way to create herd immunity), and readiness of the public health officials with the preparedness and containment plans.\\

Our simulations can guide public health officials in adjusting levels of different strategies to control the outbreak and deploy resources efficiently. With theoretical suggestions, realistic adjustments about how to deploy limited resources (such as vaccine stockpiles, healthcare workers, surveillance stations, \emph{etc.}) to meet the theoretical levels would strongly depend on the decisions of public health authorities. Furthermore, with cost-effectiveness data, we can apply optimal control methods to quantitatively determine the best control strategies.\\

Furthermore, the age-structured model possesses great potential in modeling the vaccination strategy of HIV (although the HIV vaccine does not exist so far, several encouraging studies such as (\citealt{hiv}) suggest that there is a significant hope in the future). The HIV infection has an extremely long asymptomatic period and many HIV-positive people are unaware of their infection with the virus. Thus, an active infected individual would spread the disease without even being aware of the infection, which makes the control and detection of HIV very difficult. Even though we might have an HIV vaccine available in the future, with possible serious side-effects, it might be too limited and costly to be available to everyone at the beginning. So the deployment of a limited amount of vaccine will be a serious issue. Then, because of the prolonged asymptomatic stage of HIV infection, the effects of certain intervention strategies would depend even more on the age of infection. Our model will be an advantageous starting point for such investigation.


\begin{acknowledgements}
The author would like to express her sincere appreciation to her Ph.D. advisor, Professor Glenn Webb, for his patient guidance and constant help throughout this research.
\end{acknowledgements}

\bibliographystyle{spbasic}      

\bibliography{references}   

%
%
\section{Appendix}
Theorem \ref{local existence and uniqueness} can be proved by the following three propositions:
\begin{prop}\label{equivalence}
Let (H.1), (H.2) hold, let $T > 0$, let $\phi  \in {L^1}$, and let $l \in {C_T}$. If $l$ is a solution of the integral equation:
\begin{equation}\label{equivalent integral ADP}
l\left( t \right)\left( a \right) = \left\{ \begin{matrix} 
  \left( {F\left( l \right)} \right)\left( {t - a} \right) + \int_{t - a}^t {G\left( {l\left( s \right)} \right)\left( {s + a - t} \right)ds} ,\,0 < a < t \hfill \cr 
  \phi \left( {a - t} \right) + \int_0^t {G\left( {l\left( s \right)} \right)\left( {s + a - t} \right)ds} ,\,t \le a \le \emph{M} \hfill \cr 
 \end{matrix}  \right.
\end{equation}
on $[0,T]$, then $l$ is a solution of \eqref{ADP} on $[0,T]$.
\end{prop}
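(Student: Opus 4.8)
The plan is to verify directly that any $l\in C_T$ satisfying the characteristic integral equation \eqref{equivalent integral ADP} meets each of the three requirements in the definition \eqref{ADP} of a solution of the age-dependent problem. The initial condition is immediate: setting $t=0$ in \eqref{equivalent integral ADP} puts every $a\in[0,\emph{M}]$ in the second branch, where the integral $\int_0^0$ is empty, so $l(0)(a)=\phi(a)$ for a.e.\ $a$.

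For the interior limit, the key point is that the shift $(t,a)\mapsto(t+h,a+h)$ preserves the alternative ``$a<t$ versus $a\geqslant t$'', so in forming $l(t+h)(a+h)-l(t)(a)$ one may substitute the \emph{same} branch of \eqref{equivalent integral ADP} into both terms. In the branch $a<t$ the two boundary contributions $(F(l))(t-a)$ cancel; in the branch $a\geqslant t$ the two initial contributions $\phi(a-t)$ cancel; and in either case a short computation leaves, for $a+h\leqslant\emph{M}$,
\[
l(t+h)(a+h)-l(t)(a)=\int_t^{t+h}G(l(s))(s+a-t)\,ds=\int_0^h\bigl(\tau_\sigma G(l(t+\sigma))\bigr)(a)\,d\sigma ,
\]
where $\tau_\sigma$ is the shift $(\tau_\sigma g)(a):=g(a+\sigma)$ on $L^1$, extended by zero past $\emph{M}$ (under which $\tau_\sigma$ is an $L^1$-contraction). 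Thus the difference quotient is $h^{-1}\int_0^h\tau_\sigma G(l(t+\sigma))\,d\sigma$, and it remains to show this converges in $L^1$ to $G(l(t))=\tau_0 G(l(t))$ as $h\to0^+$. This follows from the continuity of $\sigma\mapsto\tau_\sigma G(l(t+\sigma))$ from a right neighbourhood of $0$ into $L^1$: writing the difference as $\tau_\sigma[G(l(t+\sigma))-G(l(t))]+[\tau_\sigma G(l(t))-G(l(t))]$, the first term is at most $c_1(r)\|l(t+\sigma)-l(t)\|_{L^1}$ by (H.1) with $r:=\|l\|_{C_T}$ and contractivity of $\tau_\sigma$, and the second tends to $0$ by strong continuity of translations on $L^1$; and the average of a continuous $L^1$-valued function over $[0,h]$ tends to its value at $0$.

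For the boundary limit, when $0<a<h$ the first branch of \eqref{equivalent integral ADP} at time $t+h$ gives
\[
l(t+h)(a)-(F(l))(t)=\bigl[(F(l))(t+h-a)-(F(l))(t)\bigr]+\int_{t+h-a}^{t+h}G(l(s))(s+a-t-h)\,ds .
\]
Integrating $|\cdot|$ over $a\in(0,h)$: the bracket contributes at most $h\,\sup_{0\leqslant\sigma\leqslant h}|(F(l))(t+\sigma)-(F(l))(t)|\to0$ since $F(l)\in C([0,T];\mathbb{R})$, and the remaining double integral is, by Fubini, at most $hK\to0$, where $K:=\sup_{0\leqslant s\leqslant T}\|G(l(s))\|_{L^1}\leqslant c_1(r)r+\|G(0)\|_{L^1}<\infty$ by (H.1). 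This gives the second condition of \eqref{ADP}, completing the verification.

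The main obstacle is the interior limit: the cancellation along characteristics is routine bookkeeping, but the passage from the pointwise-in-$a$ identity for the difference quotient to genuine $L^1$-convergence has to be organized through the factorization above --- strong continuity of the translation group on $L^1$ together with the local Lipschitz bound (H.1) on $G$ --- rather than through any pointwise estimate of $|h^{-1}\int_t^{t+h}G(l(s))(s+a-t)\,ds-G(l(t))(a)|$. One must also account for the zero-extension convention on the small set $a\in(\emph{M}-h,\emph{M})$, which is vacuous when $\emph{M}=\infty$ and in the finite-maximal-age case leaves the extra term $h^{-1}\int_{\emph{M}-h}^{\emph{M}}|l(t)(a)|\,da$ to be dealt with.
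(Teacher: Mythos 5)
Your overall strategy --- direct verification of the three conditions in \eqref{ADP} by cancelling the boundary/initial terms along characteristics, then invoking strong continuity of translation in $L^1$, the local Lipschitz bound (H.1), and the continuity of $t\mapsto (F(l))(t)$ --- is exactly the argument the paper delegates to Proposition 2.1 of Webb's book, so in spirit you reproduce the intended proof, and your treatment of the interior limit is sound. However, your verification of the boundary condition has a gap relative to what is actually needed. As printed, the second line of \eqref{ADP} carries no factor $h^{-1}$, and in that form it is satisfied by \emph{every} $l\in C_T$ (since $\int_0^h|l(t+h)(a)|\,da\to 0$ by absolute continuity of the integral and continuity of $l$), so proving it establishes nothing; the intended condition --- the one in Webb, the one restated with the factor $h^{-1}$ in \eqref{ADP of u}, and the one implicitly used in the uniqueness argument of Proposition \ref{uniqueness of ADP solution} --- is $\lim_{h\to 0^+}h^{-1}\int_0^h|l(t+h)(a)-(F(l))(t)|\,da=0$. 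For that version your Fubini bound ``at most $hK$'' is too crude: after dividing by $h$ it only yields $K$. The repair is to track where the inner integral lives: by Fubini the double integral equals $\int_t^{t+h}\int_0^{s-t}|G(l(s))(b)|\,db\,ds\le h\,\sup_{t\le s\le t+h}\int_0^h|G(l(s))(b)|\,db$, and $\int_0^h|G(l(s))(b)|\,db\le \int_0^h|G(l(t))(b)|\,db+c_1(r)\,\|l(s)-l(t)\|_{L^1}\to 0$, so the whole expression is $o(h)$ as required.

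Second, the set $a\in(M-h,M]$ when $M<\infty$: you flag the leftover term $h^{-1}\int_{M-h}^{M}|l(t)(a)|\,da$ but do not dispose of it, and in fact it cannot be disposed of as stated --- for a solution of \eqref{equivalent integral ADP} with $t<M$ one has $l(t)(a)=\phi(a-t)+\int_0^t G(l(s))(s+a-t)\,ds$ near $a=M$, so by Lebesgue differentiation this term tends, for a.e.\ $t$, to a generally nonzero limit determined by $\phi$ and $G$. Hence the first condition of \eqref{ADP} with the literal zero-extension convention fails in general for finite $M$, and the proposition only holds if one reads that condition with the integral taken over $[0,M-h]$ (equivalently, in Webb's setting $M=\infty$). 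Declaring the term ``to be dealt with'' leaves precisely the finite-$M$ case --- the one used in both applications --- open; you should either adopt the truncated reading of \eqref{ADP} explicitly or restrict to $M=\infty$.
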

\begin{proof}
The proof is similar to that of Proposition $2.1$ in (\citealt{webbbook}), except that we can use the uniform continuity of the function $t \mapsto \left( {F\left( l \right)} \right)\left( t \right)$ from $\left[ {0,T} \right]$ to $\mathbb{R}$ for $l \in {C_T}$ instead in this proof.
\end{proof}
\begin{prop}\label{existence and uniqueness of integral ADP solution}
Let (H.1), (H.2) hold and let $r>0$. There exists $T>0$ such that if $\phi  \in {L^1}$ and ${\left\| \phi  \right\|_{{L^1}}} \le r$, then there is a unique function $l \in {C_T}$ such that $l$ is a solution of \eqref{equivalent integral ADP} on $[0,T]$.
\end{prop}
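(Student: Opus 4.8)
The plan is to recast \eqref{equivalent integral ADP} as a fixed-point equation and apply the Banach contraction mapping theorem on a closed ball of $C_T$, with $T$ chosen small in a manner that depends only on $r$. Let $K$ be the operator on $C_T$ carrying $l$ to the function whose value at $(t,a)$ is the right-hand side of \eqref{equivalent integral ADP}; then the solutions of \eqref{equivalent integral ADP} on $[0,T]$ are precisely the fixed points of $K$ in $C_T$. Before anything else I would convert the Lipschitz hypotheses into a priori bounds: writing $G(\psi) = [G(\psi) - G(0)] + G(0)$ and $F(\psi)(t) = [F(\psi)(t) - F(0)(t)] + F(0)(t)$ and invoking (H.1), (H.2) gives, for $\|\psi\|_{L^1} \le R$ and $\|\psi\|_{C_T} \le R$ respectively,
\[
\|G(\psi)\|_{L^1} \le c_1(R)R + \|G(0)\|_{L^1}, \qquad \sup_{0 \le t \le T}|F(\psi)(t)| \le c_2(R,T)R + \sup_{0 \le t \le 1}|F(0)(t)| ,
\]
the last supremum being finite because $F(0) \in C([0,1];\mathbb{R})$; since $c_1$ and $c_2$ are increasing, $c_2(R,T) \le c_2(R,1)$ for $T \le 1$, so all constants below may be taken independent of $T \in (0,1]$.

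Next I would fix $R = r+1$ and work in the complete metric space $B_R = \{\, l \in C_T : \|l\|_{C_T} \le R \,\}$. That $K$ maps $C_T$ into $C_T$ --- i.e.\ that $t \mapsto (Kl)(t)$ is continuous into $L^1$ --- follows from a standard characteristic-coordinates computation (cf.\ Proposition~2.1 of \citealt{webbbook}), the one new ingredient being the uniform continuity of $t \mapsto (F(l))(t)$ on $[0,T]$, which holds since $F(l) \in C([0,T];\mathbb{R})$. For the self-mapping property I would carry out a change of variables in the two double integrals of \eqref{equivalent integral ADP}: in the branch $0<a<t$ the inner $a$-integration runs over $(t-s,t)$ and in the branch $t \le a \le M$ over $(t,M)$, and under the substitution $b = s+a-t$ these merge into a single integration over $b \in (0,\,s+M-t) \subseteq (0,M)$, so that
\[
\|(Kl)(t)\|_{L^1} \le \|\phi\|_{L^1} + \int_0^t |(F(l))(\sigma)|\,d\sigma + \int_0^t \|G(l(s))\|_{L^1}\,ds ;
\]
by the bounds above this is $\le r + T\bigl( c_2(R,1)R + c_1(R)R + \|G(0)\|_{L^1} + \sup_{[0,1]}|F(0)| \bigr)$ for $T \le 1$, and choosing $T$ small enough, depending only on $r$, that the $T$-term does not exceed $1 = R - r$ gives $K(B_R) \subseteq B_R$.

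For the contraction property, if $l_1, l_2 \in B_R$ the $\phi$-terms cancel and the same bookkeeping yields
\[
\|(Kl_1)(t) - (Kl_2)(t)\|_{L^1} \le \int_0^t |(F(l_1))(\sigma) - (F(l_2))(\sigma)|\,d\sigma + \int_0^t \|G(l_1(s)) - G(l_2(s))\|_{L^1}\,ds ,
\]
which by (H.1) and (H.2) is at most $T\bigl( c_2(R,1) + c_1(R) \bigr)\|l_1 - l_2\|_{C_T}$; shrinking $T$ further, still depending only on $r$, so that $T\bigl( c_2(R,1) + c_1(R) \bigr) < 1$ makes $K$ a strict contraction on $B_R$, and Banach's theorem produces a unique fixed point $l \in B_R$ --- the sought solution of \eqref{equivalent integral ADP}.

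It remains to upgrade uniqueness from $B_R$ to all of $C_T$, which I would do by a Gronwall argument: if $l_1, l_2 \in C_T$ both solve \eqref{equivalent integral ADP}, set $R' = \max\{\, r, \|l_1\|_{C_T}, \|l_2\|_{C_T}\,\}$ and $g(t) = \sup_{0 \le s \le t}\|l_1(s) - l_2(s)\|_{L^1}$; the contraction estimate with $R'$ in place of $R$ gives $g(t) \le \bigl( c_2(R',T) + c_1(R') \bigr)\int_0^t g(s)\,ds$, the right-hand side being nondecreasing in $t$, whence $g \equiv 0$. The step I expect to be the main obstacle is the bookkeeping behind the two displayed estimates --- performing the change of variables so that the iterated integrals collapse to $\int_0^t \|G(l(s))\|_{L^1}\,ds$, and, more to the point of the statement, checking that the threshold $T$ depends only on $r$; the latter rests on the monotonicity of $c_1$ and $c_2$ and on estimating $c_2(R,\cdot)$ and $|F(0)|$ over the fixed interval $[0,1]$ rather than over the a priori unknown interval $[0,T]$.
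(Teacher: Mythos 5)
Your proposal is correct and follows essentially the same route as the paper: a Banach fixed-point argument for the map $K$ defined by the right-hand side of \eqref{equivalent integral ADP} on a closed ball of $C_T$, with the self-map and contraction estimates obtained by the same change of variables $b=s+a-t$ and with $T$ chosen small depending only on $r$ (the paper uses the ball $\{l: l(0)=\phi,\ \|l\|_{C_T}\le 2r\}$ and the single smallness condition $T[c_1(2r)+c_2(2r,T)+(\sup_{[0,T]}|F(0)|+\|G(0)\|_{L^1})/2r]\le 1/2$). The only substantive difference is cosmetic in effect: you upgrade uniqueness from the ball to all of $C_T$ by a direct Gronwall argument on the integral equation, whereas the paper gets global-in-$C_T$ uniqueness from its subsequent continuous-dependence result for solutions of \eqref{ADP}; both are valid.
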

\begin{proof}
We will prove it by contraction mapping theorem. We fix $r \geqslant {\left\| \phi  \right\|_{{L^1}}} > 0$ and choose $T>0$ such that 
$$T \cdot \left[ {{c_1}\left( {2r} \right) + {c_2}\left( {2r,T} \right) + \frac{{\mathop {\sup }\limits_{0 \le t \le T} \left| {\left( {F\left( 0 \right)} \right)\left( t \right)} \right| + {{\left\| {G\left( 0 \right)} \right\|}_{{L^1}}}}}{{2r}}} \right] \le \frac{1}{2}$$
Then define $S$ as a closed subset of ${C_T}$: 
$$S: = \left\{ {l \in {C_T}:l\left( 0 \right) = \phi ,\,{{\left\| l \right\|}_{{C_T}}} \le 2r} \right\}$$
We define a mapping $K$ on $S$ as following and prove that $K$ is a strict contraction from $S$ into $S$.
\begin{equation*}
\left( {K\left( l \right)} \right)\left( t \right)\left( a \right) = \left\{ \begin{matrix} 
  \left( {F\left( l \right)} \right)\left( {t - a} \right) + \int_{t - a}^t {G\left( {l\left( s \right)} \right)\left( {s + a - t} \right)ds} ,\;a.e.\,a \in \left( {0,t} \right) \hfill \cr 
  \phi \left( {a - t} \right) + \int_0^t {G\left( {l\left( s \right)} \right)\left( {s + a - t} \right)ds} ,\,a.e.\,a \in \left[ {t,\emph{M}} \right] \hfill \cr 
 \end{matrix}  \right.
\end{equation*}
we need to verify that the following conditions hold:
\begin{enumerate}[(i)]
\item
Let $l\in S$, $t \in \left[ {0,T} \right]$, then ${\left\| {\left( {K\left( l \right)} \right)\left( t \right)} \right\|_{{L^1}}} \leqslant 2r$.
\item
Let $l\in S$ and let $0 \le t < \hat t \le T$, then ${\left\| {\left( {K\left( l \right)} \right)\left( t \right) - \left( {K\left( l \right)} \right)\left( {\hat t} \right)} \right\|_{{L^1}}} \to 0$ as $\left| {\hat t - t} \right| \to 0$.
\item
Let ${l_1}$, ${l_2} \in S$, then ${\left\| {K\left( {{l_1}} \right) - K\left( {{l_2}} \right)} \right\|_{{C_T}}} \le \frac{1}{2}{\left\| {{l_1} - {l_2}} \right\|_{{C_T}}}$.
\end{enumerate}
For (i), we will only consider the case when $0 \le t \le \min \{ \emph{M},T\} $. (Otherwise, we have $t \geqslant \emph{M} \geqslant a$, then we just need to consider the expression of $\left( {K\left( l \right)} \right)\left( t \right)\left( a \right)$ for $a \in \left( {0,t} \right)$.)
\begin{align*}
& {\left\| {\left( {K\left( l \right)} \right)\left( t \right)} \right\|_{{L^1}}} = \int_0^\emph{M} {\left| {\left( {K\left( l \right)} \right)\left( t \right)\left( a \right)} \right|da} \\
 \leqslant & \int_0^t {\left| {\left( {F\left( l \right)} \right)\left( {t - a} \right) + \int_{t - a}^t {G\left( {l\left( s \right)} \right)\left( {s + a - t} \right)ds} } \right|da} \\
& + \int_t^\emph{M} {\left| {\phi \left( {a - t} \right) + \int_0^t {G\left( {l\left( s \right)} \right)\left( {s + a - t} \right)ds} } \right|da} \\
\leqslant & \int_0^t {\left| {\left( {F\left( l \right)} \right)\left( s \right)} \right|ds}  + \int_0^{\emph{M} - t} {\left| {\phi \left( s \right)} \right|ds}  + \int_0^t {\int_{t - s}^\emph{M} {\left| {G\left( {l\left( s \right)} \right)\left( {s + a - t} \right)} \right|dads} } \\
\leqslant & \int_0^t {\left| {\left( {F\left( l \right)} \right)\left( s \right) - \left( {F\left( 0 \right)} \right)\left( s \right)} \right|ds}  + \int_0^t {\left| {\left( {F\left( 0 \right)} \right)\left( s \right)} \right|ds}  + \int_0^\emph{M} {\left| {\phi \left( s \right)} \right|ds}\\
 & + \int_0^t {{{\left\| {G\left( {l\left( s \right)} \right) - G\left( 0 \right)} \right\|}_{{L^1}}}ds}  + \int_0^t {{{\left\| {G\left( 0 \right)} \right\|}_{{L^1}}}ds} \\
 \leqslant & {c_2}\left( {2r,t} \right)\int_0^t {\mathop {\sup }\limits_{0 \leqslant \tau  \leqslant s} {{\left\| {l\left( \tau  \right)} \right\|}_{{L^1}}}ds}  + t \cdot \mathop {\sup }\limits_{0 \leqslant s \leqslant t} \left| {\left( {F\left( 0 \right)} \right)\left( s \right)} \right| + {\left\| \phi  \right\|_{{L^1}}}\\
 & + {c_1}\left( {2r} \right)\int_0^t {{{\left\| {l\left( s \right)} \right\|}_{{L^1}}}ds}  + t \cdot {\left\| {G\left( 0 \right)} \right\|_{{L^1}}}\\
 \le & 2rT \left[ {{c_1}\left( {2r} \right) + {c_2}\left( {2r,T} \right) + \frac{{\mathop {\sup }\limits_{0 \le t \le T} \left| {\left( {F\left( 0 \right)} \right)\left( t \right)} \right| + {{\left\| {G\left( 0 \right)} \right\|}_{{L^1}}}}}{{2r}}} \right] + r
 \le  2r
\end{align*}
For (ii), we can just follow the same estimation in the proof of Proposition $2.2$ in (\citealt{webbbook}), except that we need to use the uniform continuity of the function $t \mapsto \left( {F\left( l \right)} \right)\left( t \right)$ from $\left[ {0,T} \right]$ to $\mathbb{R}$ for $l \in {C_T}$. (i) and (ii) imply that $K$ maps $S$ into $S$, (iii) shows that $K$ is a contraction.\\
To prove (iii), given any ${l_1}$, ${l_2} \in S$, we consider $0 \le t \le \min \{ \emph{M},T\} $. Similarly we have:
\begin{align*}
&\int_0^\emph{M} {\left| {\left( {K\left( {{l_1}} \right)} \right)\left( t \right)\left( a \right) - \left( {K\left( {{l_2}} \right)} \right)\left( t \right)\left( a \right)} \right|da} \\
\le & \int_0^t {\left| {\left( {F\left( {{l_1}} \right)} \right)\left( s \right) - \left( {F\left( {{l_2}} \right)} \right)\left( s \right)} \right|ds}  + \int_0^t {{{\left\|{G\left( {{l_1}\left( s \right)} \right) - G\left( {{l_2}\left( s \right)} \right)} \right\|}_{{L^1}}}ds} \\
\le & \int_0^t {{c_2}\left( {2r,s} \right)\mathop {\sup }\limits_{0 \leqslant \tau  \leqslant s} {{\left\| {{l_1}\left( \tau  \right) - {l_2}\left( \tau  \right)} \right\|}_{{L^1}}}ds}  + {c_1}\left( {2r} \right)\int_0^t {{{\left\| {{l_1}\left( s \right) - {l_2}\left( s \right)} \right\|}_{{L^1}}}ds} \\
 \le & T \cdot \left[ {{c_1}\left( {2r} \right) + {c_2}\left( {2r,T} \right)} \right]{\left\| {{l_1} - {l_2}} \right\|_{{C_T}}}
 \le  \frac{1}{2}{\left\| {{l_1} - {l_2}} \right\|_{{C_T}}}
\end{align*}
\end{proof}

\begin{prop}\label{uniqueness of ADP solution}
Let (H.1), (H.2) hold, let $\phi ,\hat \phi  \in {L^1}$, let $T>0$, and let $l,\hat l \in {C_T}$ such that $l$, $\hat l$ is the solution of \eqref{ADP} on $[0,T]$ for $\phi$, $\hat \phi$, respectively. Let $r>0$ such that ${\left\| l \right\|_{{C_T}}},{\left\| {\hat l} \right\|_{{C_T}}} \le r$. Then for $0 \le t \le T$,
$${\left\| {l\left( t \right) - \hat l\left( t \right)} \right\|_{{L^1}}} \leqslant {e^{\left[ {{c_1}\left( r \right) + {c_2}\left( {r,T} \right)} \right]t}}{\left\| {\phi  - \hat \phi } \right\|_{{L^1}}}$$
Hence we have the uniqueness of the local solution of \eqref{ADP}.
\end{prop}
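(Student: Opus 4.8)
The plan is to reduce the claim to a Gronwall inequality for $t\mapsto\|l(t)-\hat l(t)\|_{L^1}$, working with the integral reformulation \eqref{equivalent integral ADP} rather than with \eqref{ADP} directly. First I would note that any solution of \eqref{ADP} also satisfies \eqref{equivalent integral ADP} with the corresponding initial datum: this is the converse of Proposition \ref{equivalence}, obtained exactly as in the proof of Proposition 2.1 of \citealt{webbbook} by integrating the abstract transport equation along the characteristic lines $a-t=\mathrm{const}$, the limit conditions in \eqref{ADP} ensuring that the one-sided difference quotients integrate up to the stated formula (again using the uniform continuity of $t\mapsto(F(l))(t)$ on $[0,T]$). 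Hence both $l$ and $\hat l$ solve \eqref{equivalent integral ADP}, with data $\phi$ and $\hat\phi$ respectively.

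Next I would subtract the two integral equations, take the $L^1$-norm in $a$ over $[0,M]$, and split the integration at $a=t$. On $0<a<t$ the boundary terms produce $\int_0^t\bigl|(F(l))(t-a)-(F(\hat l))(t-a)\bigr|\,da=\int_0^t\bigl|(F(l))(s)-(F(\hat l))(s)\bigr|\,ds$, which by (H.2) is bounded by $c_2(r,T)\int_0^t\sup_{0\le\tau\le s}\|l(\tau)-\hat l(\tau)\|_{L^1}\,ds$; on $t\le a\le M$ the initial terms produce $\int_t^M|\phi(a-t)-\hat\phi(a-t)|\,da\le\|\phi-\hat\phi\|_{L^1}$; and the two $G$-integrals combine, after a change of variables and Fubini, into at most $\int_0^t\|G(l(s))-G(\hat l(s))\|_{L^1}\,ds\le c_1(r)\int_0^t\|l(s)-\hat l(s)\|_{L^1}\,ds$ by (H.1) and the uniform bound $r$ on both norms. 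Setting $w(t):=\sup_{0\le s\le t}\|l(s)-\hat l(s)\|_{L^1}$ and using that all of these bounds are nondecreasing in $t$, I obtain
\[
w(t)\le\|\phi-\hat\phi\|_{L^1}+\bigl(c_1(r)+c_2(r,T)\bigr)\int_0^t w(s)\,ds,\qquad 0\le t\le T.
\]
Gronwall's lemma then gives $w(t)\le e^{(c_1(r)+c_2(r,T))t}\|\phi-\hat\phi\|_{L^1}$, which is the asserted inequality; taking $\phi=\hat\phi$ forces $l\equiv\hat l$ and yields uniqueness of the local solution.

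The Gronwall estimate itself is routine; the step that needs genuine care is the first one, namely checking that a function $l\in C_T$ which satisfies \eqref{ADP} only in the weak sense of the two limit identities actually satisfies the integral equation \eqref{equivalent integral ADP}. This requires showing that $h\mapsto l(t+h)(a+h)$ is absolutely continuous along each characteristic with a.e.\ derivative $a\mapsto G(l(\cdot))(a)$, and carefully treating the boundary layer where $a+h>M$ under the convention $l(t+h)(a+h)=0$. As flagged in Proposition \ref{equivalence}, the only deviation from \citealt{webbbook} is that one must invoke the uniform continuity of $t\mapsto(F(l))(t)$ on $[0,T]$ in place of the pointwise-in-time Lipschitz hypothesis used there; once that is in place the remaining estimates are exactly the ones sketched above.
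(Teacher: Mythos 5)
Your route is genuinely different from the paper's. The paper never returns to the integral equation: it works directly with the two limit conditions in \eqref{ADP}, sets $V(t):=\|l(t)-\hat l(t)\|_{L^1}$ and $W(t):=\sup_{0\le s\le t}V(s)$, shows $\limsup_{h\to 0^+}h^{-1}[W(t+h)-W(t)]\le [c_1(r)+c_2(r,T)]\,W(t)$ by splitting the difference quotient of $V$ along characteristics and using (H.1)--(H.2) together with the defining limits of \eqref{ADP}, and then concludes with a differential inequality. Your alternative --- subtract the two copies of \eqref{equivalent integral ADP}, split at $a=t$, use Fubini on the $G$-terms, apply (H.1)--(H.2) and Gronwall --- is computed correctly and yields the identical constant $c_1(r)+c_2(r,T)$, so as a Gronwall argument it is fine.

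The gap is your very first step: you need that every solution of \eqref{ADP} satisfies \eqref{equivalent integral ADP}, i.e.\ the converse of Proposition \ref{equivalence}, and this is neither stated nor proved in the paper; Proposition \ref{equivalence} (the analogue of Proposition 2.1 of \citealt{webbbook}, adapted to the history-dependent $F$) gives only the implication from \eqref{equivalent integral ADP} to \eqref{ADP}, which is the direction needed for existence, so your citation does not cover what you use. Moreover, your sketch of the converse is not quite the right statement: elements of $L^1$ have no well-defined values along an individual characteristic, so ``absolute continuity along each characteristic'' must be replaced by an integrated-in-$a$ argument (a continuous $L^1$-valued function of the characteristic parameter with continuous right derivative), and the delicate point is not the ceiling $a+h>M$ but the renewal boundary $a=0$: the boundary value is prescribed only through the averaged limit $\lim_{h\to 0^+}h^{-1}\int_0^h|l(t+h)(a)-(F(l))(t)|\,da=0$ (note the $h^{-1}$, missing in the paper's display of \eqref{ADP} but clearly intended, as in \eqref{ADP of u}; without it the condition is vacuous and uniqueness would fail), so recovering the term $(F(l))(t-a)$ for $0<a<t$ requires transporting along a family of characteristics and passing to the limit in an integrated sense. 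This converse is true and provable, but it is a lemma of roughly the same weight as the estimate you are trying to avoid --- which is presumably why the paper, following \citealt{webbbook}, proves uniqueness directly from the \eqref{ADP} limits via the $V$/$W$ argument and never needs it. To complete your proposal, either prove that converse in full or switch to the paper's direct Dini-derivative argument.
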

\begin{proof}
For each $t \in \left[ {0,T} \right]$ we define two continuous functions:
\begin{enumerate}[({}1)]
\item
$V\left( t \right): = {\left\| {l\left( t \right) - \hat l\left( t \right)} \right\|_{{L^1}}} = \int_{ - t}^{\emph{M} - t} {\left| {l\left( t \right)\left( {t + c} \right) - \hat l\left( t \right)\left( {t + c} \right)} \right|dc}$
\item
$W\left( t \right): = \mathop {\sup }\limits_{0 \leqslant s \leqslant t} {\left\| {l\left( s \right) - \hat l\left( s \right)} \right\|_{{L^1}}} = \mathop {\sup }\limits_{0 \leqslant s \leqslant t} V\left( s \right)$
\end{enumerate}
Next, we estimate $\mathop {\lim \sup }\limits_{h \to {0^ + }} {h^{ - 1}}\left[ {W\left( {t + h} \right) - W\left( t \right)} \right]$ for each fixed $t \in \left[ {0,T} \right]$ separately under the following two situations:
\begin{enumerate}[(i)]
\item
$V\left( t \right) < W\left( t \right)$ (as shown in \emph{Fig.} \ref{proof}), i.e., $\exists {t_0} < t$ such that $W\left( t \right) = V\left( {{t_0}} \right) > V\left( t \right)$. Since the mapping $s \mapsto V\left( s \right)$ is continuous, we can choose sufficiently small $h  > 0$ such that $V\left( {t + \delta } \right) \leqslant V\left( {{t_0}} \right)$ for $0 \leqslant \delta  \leqslant h$, hence $W\left( {t + \delta } \right) = W\left( t \right)$ for $0 \leqslant \delta  \leqslant h$. Then $\mathop {\lim \sup }\limits_{h \to {0^ + }} {h^{ - 1}}\left[ {W\left( {t + h} \right) - W\left( t \right)} \right] = 0$.
\begin{figure}[htbp]
\begin{center}
\includegraphics[scale=0.4]{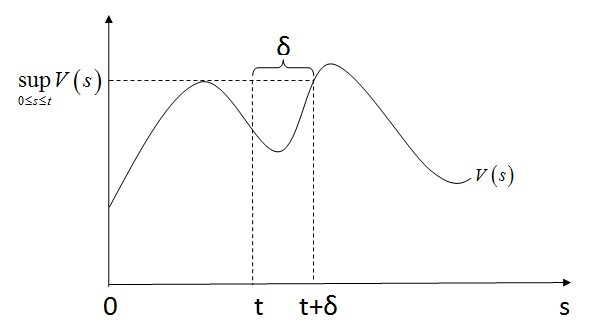}
\caption{\label{proof}}
\end{center}
\end{figure}
\item
$V\left( t \right) = W\left( t \right)$, i.e.,\ the function $V$ attains the supremum value in $\left[ {0,t} \right]$ at $t$. Then we have
\begin{flalign*}
&{h^{ - 1}}\left[ {W\left( {t + h} \right) - W\left( t \right)} \right]
=  {h^{ - 1}}\left[ {\mathop {\sup }\limits_{0 \leqslant s \leqslant t + h} V\left( s \right) - V\left( t \right)} \right]\\
= & {h^{ - 1}}\left[ {\max \left\{ {V\left( t \right),\mathop {\sup }\limits_{t \le s \le t + h} V\left( s \right)} \right\} - V\left( t \right)} \right]
\le  {h^{ - 1}}\left| {\mathop {\sup }\limits_{t \le s \le t + h} V\left( s \right) - V\left( t \right)} \right|\\
\le & {h^{ - 1}}\mathop {\sup }\limits_{t \leqslant s \leqslant t + h} \left| {V\left( s \right) - V\left( t \right)} \right|
=  \mathop {\sup }\limits_{0 \leqslant {h_0} \leqslant h} \frac{{{h_0}}}{h}h_0^{ - 1}\left| {V\left( {t + {h_0}} \right) - V\left( t \right)} \right|\\
\leqslant & \mathop {\sup }\limits_{0 \leqslant {h_0} \leqslant h} h_0^{ - 1}\left| {V\left( {t + {h_0}} \right) - V\left( t \right)} \right|
\end{flalign*}
For each ${h_0} \in \left[ {0,h} \right]$, we estimate
\begin{flalign*}
&h_0^{ - 1}\left[ {V\left( {t + {h_0}} \right) - V\left( t \right)} \right]\\
= & h_0^{ - 1}\int_{ - t - {h_0}}^{ - t} {\left| {l\left( {t + {h_0}} \right)\left( {t + {h_0} + c} \right) - \hat l\left( {t + {h_0}} \right)\left( {t + {h_0} + c} \right)} \right|dc} \\
& + h_0^{ - 1}\int_{ - t}^{\emph{M} - t - {h_0}} {\left| {l\left( {t + {h_0}} \right)\left( {t + {h_0} + c} \right) - \hat l\left( {t + {h_0}} \right)\left( {t + {h_0} + c} \right)} \right|dc} \\
& - h_0^{ - 1}\int_{ - t}^{\emph{M} - t} {\left| {l\left( t \right)\left( {t + c} \right) - \hat l\left( t \right)\left( {t + c} \right)} \right|dc} \\
\leqslant & h_0^{ - 1}\int_0^{{h_0}} {\left| {l\left( {t + {h_0}} \right)\left( a \right) - \left( {F\left( l \right)} \right)\left( t \right)} \right|da} \\
& + h_0^{ - 1}\int_0^{{h_0}} {\left| {\left( {F\left( l \right)} \right)\left( t \right) - \left( {F\left( {\hat l} \right)} \right)\left( t \right)} \right|da} \\
& + h_0^{ - 1}\int_0^{{h_0}} {\left| {\left( {F\left( {\hat l} \right)} \right)\left( t \right) - \hat l\left( {t + {h_0}} \right)\left( a \right)} \right|da} \\
& + \int_0^\emph{M} {\left| {h_0^{ - 1}\left[ {l\left( {t + {h_0}} \right)\left( {a + {h_0}} \right) - l\left( t \right)\left( a \right)} \right] - G\left( {l\left( t \right)} \right)\left( a \right)} \right|da} \\
& + \int_0^\emph{M} {\left| {G\left( {l\left( t \right)} \right)\left( a \right) - G\left( {\hat l\left( t \right)} \right)\left( a \right)} \right|da} \\
& + \int_0^\emph{M} {\left| {h_0^{ - 1}\left[ {\hat l\left( {t + {h_0}} \right)\left( {a + {h_0}} \right) - \hat l\left( t \right)\left( a \right)} \right] - G\left( {\hat l\left( t \right)} \right)\left( a \right)} \right|da} 
\end{flalign*}
\end{enumerate}
So for both of the above situations, notice that both $l$ and $\hat l$ are solutions of problem \eqref{ADP}, we can estimate as the following:
\begin{align*}
&\mathop {\lim \sup }\limits_{h \to {0^ + }} {h^{ - 1}}\left[ {W\left( {t + h} \right) - W\left( t \right)} \right]
 \leqslant  \mathop {\lim \sup }\limits_{h \to {0^ + }} \mathop {\sup }\limits_{0 \leqslant {h_0} \leqslant h} h_0^{ - 1}\left| {V\left( {t + {h_0}} \right) - V\left( t \right)} \right|\\
 \leqslant & \left| {\left( {F\left( l \right)} \right)\left( t \right) - \left( {F\left( {\hat l} \right)} \right)\left( t \right)} \right| + {\left\| {G\left( {l\left( t \right)} \right) - G\left( {\hat l\left( t \right)} \right)} \right\|_{{L^1}}}\\
 \leqslant & {c_2}\left( {r,t} \right)\mathop {\sup }\limits_{0 \leqslant s \leqslant t} {\left\| {l\left( s \right) - \hat l\left( s \right)} \right\|_{{L^1}}} + {c_1}\left( r \right){\left\| {l\left( t \right) - \hat l\left( t \right)} \right\|_{{L^1}}}\\
 \leqslant & \left[ {{c_1}\left( r \right) + {c_2}\left( {r,T} \right)} \right]W\left( t \right)
\end{align*}
So we have $W\left( t \right) \leqslant {e^{\left[ {{c_1}\left( r \right) + {c_2}\left( {r,T} \right)} \right]t}}W\left( 0 \right)$ (\citealt{book}, Theorem 1.4.1). Hence, 
$$V\left( t \right) \leqslant W\left( t \right) \leqslant {e^{\left[ {{c_1}\left( r \right) + {c_2}\left( {r,T} \right)} \right]t}}W\left( 0 \right) = {e^{\left[ {{c_1}\left( r \right) + {c_2}\left( {r,T} \right)} \right]t}}{\left\| {\phi  - \hat \phi } \right\|_{{L^1}}}$$
That is,
$${\left\| {l\left( t \right) - \hat l\left( t \right)} \right\|_{{L^1}}} \leqslant {e^{\left[ {{c_1}\left( r \right) + {c_2}\left( {r,T} \right)} \right]t}}{\left\| {\phi  - \hat \phi } \right\|_{{L^1}}}$$
\end{proof}

\begin{proof} [Proof of Theorem \ref{existence and uniqueness of positive solution} and Theorem \ref{global solution}]
The proof of Theorem \ref{existence and uniqueness of positive solution} and Theorem \ref{global solution} are similar to that of the corresponding theorems in Sections $2.3-2.4$ in (\citealt{webbbook}). We only need to switch the statement of Proposition $2.4$ in (\citealt{webbbook}) to the following Proposition \ref{semigroup property}.
\end{proof}
\begin{prop}\label{semigroup property}
Let (H.1), (H.2) hold, let $\phi \in {L^1}$, let $T>0$, and let $l \in {C_T}$ such that $l$ is a solution of \eqref{equivalent integral ADP} on ${\left[ {0,T} \right]}$. Let $\hat T > 0$ and let $\hat l \in {C_{T + T}}$ such that $\hat l\left( t \right) = l\left( t \right)$ for $t \in \left[ {0,T} \right]$, and for $t \in \left( {T,T + \hat T} \right]$, $\hat l$ satisfies the following integral equation:
$$\hat l\left( t \right)\left( a \right) = \left\{ \begin{matrix} 
  \left( {F\left( {\hat l} \right)} \right)\left( {t - a} \right) + \int_{t - a}^t {G\left( {\hat l\left( s \right)} \right)\left( {s + a - t} \right)ds} ,\;0 < a < t - T \hfill \cr 
  l\left( T \right)\left( {a - t + T} \right) + \int_T^t {G\left( {\hat l\left( s \right)} \right)\left( {s + a - t} \right)ds} ,\;t - T \leqslant a \leqslant \emph{M} \hfill \cr 
 \end{matrix}  \right.$$
Then $\hat l$ is a solution of \eqref{equivalent integral ADP} on $\left[ {0,T + \hat T} \right]$.
\end{prop}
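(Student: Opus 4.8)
The plan is to verify \eqref{equivalent integral ADP} directly by following the characteristic lines $a-t=\text{const}$ and using that $l$ already solves \eqref{equivalent integral ADP} on $[0,T]$ while $\hat l\equiv l$ there. For $t\in[0,T]$ there is nothing to do, since $\hat l(t)=l(t)$. So I would fix $t\in(T,T+\hat T]$ and recover the two branches of \eqref{equivalent integral ADP} at time $t$ from the branch-wise definition of $\hat l$, splitting the age variable into three ranges; throughout I would write $a':=a-t+T$ for the disease age at time $T$ of the cohort member who has age $a$ at time $t$.

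For a.e.\ $a\in(0,t-T)$ the defining formula for $\hat l(t)(a)$ is \emph{verbatim} the first branch of \eqref{equivalent integral ADP} at $(t,a)$, so nothing needs to be checked. For a.e.\ $a\in(t-T,t)$ one has $a'\in(0,T)$, so the first branch of \eqref{equivalent integral ADP} applied to $l$ at time $T$ gives $l(T)(a')=(F(l))(T-a')+\int_{T-a'}^{T}G(l(s))(s+a'-T)\,ds$; substituting $T-a'=t-a$ and $s+a'-T=s+a-t$ and plugging the result into the second branch of the definition of $\hat l(t)(a)$ yields $\hat l(t)(a)=(F(l))(t-a)+\int_{t-a}^{T}G(l(s))(s+a-t)\,ds+\int_{T}^{t}G(\hat l(s))(s+a-t)\,ds$. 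Since $\hat l=l$ on $[0,T]$ I may replace $G(l(s))$ by $G(\hat l(s))$ in the first integral and merge it with the last to obtain $\int_{t-a}^{t}$; and since $t-a<T$ and $F$ is non-anticipating — equal histories on $[0,s]$ force equal values at $s$, which is forced by the estimate in (H.2) — I have $(F(l))(t-a)=(F(\hat l))(t-a)$. The outcome is exactly the first branch of \eqref{equivalent integral ADP} at $(t,a)$.

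The third range is a.e.\ $a\in(t,M)$, where $a'\in(T,M)$, so the second branch of \eqref{equivalent integral ADP} for $l$ at time $T$ gives $l(T)(a')=\phi(a'-T)+\int_{0}^{T}G(l(s))(s+a'-T)\,ds$; substituting $a'-T=a-t$, feeding this into the second branch of the definition of $\hat l(t)(a)$, and merging $\int_{0}^{T}$ with $\int_{T}^{t}$ gives $\hat l(t)(a)=\phi(a-t)+\int_{0}^{t}G(\hat l(s))(s+a-t)\,ds$, which is the second branch of \eqref{equivalent integral ADP} at $(t,a)$. The boundary ages $a=t-T$ and $a=t$ form a null set, and the degenerate configurations (e.g.\ $t-T\geqslant M$, or $t>M$) are vacuous or treated identically; collecting the three ranges shows $\hat l$ solves \eqref{equivalent integral ADP} on $[0,T+\hat T]$. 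This is the analogue, for the present birth function, of Proposition $2.4$ in \citealt{webbbook}.

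I do not expect a genuine obstacle: the argument is bookkeeping along characteristics. The two points that will need care are (i) matching the three $a$-ranges at time $t$ to the correct branches of \eqref{equivalent integral ADP} at time $T$ through the shift $a\mapsto a-t+T$, and keeping the age truncation at $M$ consistent; and (ii) extracting the causality of $F$ from (H.2), together with the compatibility of the family $F:C_{T}\to C([0,T];\mathbb{R})$ under restriction, so that $(F(\hat l))(s)=(F(l))(s)$ for $s\leqslant T$. Continuity of $\hat l$ is not in question, since $\hat l\in C_{T+\hat T}$ is part of the hypothesis.
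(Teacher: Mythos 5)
Your proposal is correct and follows essentially the same route as the paper's own proof: verify the integral equation branchwise along characteristics, substitute the time-$T$ integral equation for $l(T)(a-t+T)$, merge the integrals, and use (H.2) to conclude $(F(\hat l))(s)=(F(l))(s)$ for $s\leqslant T$. Your three-way split of the age variable is the same computation (and in fact sorts the two subcases $t-T\leqslant a<t$ versus $a\geqslant t$ to the correct branches, which the paper's proof labels in the reversed order).
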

\begin{proof}
First of all, we notice that if $(H.2)$ holds, then $\left( {F\left( l \right)} \right)\left( t \right) = \left( {F\left( {\hat l} \right)} \right)\left( t \right)$ for $t \in \left[ {0,T} \right]$. Because for any $t \in \left[ {0,T} \right]$, by $(H.2)$, $\exists C > 0$ such that
$$\left| {\left( {F\left( l \right)} \right)\left( t \right) - \left( {F\left( {\hat l} \right)} \right)\left( t \right)} \right| \leqslant C\mathop {\sup }\limits_{0 \leqslant s \leqslant t} {\left\| {l\left( s \right) - \hat l\left( s \right)} \right\|_{{L^1}}} = 0$$
Then it is easy to verify that $\hat{l}$ is a solution to \eqref{equivalent integral ADP} for $t \in \left[ {0,T} \right]$. Next, we verify that $\hat{l}$ is a solution to \eqref{equivalent integral ADP} for $t \in \left( {T,T + \hat T} \right]$.
\\For $t - T \leqslant a \leqslant \emph{M}$:
\begin{enumerate}
\item
If $a \geqslant t$,
\begin{flalign*}
&\hat l\left( t \right)\left( a \right) = l\left( T \right)\left( {a - t + T} \right) + \int_T^t {G\left( {\hat l\left( s \right)} \right)\left( {s + a - t} \right)ds}\\
& = \left( {F\left( l \right)} \right)\left( {t - a} \right) + \int_{t - a}^T {G\left( {l\left( s \right)} \right)\left( {s + a - t} \right)ds}  + \int_T^t {G\left( {\hat l\left( s \right)} \right)\left( {s + a - t} \right)ds} \\
& = \left( {F\left( {\hat l} \right)} \right)\left( {t - a} \right) + \int_{t - a}^t {G\left( {\hat l\left( s \right)} \right)\left( {s + a - t} \right)ds} 
\end{flalign*}
\item
If $a < t$,
\begin{flalign*}
&\hat l\left( t \right)\left( a \right) = l\left( T \right)\left( {a - t + T} \right) + \int_T^t {G\left( {\hat l\left( s \right)} \right)\left( {s + a - t} \right)ds} \\
& = \phi \left( {a - t} \right) + \int_0^T {G\left( {l\left( s \right)} \right)\left( {s + a - t} \right)ds}  + \int_T^t {G\left( {\hat l\left( s \right)} \right)\left( {s + a - t} \right)ds} \\
& = \phi \left( {a - t} \right) + \int_0^t {G\left( {\hat l\left( s \right)} \right)\left( {s + a - t} \right)ds} 
\end{flalign*}
\end{enumerate}
For $0 \leqslant a \leqslant t-T$, the verification is straightforward.
\end{proof}

\begin{proof}[Proof of Theorem \ref{existence and uniqueness of main problem}]
By Theorem \ref{existence and uniqueness of positive solution} and Theorem \ref{global solution}, We only need to show that the aging function $G$ in $(P.1)$ and the birth function $F$ in $(P.2)$ satisfy the hypotheses $(H.1)-(H.5)$. $(H.3)$ and $(H.4)$ are obvious from $(P.1)$ and $(P.2)$, we will justify $(H.1)$ and $(H.2)$ as follows. First denote $\bar \mu : = {\left\| \mu  \right\|_{{L^\infty }}}$. Let $r>0$, for any ${\phi _1}$, ${\phi _2} \in L^1$ such that ${\left\| {{\phi _1}} \right\|_{{L^1}}}$, ${\left\| {{\phi _2}} \right\|_{{L^1}}} \le r$, we have
\begin{flalign*}
\begin{split}
&{\left\| {G\left( {{\phi _1}} \right) - G\left( {{\phi _2}} \right)} \right\|_{{L^1}}}\\
\leqslant & \int_0^\emph{M} {\mu \left( a \right)\left| {{\phi _1}\left( a \right) - {\phi _2}\left( a \right)} \right|da}  + \mathcal{T}\left( {{\phi _1}} \right)\int_0^\emph{M} {\left| {{\phi _1}\left( a \right) - {\phi _2}\left( a \right)} \right|da} \\
& + \left| {\mathcal{T}\left( {{\phi _1}} \right) - \mathcal{T}\left( {{\phi _2}} \right)} \right|\int_0^\emph{M} {\left| {{\phi _2}\left( a \right)} \right|da} \\
 \leqslant & \left( {\bar \mu}  + 2r{\left\| \mathcal{T} \right\|_\infty } \right){\left\| {{\phi _1} - {\phi _2}} \right\|_{{L^1}}}
\end{split}
\end{flalign*}
so we have $(H.1)$. In order to prove $(H.2)$,  first we notice that $t \mapsto \left( {F\left( \phi  \right)} \right)\left( t \right)$ is continuous in $\left[ {0,\infty } \right)$ for any $\phi  \in {C_{T}}$. So $F:{C_{T}} \to C\left( {\left[ {0,T} \right];{\mathbb{R}}} \right)$.\\
Next, for any ${\phi _1}$, ${\phi _2} \in {C_{T}}$ such that ${\left\| {{\phi _1}} \right\|_{{C_T}}}$, ${\left\| {{\phi _2}} \right\|_{{C_T}}} \le r$, we have $\forall t \in \left[ {0,T} \right]$,
\begin{align*}
&\left| {\left( {F\left( {{\phi _1}} \right)} \right)\left( t \right) - \left( {F\left( {{\phi _2}} \right)} \right)\left( t \right)} \right|\\
\le & {S_0}\left| {\mathcal{B}\left( {{\phi _1}\left( t \right)} \right) - \mathcal{B}\left( {{\phi _2}\left( t \right)} \right)} \right|{e^{ - \int_0^t {\mathcal{B} \left( {{\phi _1}\left( s \right)} \right) + \mathcal{Q}\left( {{\phi _1}\left( s \right)} \right)ds} }}\\
& + {S_0}\mathcal{B} \left( {{\phi _2}\left( t \right)} \right)\left| {{e^{ - \int_0^t {\mathcal{B} \left( {{\phi _1}\left( s \right)} \right) + \mathcal{Q}\left( {{\phi _1}\left( s \right)} \right)ds} }} - {e^{ - \int_0^t {\mathcal{B} \left( {{\phi _2}\left( s \right)} \right) + \mathcal{Q}\left( {{\phi _2}\left( s \right)} \right)ds} }}} \right|\\
 = & :{I_1} + {I_2}
\end{align*}
Obviously, ${I_1} \le {S_0}{\left| \mathcal{B} \right|} {\left\| {{\phi _1}\left( t \right) - {\phi _2}\left( t \right)} \right\|_{{L^1}}}{e^{rt\left( {\left| \mathcal{Q} \right| + \left| \mathcal{B} \right|} \right)}}$.\\
In order to consider ${I_2}$, notice that $\left| {{e^X} - {e^Y}} \right| \le {e^\emph{M}}\left| {X - Y} \right|$ for $\left| X \right|$, $\left| Y \right| \le \emph{M}$. Then we have
\begin{align*}
{I_2} & \leqslant {S_0}{\left| \mathcal{B} \right|} r {e^{rt\left( {\left| \mathcal{Q} \right| + \left| \mathcal{B} \right|} \right)}} \int_0^t {\left| {\mathcal{B} \left( {{\phi _1}\left( s \right)} \right) + \mathcal{Q}\left( {{\phi _1}\left( s \right)} \right) - \mathcal{B} \left( {{\phi _2}\left( s \right)} \right) - \mathcal{Q}\left( {{\phi _2}\left( s \right)} \right)} \right|ds} \\
\le & {S_0} {\left| \mathcal{B} \right|} r {e^{rt\left( {\left| \mathcal{Q} \right| + \left| \mathcal{B} \right|} \right)}} \left( {\left| \mathcal{B} \right|} +{\left| \mathcal{Q} \right|} \right)\int_0^t {{{\left\| {{\phi _1}\left( s \right) - {\phi _2}\left( s \right)} \right\|}_{{L^1}}}ds} 
\end{align*}
Hence,
\begin{flalign*}
& \left| {\left( {F\left( {{\phi _1}} \right)} \right)\left( t \right) - \left( {F\left( {{\phi _2}} \right)} \right)\left( t \right)} \right|\\
\le & {S_0}\left| \mathcal{B} \right|{e^{rt\left( {\left| \mathcal{Q} \right| + \left| \mathcal{B} \right|} \right)}}\left[ {1 + rt\left( {\left| \mathcal{B} \right| + \left| \mathcal{Q} \right|} \right)} \right]
\mathop {\sup }\limits_{0 \leqslant s \leqslant t} {\left\| {{\phi _1}\left( s \right) - {\phi _2}\left( s \right)} \right\|_{{L^1}}}
\end{flalign*}
Now we let $l \in C\left( {\left[ {0,{T_\phi }} \right);L_ + ^1} \right)$ be the positive solution of \eqref{ADP}, then for any $t \in \left[ {0,{T_\phi }} \right)$, $(H.5)$ is easy to verify:
$$\left( {F\left( l \right)} \right)\left( t \right) + \int_0^\emph{M} {G\left( {l\left( t \right)} \right)\left( a \right)da}  \leqslant {S_0}\left|\mathcal{B} \right|\int_0^\emph{M} {l\left( t \right)\left( a \right)da} $$
So by Theorem \ref{global solution}, there is a positive global solution of \eqref{ADP}.
\end{proof}

\begin{proof}[Proof of Proposition \ref{existence of related problem}]
Proof of the existence of the global positive solution is based on the following four lemmas. An easy computation can be done to testify that a solution of \eqref{u formula} is also a solution in the sense of \eqref{ADP} with aging function $\mathcal{P}$ and birth function $\mathcal{H}$.
\end{proof}
\begin{lem}\label{u lemma 1}
Let the assumptions in Proposition \ref{existence of related problem} hold, and let $r>0$. There exists $T>0$ such that if ${\left\| \phi  \right\|_{{L^1}}} \leqslant r$, then there is a unique function $u \in {C_{T, + }}$ such that $u$ is a solution of \eqref{u formula} on $\left[ {0,T} \right]$.
\end{lem}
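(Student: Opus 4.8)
The plan is to reproduce, for the transformed problem, the Banach-fixed-point argument used in Proposition \ref{existence and uniqueness of integral ADP solution}. Fix $r>0$, work in the closed subset of $C_{T,+}$
$$ S := \{ u \in C_{T,+} : u(0) = \phi,\ \|u\|_{C_T} \le 2r \}, $$
define the operator $K$ on $S$ by the right-hand side of \eqref{u formula}, and choose $T>0$ small enough that $K$ maps $S$ into $S$ and is a strict contraction. The fixed point is then the unique solution of \eqref{u formula} on $[0,T]$, and it lies in $C_{T,+}$ by construction; since the aging function $\mathcal{P}$ is trivially Lipschitz because $\mu\in L^\infty_+$, this is really just the birth-function $\mathcal{H}$ version of the earlier local theory.

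The estimates all hinge on one elementary observation: since $\mathcal{T}(L^1_+)\subseteq\mathbb{R}_+$ by (A.1), for every $u\in S$ and every $s$ the denominator $1+\int_0^s\mathcal{T}(u(\tau))\,d\tau$ is $\ge 1$, so the rescaled argument $u(s)/(1+\int_0^s\mathcal{T}(u(\tau))\,d\tau)$ stays in $L^1_+$ with norm $\le\|u(s)\|_{L^1}\le 2r$. Combined with $\mathcal{B}(0)=\mathcal{Q}(0)=0$ (A.3) and the Lipschitz/positivity bounds (A.2), this gives $0\le\mathcal{B}(\,\cdot\,)\le 2r|\mathcal{B}|$ and $0\le\mathcal{Q}(\,\cdot\,)\le 2r|\mathcal{Q}|$ on those arguments; the exponential factor in $\mathcal{H}$ is $\le 1$ because its exponent is nonpositive, and the factor $1+\int_0^t\mathcal{T}(u(s))\,ds$ is $\le 1+2rT\|\mathcal{T}\|_\infty$. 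Hence $0\le(\mathcal{H}(u))(t)\le S_0|\mathcal{B}|\,2r\,(1+2rT\|\mathcal{T}\|_\infty)=:C_0(r,T)$. The self-mapping bound then follows as in Proposition \ref{existence and uniqueness of integral ADP solution}:
$$ \|(K(u))(t)\|_{L^1}\le\int_0^{\min\{t,\emph{M}\}}|(\mathcal{H}(u))(t-a)|\,da+\|\phi\|_{L^1}\le T\,C_0(r,T)+r\le 2r $$
for $T$ small, and positivity of $(K(u))(t)$ is immediate from $S_0>0$, $\mu\ge 0$, $\phi\ge 0$ and the nonnegativity of $\mathcal{H}(u)$. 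Continuity of $t\mapsto(K(u))(t)$ in $L^1$ is obtained by the same splitting used for condition (ii) in that proof, once one notes that $t\mapsto(\mathcal{H}(u))(t)$ is continuous (it is built from the continuous path $u$, the continuous functionals $\mathcal{B},\mathcal{Q},\mathcal{T}$ and continuous $t$-integrals) and that $\mu\in L^\infty_+$ makes the weight $e^{-\int_{a-t}^{a}\mu(b)\,db}$ continuous in $t$.

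For the contraction estimate the main task — the part that takes the most bookkeeping — is a $C_T$-Lipschitz bound on $\mathcal{H}$: for $u_1,u_2\in S$,
$$ |(\mathcal{H}(u_1))(t)-(\mathcal{H}(u_2))(t)|\le C_1(r,T)\sup_{0\le s\le t}\|u_1(s)-u_2(s)\|_{L^1}. $$
I would first show the rescaling map is $C_T$-Lipschitz on $S$: writing $D_i(s)=1+\int_0^s\mathcal{T}(u_i(\tau))\,d\tau\ge 1$, one has $|D_1(s)-D_2(s)|\le s\|\mathcal{T}\|_\infty\sup_{\tau\le s}\|u_1(\tau)-u_2(\tau)\|_{L^1}$ and, using $|1/D_1-1/D_2|=|D_1-D_2|/(D_1D_2)\le|D_1-D_2|$,
$$ \Big\|\frac{u_1(s)}{D_1(s)}-\frac{u_2(s)}{D_2(s)}\Big\|_{L^1}\le\|u_1(s)-u_2(s)\|_{L^1}+2r\,|D_1(s)-D_2(s)|\le(1+2rT\|\mathcal{T}\|_\infty)\sup_{\tau\le s}\|u_1(\tau)-u_2(\tau)\|_{L^1}. $$
Feeding this into the Lipschitz bounds for $\mathcal{B}$ and $\mathcal{Q}$ and handling the difference of exponentials with $|e^X-e^Y|\le e^{\max\{|X|,|Y|\}}|X-Y|$ exactly as in the proof of Theorem \ref{existence and uniqueness of main problem} (the exponents here bounded by $2rT(|\mathcal{B}|+|\mathcal{Q}|)$), one gets the displayed bound with $C_1(r,T)$ bounded as $T\downarrow 0$. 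Then, as in Proposition \ref{existence and uniqueness of integral ADP solution}(iii),
$$ \|(K(u_1))(t)-(K(u_2))(t)\|_{L^1}\le\int_0^{\min\{t,\emph{M}\}}|(\mathcal{H}(u_1))(t-a)-(\mathcal{H}(u_2))(t-a)|\,da\le T\,C_1(r,T)\,\|u_1-u_2\|_{C_T}, $$
and choosing $T$ so that simultaneously $T\,C_0(r,T)\le r$ and $T\,C_1(r,T)\le\tfrac12$ makes $K$ a strict contraction of the complete metric space $S$. Banach's fixed point theorem yields the unique $u\in S$ solving \eqref{u formula} on $[0,T]$; uniqueness within all of $C_{T,+}$ follows either by a Gronwall estimate as in Proposition \ref{uniqueness of ADP solution} or by the standard ``uniqueness on a shorter interval, then bootstrap'' argument.

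The main obstacle is therefore not conceptual but a matter of carefully assembling the bound $C_0(r,T)$ and the $C_T$-Lipschitz constant $C_1(r,T)$ through the product-and-exponential structure of $\mathcal{H}$. The one genuinely structural point that must not be overlooked is that positivity of $u$ is exactly what keeps the denominators $1+\int_0^s\mathcal{T}(u(\tau))\,d\tau$ bounded below by $1$; this is precisely why the lemma is stated over the positive cone $C_{T,+}$ rather than over all of $C_T$.
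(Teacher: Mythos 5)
Your proposal is correct and follows essentially the same route as the paper: the paper's proof of this lemma is exactly the contraction-mapping argument on the closed set $S=\{u\in C_{T,+}:u(0)=\phi,\ \|u\|_{C_T}\le 2r\}$, carried out by adapting Proposition \ref{existence and uniqueness of integral ADP solution} to the birth function $\mathcal{H}$, which is what you do (the paper merely leaves the bounds on $\mathcal{H}$ and the Lipschitz estimate implicit). Your explicit observation that positivity keeps the denominator $1+\int_0^s\mathcal{T}(u(\tau))\,d\tau\ge 1$ is precisely the point that makes the paper's sketched argument work, so the two proofs coincide in substance.
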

\begin{proof}
We can choose a sufficiently small $T>0$ and define 
$$S: = \left\{ {u \in {C_{T, + }}:u\left( t \right) = \phi ,{{\left\| u \right\|}_{{C_T}}} \leqslant 2r} \right\}$$. 
An argument which is similar to that of Proposition \ref{existence and uniqueness of integral ADP solution} can be used to show that a mapping defined by \eqref{u formula} from $S$ into $S$ is a strict contraction. Thus, the unique fixed point is a positive solution of \eqref{u formula} in ${C_{T, + }}$.
\end{proof}

\begin{lem}\label{u lemma 2}
Let the assumptions in Proposition \ref{existence of related problem} hold, let $T>0$, and let $u \in {C_{T, + }}$ such that $u$ is a solution of \eqref{u formula} on $\left[ {0,T} \right]$. Let $\hat T>0$ and let $\hat u \in {C_{T+{\hat T}, + }}$ such that $\hat u\left( t \right) = u\left( t \right)$ for $t \in \left[ {0,T} \right]$, and for $t \in \left( {T,T + \hat T} \right]$, $\hat u$ satisfies the integral equation:
\begin{equation*}
  \hat u\!\left( t \right)\!\left( a \right) = \left\{ \begin{matrix} 
  \left( {\mathcal{H}\left( \hat u \right)} \right)\!\left( {t - a} \right){e^{ - \int_0^a {\mu \left( b \right)db} }},\;0< a < t-T \hfill \cr 
  u\left( T \right)\!\left( {a - t + T} \right){e^{ - \int_{a - t + T}^a {\mu \left( b \right)db} }},\;t-T \leq a \leq \emph{M}\hfill \cr 
 \end{matrix}  \right.
\end{equation*}
where $\mathcal{H}$ is as stated in Proposition \ref{existence of related problem}. Then $\hat u$ is a solution of \eqref{u formula} on $\left[ {0,T + \hat T} \right]$.
\end{lem}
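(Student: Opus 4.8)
The plan is to mirror the concatenation argument of Proposition~\ref{semigroup property}, now carried through the explicit representation \eqref{u formula}. The key preliminary observation is that the birth operator $\mathcal{H}$ is \emph{non-anticipating}: inspecting the formula defining $\mathcal{H}$, the value $(\mathcal{H}(v))(s)$ involves $v$ only on $[0,s]$ --- through $v(s)$ itself and through integrals taken over $[0,s]$ of $\mathcal{T}(v(\cdot))$ and of $\mathcal{B},\mathcal{Q}$ evaluated at the rescaled functions. Hence, since $\hat u(t)=u(t)$ for $t\in[0,T]$ by hypothesis, we get $(\mathcal{H}(\hat u))(s)=(\mathcal{H}(u))(s)$ for every $s\in[0,T]$. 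Along the way one records that $\mathcal{H}(\hat u)$ is well defined on all of $[0,T+\hat T]$, since $\mathcal{T}(L_+^1)\subseteq\mathbb{R}_+$ forces the denominator $1+\int_0^s\mathcal{T}(\hat u(\tau))\,d\tau\geqslant 1$, so the rescaled arguments of $\mathcal{B},\mathcal{Q},\mathcal{T}$ stay in $L_+^1$.

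Granting this, for $t\in[0,T]$ the verification is immediate: $u$ satisfies \eqref{u formula} there, and because $t-a\leqslant t\leqslant T$ one may replace $(\mathcal{H}(u))(t-a)$ by $(\mathcal{H}(\hat u))(t-a)$ in the branch $a<t$ (the branch $a\geqslant t$ being unchanged); since $\hat u(t)=u(t)$, this is exactly \eqref{u formula} for $\hat u$ on $[0,T]$.

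For $t\in(T,T+\hat T]$ I would split the age variable into three ranges and reassemble the two branches of \eqref{u formula}. For $0<a<t-T$ the defining formula for $\hat u$ already reads $(\mathcal{H}(\hat u))(t-a)\,e^{-\int_0^a\mu(b)\,db}$, which is the $a<t$ branch of \eqref{u formula} verbatim. For $t-T\leqslant a<t$ one has $0\leqslant a-t+T<T$, so applying \eqref{u formula} to $u$ at time $T$ (its first branch, since $a-t+T<T$) rewrites $u(T)(a-t+T)=(\mathcal{H}(u))(t-a)\,e^{-\int_0^{a-t+T}\mu(b)\,db}$; using $(\mathcal{H}(u))(t-a)=(\mathcal{H}(\hat u))(t-a)$ (valid since $t-a\leqslant T$) and the identity $e^{-\int_0^{a-t+T}\mu(b)\,db}\,e^{-\int_{a-t+T}^{a}\mu(b)\,db}=e^{-\int_0^{a}\mu(b)\,db}$ recovers the $a<t$ branch. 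For $a\geqslant t$ one has $a-t+T\geqslant T$, so \eqref{u formula} for $u$ at time $T$ (its second branch) gives $u(T)(a-t+T)=\phi(a-t)\,e^{-\int_{a-t}^{a-t+T}\mu(b)\,db}$, and the same exponential composition yields the $a\geqslant t$ branch. This exhausts $[0,\emph{M}]$: when $t>T$ the inequality $t-T\leqslant a$ is automatic once $a\geqslant t$, and otherwise $a$ lies in one of the first two ranges.

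Each step is individually routine; the only things requiring genuine care are the interval bookkeeping --- keeping straight which branch of \eqref{u formula} at time $T$ governs each age subinterval and checking that the exponential factors compose correctly --- and the non-anticipating property of $\mathcal{H}$, which is what legitimizes substituting $\mathcal{H}(\hat u)$ for $\mathcal{H}(u)$ on $[0,T]$. I expect the latter to be the conceptual crux, even though it is visibly true from the formula defining $\mathcal{H}$.
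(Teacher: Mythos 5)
Your argument is correct and is essentially the paper's own proof: the paper simply cites the concatenation argument of Proposition \ref{semigroup property}, and your write-up is the natural adaptation of that argument to the explicit representation \eqref{u formula} — the non-anticipating property of $\mathcal{H}$ playing the role that $(F(l))(t)=(F(\hat l))(t)$ on $[0,T]$ played there, plus the composition of the exponential survival factors. No gaps; you have merely made explicit the bookkeeping the paper leaves to the reader.
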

\begin{proof}
The proof is similar to that of Proposition \ref{semigroup property}.
\end{proof}
\begin{lem}\label{u lemma 3}
Let the assumptions in Proposition \ref{existence of related problem} hold, let $\phi  \in L_ + ^1$, and let $u$ be the solution of \eqref{u formula} on its maximal interval of existence $\left[ {0,{T_\phi }} \right)$. If ${T_\phi } < \infty $, then $\mathop {\lim \sup }\limits_{t \to {T_\phi }^ - } {\left\| {u\left( t \right)} \right\|_{{L^1}}} = \infty $.
\end{lem}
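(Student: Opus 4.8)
The plan is to argue by contraposition: assuming $\limsup_{t \to T_\phi^-} \|u(t)\|_{L^1} < \infty$, I will extend the solution past $T_\phi$, contradicting the maximality of $[0,T_\phi)$. First I would upgrade the $\limsup$ bound to a genuine bound: if the $\limsup$ is finite, there is $\delta \in (0,T_\phi)$ and a constant bounding $\|u(t)\|_{L^1}$ on $(T_\phi - \delta, T_\phi)$; since $u \in C_{T_\phi - \delta, +}$ is continuous on the compact interval $[0,T_\phi - \delta]$, it is bounded there as well, so $r := \sup_{0 \le t < T_\phi} \|u(t)\|_{L^1} < \infty$.

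Next I would invoke Lemma \ref{u lemma 1} with this value of $r$: there is a local existence time $\tau = \tau(r) > 0$, depending only on $r$, such that the integral equation \eqref{u formula} admits a (unique, positive) solution on $[0,\tau]$ for any initial datum of $L^1$-norm at most $r$. If $\tau \ge T_\phi$ we are already done, since by the uniqueness part of Lemma \ref{u lemma 1} that solution agrees with $u$ on $[0,T_\phi)$ and extends it to $[0,\tau] \supsetneq [0,T_\phi)$; so assume $\tau < T_\phi$ and set $t_0 := T_\phi - \tau/2 \in (0, T_\phi)$. Because $\|u(t_0)\|_{L^1} \le r$, the contraction-mapping argument behind Lemma \ref{u lemma 1} (i.e.\ the argument of Proposition \ref{existence and uniqueness of integral ADP solution}), now run from starting time $t_0$ with $u(t_0)$ in place of $\phi$ — the shifted integral equation having exactly the form appearing in Lemma \ref{u lemma 2}, with $u(t_0)$ playing the role of the initial datum — produces $\hat u \in C_{t_0 + \tau, +}$ with $\hat u(t) = u(t)$ for $t \in [0,t_0]$ and $\hat u$ satisfying the shifted integral equation on $(t_0, t_0 + \tau]$.

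Finally I would apply Lemma \ref{u lemma 2} (with its $T$ taken to be $t_0$ and its $\hat T$ taken to be $\tau$): the concatenation $\hat u$ is a solution of \eqref{u formula} on $[0, t_0 + \tau]$. But $t_0 + \tau = T_\phi + \tau/2 > T_\phi$, contradicting the definition of $[0,T_\phi)$ as the maximal interval of existence. Hence the assumption was false and $\limsup_{t \to T_\phi^-} \|u(t)\|_{L^1} = \infty$.

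I expect the main obstacle to be the middle step: making precise that the local existence statement of Lemma \ref{u lemma 1} is truly uniform in the starting time, namely that the fixed-point argument may be re-run from an arbitrary time $t_0$ with an existence length depending only on the bound $r$ on $\|u(t_0)\|_{L^1}$, and that the resulting $\hat u$ is continuous across $t_0$ (it equals $u(t_0)$ there by construction) so that the hypotheses of Lemma \ref{u lemma 2} are met. The reason this goes through is that the factor $e^{-\int_0^a \mu(b)\,db} \in (0,1]$ is harmless and the global Lipschitz bounds on $\mathcal{B},\mathcal{Q},\mathcal{T}$ from (A.1)--(A.3), together with $\mathcal{B}(0)=\mathcal{Q}(0)=0$, make the constants in the contraction estimate for $\mathcal{H}$ depend only on $r$, $S_0$, $\bar\mu$ and the length $\tau$ — not on $t_0$. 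Once this uniformity is established, the contradiction with maximality is immediate.
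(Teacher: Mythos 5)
Your proposal is correct and is essentially the paper's own argument, which simply defers to the proof of Theorem 2.3 in Webb's book: a finite $\limsup$ gives a uniform bound $r$ on $[0,T_\phi)$, the contraction argument gives a local existence length depending only on that bound, and the continuation property (Lemma \ref{u lemma 2}) extends the solution past $T_\phi$, contradicting maximality. The only slight imprecision is your claim that the contraction constants are independent of $t_0$ -- they involve the accumulated integrals $\int_0^{t_0}\mathcal{T}(u(s))\,ds$ and the time-dependent Lipschitz bounds -- but since $t_0 < T_\phi < \infty$ these are uniformly bounded, so a uniform-in-$t_0$ existence length still follows and the argument goes through.
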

\begin{proof}
The proof is similar to that of Theorem $2.3$ in (\citealt{webbbook}).
\end{proof}

\begin{lem}\label{u lemma 4}
Let the assumptions in Proposition \ref{existence of related problem} hold, let $\phi  \in L_ + ^1$, and let $u$ be the positive solution of \eqref{u formula} on its maximal interval of existence $\left[ {0,{T_\phi }} \right)$. Then $\exists \omega  \in \mathbb{R}$, and for $t \in \left[ {0,{T_\phi }} \right)$, 
${\left\| {u\left( t \right)} \right\|_{{L^1}}} \leqslant {\left\| \phi  \right\|_{{L^1}}}{e^{\omega t}}$.
So ${T_\phi } = \infty $, there is a global positive solution of \eqref{u formula}.
\end{lem}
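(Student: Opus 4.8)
The plan is to establish an \emph{a priori} bound on $\|u(t)\|_{L^1}$ that grows at most exponentially, and then to eliminate the possibility $T_\phi<\infty$ by invoking the blow-up alternative of Lemma~\ref{u lemma 3}. Since $u$ is the \emph{positive} solution, $u(t)\in L_+^1$ for $0\le t<T_\phi$, so $\|u(t)\|_{L^1}=\int_0^{\emph{M}}u(t)(a)\,da$, and I will estimate this quantity directly from the representation formula~\eqref{u formula}.

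The first step is to control the birth term $\mathcal{H}(u)$. Fix $s\in[0,T_\phi)$ and set $D_s:=1+\int_0^s\mathcal{T}(u(\tau))\,d\tau$. By (A.1), $\mathcal{T}(L_+^1)\subseteq\mathbb{R}_+$, so $D_s\ge 1>0$ and the rescaled function $u(s)/D_s$ lies in $L_+^1$. By (A.2)--(A.3), $\mathcal{B}$ is globally Lipschitz with constant $|\mathcal{B}|$ and $\mathcal{B}(0)=0$, so $0\le\mathcal{B}(\psi)=|\mathcal{B}(\psi)-\mathcal{B}(0)|\le|\mathcal{B}|\,\|\psi\|_{L^1}$ for every $\psi\in L_+^1$; and since $\mathcal{B},\mathcal{Q}$ are nonnegative on $L_+^1$, the exponential factor in the definition of $\mathcal{H}$ is at most $1$. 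The scalar factor $D_s$ cancels, and these three facts combine to give
\begin{equation*}
0\le(\mathcal{H}(u))(s)\le S_0\cdot\frac{|\mathcal{B}|\,\|u(s)\|_{L^1}}{D_s}\cdot D_s\cdot 1 = S_0\,|\mathcal{B}|\,\|u(s)\|_{L^1}.
\end{equation*}

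Now I substitute this bound into~\eqref{u formula}. Because $\mu\ge 0$, every weight $e^{-\int\mu}$ appearing there is at most $1$; splitting the age integral at $a=t$ (and keeping only the first branch in the case $t>\emph{M}$, which changes nothing in what follows) gives, with the change of variables $s=t-a$,
\begin{equation*}
\|u(t)\|_{L^1}\le\int_0^t(\mathcal{H}(u))(s)\,ds+\int_0^{\emph{M}-t}\phi(s)\,ds\le S_0\,|\mathcal{B}|\int_0^t\|u(s)\|_{L^1}\,ds+\|\phi\|_{L^1}.
\end{equation*}
Gronwall's inequality then yields $\|u(t)\|_{L^1}\le\|\phi\|_{L^1}e^{\omega t}$ for all $t\in[0,T_\phi)$, with $\omega:=S_0\,|\mathcal{B}|\in\mathbb{R}$. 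In particular $\sup_{0\le t\le T}\|u(t)\|_{L^1}<\infty$ for every finite $T$, so if $T_\phi$ were finite we would contradict Lemma~\ref{u lemma 3}; hence $T_\phi=\infty$ and the positive solution of~\eqref{u formula} is global. (Alternatively, since $\int_0^{\emph{M}}\mathcal{P}(u(t))(a)\,da=-\int_0^{\emph{M}}\mu(a)u(t)(a)\,da\le 0$, the displayed bound on $(\mathcal{H}(u))(t)$ shows that hypothesis (H.5) holds with this $\omega$, and Theorem~\ref{global solution} may be quoted instead.)

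The estimates above are routine; the only place needing care is the bookkeeping in the birth-term bound — verifying that the argument $u(s)/D_s$ of $\mathcal{B}$ really does belong to the positive cone so that the inequality $\mathcal{B}(\psi)\le|\mathcal{B}|\,\|\psi\|_{L^1}$ applies, that the factor $D_s$ cancels exactly, and that the $\mathcal{B}$--$\mathcal{Q}$ exponential is $\le 1$ by nonnegativity — together with the minor case split ($a<t$ versus $a\ge t$, and $t\le\emph{M}$ versus $t>\emph{M}$) in reading off $\|u(t)\|_{L^1}$ from~\eqref{u formula}.
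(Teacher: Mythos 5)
Your proof is correct and follows essentially the same route as the paper's: bound the birth term by $(\mathcal{H}(u))(s)\le S_0|\mathcal{B}|\,\|u(s)\|_{L^1}$ (using positivity, $\mathcal{B}(0)=0$, the cancellation of the factor $1+\int_0^s\mathcal{T}(u(\tau))\,d\tau$, and the exponential being $\le 1$), drop the $e^{-\int\mu}$ weights, apply Gronwall to get $\|u(t)\|_{L^1}\le\|\phi\|_{L^1}e^{S_0|\mathcal{B}|t}$, and conclude $T_\phi=\infty$ via the blow-up alternative of Lemma \ref{u lemma 3}. The only difference is that you make explicit the cancellation and positivity bookkeeping that the paper leaves implicit, which is a faithful elaboration rather than a different argument.
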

\begin{proof}
For $t \in \left[ {0,{T_\phi }} \right)$, we estimate as follows (here we assume $t \leqslant \emph{M}$, then $t>\emph{M}$ leads to a simpler case):
\begin{flalign*}
&{\left\| {u\left( t \right)} \right\|_{{L^1}}} = \int_0^\emph{M} {u\left( t \right)\!\left( a \right)da} 
\leqslant  \int_0^t {\left( {\mathcal{H}\left( u \right)} \right)\left( {t - a} \right)da}  + \int_t^\emph{M} {\phi \left( {a - t} \right)da} \\
\leqslant & \int_0^t {{S_0}\left| \mathcal{B} \right|{{\left\| {u\left( {t - a} \right)} \right\|}_{{L^1}}}da}  + \int_t^\emph{M} {\phi \left( {a - t} \right)da} \\
\leqslant & {S_0}\left| \mathcal{B} \right|\int_0^t {{{\left\| {u\left( s \right)} \right\|}_{{L^1}}}ds}  + \int_0^{\emph{M} - t} {\phi \left( s \right)ds}\\
\leqslant & {S_0}\left| \mathcal{B} \right|\int_0^t {{{\left\| {u\left( s \right)} \right\|}_{{L^1}}}ds}  + {\left\| \phi  \right\|_{{L^1}}}
\end{flalign*}
Then by Gronwall's Inequality, we have ${\left\| {u\left( t \right)} \right\|_{{L^1}}} \leqslant {\left\| \phi  \right\|_{{L^1}}}{e^{{S_0}\left| \mathcal{B} \right|t}}$. Then by Lemma \ref{u lemma 3}, ${T_\phi } = \infty $, hence there is a positive global solution to \eqref{u formula}.
\end{proof}


\begin{proof}[Proof of Theorem \ref{solution in form of u}]
Let $T>0$, we assume that $u \in C\left( {\left[ {0,T} \right];L_ + ^1} \right)$ satisfies \eqref{u formula} for $t \in \left[ {0,T} \right]$, then $u$ satisfies the following conditions:
\begin{flalign}\label{ADP of u}
\begin{split}
&\mathop {\lim }\limits_{h \to {0^ + }} \int_0^\emph{M} {\left| {{h^{ - 1}}\left[ {u\left( {t + h} \right)\left( {a + h} \right) - u\left( t \right)\left( a \right)} \right] + \mu \left( a \right)u\left( t \right)\left( a \right)} \right|da}  = 0\\
&\mathop {\lim }\limits_{h \to {0^ + }} {h^{ - 1}}\int_0^h {\left| {u\left( {t + h} \right)\left( a \right) - \left( {\mathcal{H}\left( u \right)} \right)\left( t \right)} \right|da = 0}\\
&u\left( 0 \right) = \phi
\end{split}
\end{flalign}
We will show that $l \in C\left( {\left[ {0,T} \right];L_ + ^1} \right)$ as obtained from \eqref{solution formula}, satisfies \eqref{ADP} with the aging function $G$ in (P.1) and the birth function $F$ in (P.2). For the first condition in \eqref{ADP}, we have the following estimation:
\begin{flalign*}
&{h^{ - 1}}\left[ {l\left( {t + h} \right)\left( {a + h} \right) - l\left( t \right)\left( a \right)} \right] + \mu \left( a \right)l\left( t \right)\left( a \right) + \mathcal{T}\left( {l\left( t \right)} \right)l\left( t \right)\left( a \right)\\
= & \frac{{{h^{ - 1}}\left[ {u\left( {t + h} \right)\left( {a + h} \right) - u\left( t \right)\left( a \right)} \right] + \mu \left( a \right)u\left( t \right)\left( a \right)}}{{1 + \int_0^{t + h} {\mathcal{T}\left( {u\left( s \right)} \right)ds} }}\\
+ & \left[ {\frac{{\mu \left( a \right)u\left( t \right)\left( a \right)}}{{1 + \int_0^t {\mathcal{T}\left( {u\left( s \right)} \right)ds} }} - \frac{{\mu \left( a \right)u\left( t \right)\left( a \right)}}{{1 + \int_0^{t + h} {\mathcal{T}\left( {u\left( s \right)} \right)ds} }}} \right]\\
+ &  {{h^{ - 1}}\left[ {\frac{{u\left( t \right)\left( a \right)}}{{1 + \int_0^{t + h} {\mathcal{T}\left( {u\left( s \right)} \right)ds} }} - \frac{{u\left( t \right)\left( a \right)}}{{1 + \int_0^t {\mathcal{T}\left( {u\left( s \right)} \right)ds} }}} \right] + \frac{{\mathcal{T}\left( {u\left( t \right)} \right)u\left( t \right)\left( a \right)}}{{{{\left( {1 + \int_0^t {\mathcal{T}\left( {u\left( s \right)} \right)ds} } \right)}^2}}}} \\
: = & {I_1} + {I_2} + {I_3}
\end{flalign*}
By \eqref{ADP of u}, ${I_1} \to 0$ as $h \to 0$. ${I_2} \to 0$ as $h \to 0$ because of the absolute continuity of Lebesgue integral. If we compute the derivative of function $f\left( t \right): = \frac{1}{{1 + \int_0^t {\mathcal{T}\left( {u\left( s \right)} \right)ds} }}$, we get ${I_3} \to 0$ as $h \to 0$. Hence the first limit in the solution definition \eqref{ADP} is satisfied. For the second condition in \eqref{ADP}, we have
\begin{flalign*}
&\int_0^h {\left| {l\left( {t + h} \right)\left( a \right) - \left( {F\left( l \right)} \right)\left( t \right)} \right|} da\\
= & \int_0^h {\left| {l\left( {t + h} \right)\left( a \right) - {S_0}\mathcal{B}\left( {l\left( t \right)} \right){e^{ - \int_0^t {\mathcal{B}\left( {l\left( s \right)} \right) + \mathcal{Q}\left( {l\left( s \right)} \right)ds} }}} \right|da} \\
= & \int_0^h {\left| {\frac{{u\left( {t + h} \right)\left( a \right) - \left( {\mathcal{H}\left( u \right)} \right)\left( t \right)}}{{1 + \int_0^t {\mathcal{T}\left( {u\left( s \right)} \right)ds} }}} \right|} da \to 0,\,\,\left( {h \to {0^ + }} \right)
\end{flalign*}
The third condition in \eqref{ADP} is straightforward. Then by Proposition \ref{existence of related problem}, we can find a $u \in C\left( {\left[ {0,\infty } \right);L_ + ^1} \right)$ that satisfies \eqref{u formula}. Then a positive global solution to problem \eqref{ADP} can be obtained by \eqref{solution formula}, which is exactly the unique positive global solution to \eqref{ADP}.
\end{proof}
\begin{proof}[Proof of Theorem \ref{asymptotic}]
Let $u \in C\left( {\left[ {0,\infty } \right);L_ + ^1} \right)$ be the solution to \eqref{u formula}, and let $i \in C\left( {\left[ {0,\infty } \right);L_ + ^1} \right)$ as defined in \eqref{solution formula}, which is a solution to problem \eqref{whole model} in the sense of \eqref{ADP}. For convenience in this proof, we use the notation $i\left( {a,t} \right): = i\left( t \right)\!\left( a \right)$ and $u\left( {a,t} \right): = u\left( t \right)\!\left( a \right)$, for $a \in \left[ {0,\emph{M}} \right],\;t \in \left[ {0,\infty } \right)$. Firstly, by \eqref{whole model} we have
$$S\left( t \right) = {S_0}{e^{ - \int_0^t {\mathcal{B}\left( {i\left( { \cdot ,s} \right)} \right) + \mathcal{Q}\left( {i\left( { \cdot ,s} \right)} \right)ds} }}$$
which is a positive non-increasing continuous function of $t \in \left[ {0,\infty } \right)$. So $\mathop {\lim }\limits_{t \to \infty } S\left( t \right)$ exists, and we denote it as $\mathop {\lim }\limits_{t \to \infty } S\left( t \right) = {S_\infty } \geqslant 0$. Next we estimate the following:
\begin{flalign*}
&\int_0^\infty  {\mathcal{B}\left( {i\left( { \cdot ,t} \right)} \right) + \mathcal{Q}\left( {i\left( { \cdot ,t} \right)} \right)dt}  \leqslant \int_0^\infty  {\left( {\left| \mathcal{B} \right| + \left| \mathcal{Q} \right|} \right){{\left\| {i\left( { \cdot ,t} \right)} \right\|}_{{L^1}}}dt} \\
= & \left( {\left| \mathcal{B} \right| + \left| \mathcal{Q} \right|} \right)\int_0^\infty  {\int_0^\emph{M} {\frac{{u\left( {a,t} \right)}}{{1 + \int_0^t {\mathcal{T}\left( {u\left( { \cdot ,s} \right)} \right)ds} }}dadt} } \\
= & \left( {\left| \mathcal{B} \right| + \left| \mathcal{Q} \right|} \right)\int_0^\emph{M} {\int_0^a {\frac{{\phi \left( {a - t} \right){e^{ - \int_{a - t}^a {\mu \left( b \right)db} }}}}{{1 + \int_0^t {\mathcal{T}\left( {u\left( { \cdot ,s} \right)} \right)ds} }}dt} da} \\
& + \left( {\left| \mathcal{B} \right| + \left| \mathcal{Q} \right|} \right)\int_0^\emph{M} {\int_a^\infty  {\frac{{u\left( {0,t - a} \right){e^{ - \int_0^a {\mu \left( b \right)db} }}}}{{1 + \int_0^t {\mathcal{T}\left( {u\left( { \cdot ,s} \right)} \right)ds} }}dt} da} \\
: = & \left( {\left| \mathcal{B} \right| + \left| \mathcal{Q} \right|} \right)\left( {{I_1} + {I_2}} \right)
\end{flalign*}
Estimate ${I_1}$ and ${I_2}$ separately:
\begin{flalign*}
&{I_1} \leqslant \int_0^\emph{M} {\int_0^a {{\phi}\left( {a - t} \right){e^{ - \int_{a - t}^a {\mu \left( b \right)db} }}dtda} } 
=  \int_0^\emph{M} {\int_0^a {{\phi}\left( s \right){e^{ - \int_s^a {\mu \left( b \right)db} }}dsda} } \\
= & \int_0^{{a_0}} {\int_0^a {{\phi}\left( s \right){e^{ - \int_s^a {\mu \left( b \right)db} }}dsda} }  + \int_{{a_0}}^\emph{M} {\int_0^a {{\phi}\left( s \right){e^{ - \int_s^a {\mu \left( b \right)db} }}dsda} } \\
\leqslant & \int_0^{{a_0}} {\int_0^a {{\phi}\left( s \right)dsda} }  + \int_{{a_0}}^\emph{M} {\int_0^a {{\phi}\left( s \right){e^{ - {\mu _0} \left( {a - s} \right)}}dsda} } \\
\leqslant & {a_0}{\left\| {\phi} \right\|_{{L^1}}} + \int_0^\emph{M} {\int_0^a {{\phi}\left( s \right){e^{ - {\mu_0} \left( {a - s} \right)}}dsda} } \\
= & {a_0}{\left\| {{\phi}} \right\|_{{L^1}}} + \int_0^\emph{M} {\int_s^\emph{M} {{\phi}\left( s \right){e^{ - {\mu_0} \left( {a - s} \right)}}dads} } \\
= & {a_0}{\left\| {{\phi}} \right\|_{{L^1}}} + \int_0^\emph{M} {\int_0^{\emph{M} - s} {{\phi}\left( s \right){e^{ - {\mu_0} \tau }}d\tau ds} } \\
\leqslant & {a_0}{\left\| {{\phi}} \right\|_{{L^1}}} + \int_0^\emph{M} {\int_0^\emph{M} {{\phi}\left( s \right){e^{ - {\mu_0} \tau }}d\tau ds} } \\
= & {a_0}{\left\| {{\phi}} \right\|_{{L^1}}} + \left( {\int_0^\emph{M} {{e^{ - {\mu_0} \tau }}d\tau } } \right){\left\| {{\phi}} \right\|_{{L^1}}} < \infty 
\end{flalign*}
\begin{flalign*}
&{I_2} = \int_0^\emph{M} {\int_a^\infty  {\frac{{u\left( {0,t - a} \right){e^{ - \int_0^a {\mu \left( b \right)db} }}}}{{1 + \int_0^t {\mathcal{T}\left( {u\left( { \cdot ,s} \right)} \right)ds} }}dtda} } \\
\leqslant & \int_0^\emph{M} {\int_0^\infty  {\frac{{u\left( {0,\tau } \right){e^{ - \int_0^a {\mu \left( b \right)db} }}}}{{1 + \int_0^{a + \tau } {\mathcal{T}\left( {u\left( { \cdot ,s} \right)} \right)ds} }}d{\tau}da} } \\
\leqslant & \int_0^\emph{M} {\int_0^\infty  {\frac{{u\left( {0,\tau } \right){e^{ - \int_0^a {\mu \left( b \right)db} }}}}{{1 + \int_0^\tau  {\mathcal{T}\left( {u\left( { \cdot ,s} \right)} \right)ds} }}d{\tau}da} } \\
 = & \left( {\int_0^\infty  {i\left( {0,\tau } \right)d\tau } } \right)\left( {\int_0^\emph{M} {{e^{ - \int_0^a {\mu \left( b \right)db} }}} da} \right)\\
\leqslant & \left( {\int_0^\infty  {i\left( {0,\tau } \right)d\tau } } \right)\left( {{a_0} + \int_{{a_0}}^\emph{M} {{e^{ - a{\mu _0}}}da} } \right)
\end{flalign*}
With the assumption on function $\mu$, we can find constants ${C_1},\,{C_2} > 0$ such that:
$$\int_0^\infty  {\mathcal{B}\left( {i\left( { \cdot ,t} \right)} \right) + \mathcal{Q}\left( {i\left( { \cdot ,t} \right)} \right)dt}  \leqslant {C_1} + {C_2}\int_0^\infty  {i\left( {0,t} \right)dt} $$
Since $i$ is obtained from \eqref{solution formula}, we have:
$$i\left( {0,t} \right) = \frac{{u\left( {0,t} \right)}}{{1 + \int_0^t {\mathcal{T}\left( {u\left( { \cdot ,s} \right)} \right)ds} }} = S\left( t \right)\mathcal{B}\left( {i\left( { \cdot ,t} \right)} \right)$$
Then by the differential equation of $S(t)$ in \eqref{whole model}, we have: 
$$i\left( {0,t} \right) =  - \frac{{dS\left( t \right)}}{{dt}} - {\mathcal{Q}\left( {i\left( { \cdot ,t} \right)} \right)}S\left( t \right) \leqslant  - \frac{{dS\left( t \right)}}{{dt}}$$
Integrate on both sides with respect to $t$ of the above inequality, 
$$\int_0^\infty  {i\left( {0,t} \right)dt}  \leqslant {S_0} - {S_\infty } < \infty $$
Hence one of the conclusion is proved:
$$\mathop {\lim }\limits_{t \to \infty } S\left( t \right) = {S_0}{e^{ - \int_0^\infty  {\mathcal{B}\left( {i\left( { \cdot ,s} \right)} \right) + \mathcal{Q}\left( {i\left( { \cdot ,s} \right)} \right)ds} }} \geqslant {S_0}{e^{ - {C_1} - {C_2}\int_0^\infty  {i\left( {0,t} \right)dt} }} > 0$$
Moreover, it can be derived from the differential equation system \eqref{whole model} that:
\begin{flalign}\label{consistence}
\begin{split}
&S\left( t \right) + I\left( t \right) + \int_0^t {\mathcal{Q}\left( {i\left( { \cdot ,s} \right)} \right)I\left( s \right)ds}  + \int_0^t {\mathcal{T}\left( {i\left( { \cdot ,s} \right)} \right)I\left( s \right)ds}  + \int_0^t {i\left( {\emph{M},s} \right)ds} \\
 + & \int_0^t {\int_0^\emph{M} {\mu \left( a \right)i\left( {a,s} \right)dads} }  = {S_0} + I\left( 0 \right)
\end{split}
\end{flalign}
where the four integrals in \eqref{consistence} are non-decreasing with respect to the variable $t$ and have ${S_0} + I\left( 0 \right)$ as an upper bound. So the four integrals all have finite limit as ${t \to \infty }$. Then the fact that $\mathop {\lim }\limits_{t \to \infty } S\left( t \right)$ exists implies that $\mathop {\lim }\limits_{t \to \infty } I\left( t \right)$ exists. We can estimate $\int_0^\infty  {I\left( t \right)dt}  = \int_0^\infty  {{{\left\| {i\left( t \right)} \right\|}_{{L^1}}}dt} $ similarly as we did in the beginning of this proof and get $\int_0^\infty  {I\left( t \right)dt}  < \infty $, which implies the conclusion $\mathop {\lim }\limits_{t \to \infty } I\left( t \right) = 0$.

\end{proof}


%
%

\end{document}